\def\defscript{\mathscr}
\def\A{{\defscript A}}
\def\D{{\defscript D}}
\def\E{{\defscript E}}
\def\F{{\defscript F}}
\def\K{{\defscript K}}
\def\LL{{\defscript L}}
\def\M{{\defscript M}}
\def\N{{\defscript N}}
\def\R{{\defscript R}}
\def\ifempty#1{\def\tmpdata{#1}\ifx\tmpdata\empty }
\def\linebreak{\hfill\break}
\def\bra<#1|{\langle #1\rvert}
\def\ket|#1>{\lvert#1 \rangle}
\def\braket<#1|#2>{\langle #1|#2 \rangle}
\def\pfrac#1#2{\left(\frac{#1}{#2}\right)}
\def\otop#1{\hbox{$#1\kern-0.1em$\llap{\hbox{\raise1.7ex\hbox{$\scriptstyle\circ$}}}} }
\def\inpare#1{\left(#1\right)}
\def\bigpare(#1){\left(#1\right)}
\def\inrbra#1{\left\{ #1 \right\}}
\def\insbra#1{\left[ #1 \right]}
\def\bigbra[#1]{\left[ #1 \right]}
\def\h{\hat }
\def\d{\dot }
\def\b{\bar }
\def\tend{\rightarrow}
\def\therefore{\mbox{\setbox0=\hbox{X}\hbox{$\ldotp$}\raise0.7\ht0\hbox{$\ldotp$}\hbox{$\ldotp$}} \quad }
\def\because{\mbox{\setbox0=\hbox{X}\raise0.7\ht0\hbox{$\ldotp$}\hbox{$\ldotp$}\raise0.7\ht0\hbox{$\ldotp$}}\kern0pt }
\def\bm#1{\boldsymbol{#1}}
\def\RF{{{\mathbb R}}}
\def\Set#1{\left\{#1\right\}}
\def\upin{\hbox{\setbox0=\hbox{$\cup$} \vrule width 0.05 \wd0 height \ht0 depth 0pt \kern - 0.5\wd0 \box0 }}
\def\Frac(#1/#2){\left(\frac{#1}{#2}\right)}
\def\sdprod{\mathrel{{\setbox0=\hbox{$\displaystyle\times$}\lower0.3\wd0\hbox{$\stackrel{\box0}{\scriptstyle\sim}$}}}}
\def\w{\wedge}
\def\tosigma#1,{%
    \ifx\tmpindex\relax \def\tmpindex{#1} \let\next=\tosigma
    \else \ifnum\tmpindex=0 1 \else \sigma_\tmpindex \fi
          \ifx#1\relax  \let\next=\relax
          \else \otimes \let\next=\tosigma \def\tmpindex{#1} \fi
    \fi \next}
\def\tspb(#1){\let\tmpindex=\relax\tosigma#1,\relax,}
\def\Order#1{{\rm O}\!\left(#1\right)}
\def\HyperG(#1,#2;#3;#4){F\inpare{\textstyle #1,#2;#3;#4}}
\def\Lie{\hbox{\rlap{$\cal L$}$-$}}
\def\THB{{\mathbb T}}
\def\VHB{{\mathbb V}}
\def\SHB{{\mathbb S}}
\def\Eq#1{\begin{equation} #1 \end{equation}}
\def\Eqn#1{\Eq{#1 \nonumber}}
\def\Eqr#1{\begin{eqnarray} #1 \end{eqnarray}}
\def\Eqrsub#1{\begin{subequations}\Eqr{#1}\end{subequations}}
\def\Eqrsubl#1#2{\begin{subequations}
  \expandafter\ifx\csname Rlabel\endcsname \relax \label{#1}
  \else \Rlabel{#1} \fi \Eqr{#2}\end{subequations}}
\def\Bitm{\begin{itemize}}
\def\Eitm{\end{itemize}}
\def\Blist#1#2{\begin{list}{#1}{\parsep=0pt \itemsep=0pt%   
  \listparindent=0pt #2}}
\def\Elist{\end{list}}
\long\def\ignore#1#2{\def\ignoreflag{#1}\long\def\tmptext{#2}
  \ifnum\ignoreflag>1 #2 \fi}
\theoremstyle{definition}
\newtheorem{definition}{Definition}[section]
\newtheorem{theorem}[definition]{Theorem}
\def\THB{{\mathbb T}}
\def\VHB{{\mathbb V}}
\def\SHB{{\mathbb S}}
\def\FigDir{.}
\title{%        %You can use \\ for explicit line-break
Perturbations and Stability of Static Black Holes in Higher Dimensions%
}
\author{%       %Use \scshape  for the family name
Akihiro \textsc{Ishibashi}$^{a,c}$ %
and 
Hideo \textsc{Kodama}$^{a,b}$ %
}
\begin{document}

\maketitle

\tableofcontents 

%T1>Introduction
\section{Introduction}  

There is a large variety of black hole solutions in higher dimensions. 
The stability of such exact solutions is clearly an important issue,
and analysing linear perturbations of the existing exact solutions would
be the first step to take. 
If a stationary black hole solution is shown to be stable under perturbations, 
it implies that the solution describes a possible final state of 
dynamical evolution of a gravitating system. If, on the other hand, 
an instability is found, it then indicates the existence of a different 
branch of solutions, which the original 
solution may decay into, and one can then anticipate more variety 
of black hole solutions. 

Apart from the stability issue, perturbation analysis also tells us 
a lot about basic properties of black hole solutions. 
The spectra of quasinormal modes\cite{Berti.E&Cardoso&Starinets2009} contain 
information about the geometric structure of the background metric, 
especially near the horizon.  
%[for extensive studies of both numerical and analytical, 
%see, e.g., Refs.~\citen{Cardoso.V&&2003A,Cardoso.V&Dias&Lemos2003,Cardoso.V&Lemos2001,Cardoso.V&Lemos2003A,Cardoso.V&Konoplya&Lemos2003A,Konoplya.R2003A} 
%and references therein].  
The study of stationary perturbations of a stationary black hole solution 
provides a criterion for the uniqueness/non-uniqueness property of 
the solution. Considering such stationary perturbations of a known 
solution may also be useful when attempting to construct approximate
solutions.

In this article, we shall review the perturbation formulae for higher 
dimensional static black holes and the stability analysis, following,
to a large extent,
Refs.~\citen{Kodama.H&Ishibashi2003A,Ishibashi.A&Kodama2003A,Kodama.H&Ishibashi2004A}. 
The linearised Einstein equations off of a black hole spacetime are in general 
still quite involved, having a number of perturbation variables intricately coupled, 
and one therefore needs to simplify them to a tractable form. 
%This depends of course upon geometric nature of the background solution 
%of interest. 
In $4$-dimensions, the perturbed Einstein equations for a stationary 
vacuum black hole solution can be reduced to a set of simple decoupled 
ordinary differential equations (ODEs) by exploiting some particular 
geometric feature of the background solution: the Regge-Wheeler-Zerilli 
equations\cite{ReggeWheeler1957,Zerilli.F1974} for the Schwarzschild metric 
and Teukolsky equations\cite{Teukolsky72} for the Kerr metric.  
For higher dimensional rotating black holes, we are still a long way 
from having such a complete formulation for perturbations, 
though considerable progress along this direction has recently been made 
for some special cases\cite{KLR06,Murata&Soda08CQG,Murata&Soda08PTP,DR09} 
[see Chapter~7 and references therein for the rotating black hole case].  
Fortunately, for static black holes, e.g., Schwarzschild-Tangherlini metric 
and its cousins, such a reduction is possible in arbitrary higher 
dimensions, and a set of decoupled master equations, which correspond to 
the Regge-Wheeler-Zerilli equations in $4$-dimensions, are now 
available\cite{Kodama.H&Ishibashi2003A,Kodama.H&Ishibashi2004A}. 
More precisely, consider a $(2+n)$-dimensional static black hole 
in Einstein-Maxwell system with a cosmological constant as our background, 
in which the $n$-dimensional internal space is an Einstein space 
and describes the horizon cross-section geometry. 
Then perturbation variables are classified 
into three types according to their tensorial behavior on 
the internal space. For each type of perturbations, a basis of gauge-invariant 
variables is introduced, and the Einstein and Maxwell equations are written 
in terms of them. For each type of perturbations, the perturbed equations 
of motion are further reduced to a set of decoupled master equations 
for a single scalar variable on the $2$-dimensional part of the background 
spacetime. Furthermore, the master equations thus obtained in 
Refs.~\citen{Kodama.H&Ishibashi2003A,Kodama.H&Ishibashi2004A} 
take the form of a second-order self-adjoint ODE with respect to the radial 
coordinate of the black hole background and are 
therefore immediately applied to the stability 
analysis\cite{Ishibashi.A&Kodama2003A,Kodama.H&Ishibashi2004A}.

In the next section we first establish our notation and convention 
by describing our background geometry. We next explain how to decompose 
tensor fields in our background spacetime and how to construct manifestly 
gauge-invariant perturbation variables. 
In section~\ref{sec:Harmonictensors}, we introduce harmonic tensors 
on $n$-dimensional Einstein space and provide several theorems concerning 
basic properties of the harmonic tensors.
% added by HK 2011.3.31 
We also give explicit expressions for scalar, vector and tensor harmonics on $S^n$ in terms of the homogeneous coordinates for $S^n$ in $E^{n+1}$.  
Subsequently from section~\ref{sec:TensorPerturbation} to
section~\ref{sec:ScalarPerturbation}, we focus on perturbations of static 
black holes in the Einstein-Maxwell system with cosmological constant and 
derive a set of master equations. In section~\ref{sec:Lovelock}, 
we describe how to derive the master equations for static black holes 
in most general Lovelock theory recently obtained\cite{Takahashi.T&Soda2010A}. 
Then in section~\ref{sec:stabilityanalysis}, using the master equations, 
we show that a large class of static black holes in the Einstein-Maxwell 
system with cosmological constant are stable with respect to linear 
gravitational as well as electromagnetic perturbations. 
We also briefly comment on the recent study of instability of Lovelock 
black holes. 
Section~\ref{sec:SummaryDiscussions} is devoted to summary and discussion.   
 
%T1>Notation & conventions
\section{Notation and conventions}  
\label{sec:background:decomposition:harmonics}

% \subsubsection{Background geometry of $(m+n)$-dimensions}
In this section we describe our background spacetime and discuss 
how to classify tensor fields on the spacetime. We then discuss the
problem of gauge freedom and introduce gauge-invariant variables.
We generally follow the notation and conventions of 
the papers~\citen{Kodama.H&Ishibashi2003A}, \citen{Ishibashi.A&Kodama2003A}, 
and \citen{Kodama.H&Ishibashi2004A}, which throughout this chapter we
refer to as Papers~I, II, and III, respectively.

%T2>Background geometry
\subsection{Background Geometry}
We consider an $(m+n)$-dimensional spacetime whose manifold structure 
is locally a warped product type, ${\M}= {\N}^m \times {\K}^n$, 
and accordingly we often need to distinguish between tensors living in these 
different manifolds, ${\M}$, ${\N}^m$, and ${\K}^n$. 
For this reason, we do not employ the abstract index notation\cite{Wald84} 
in this chapter, and 
% distinguish between tensors living in these different manifolds 
% by using greek indices to denote tensors on $\M$, 
% by using 
instead, we use upper case latin indices in the range $K,L,M,N,\dots$ to 
denote tensors on $\M$, 
lower case latin indices in the range $a,b,\dots,h$ on $\N^m$, and 
lower case latin indices in the range $i,j,\dots, p$ on $\K^n$. 
Accordingly, we introduce coordinates 
$x^M = (y^a,z^i)$ in terms of which our background metric is written 
\Eq{
 g_{MN}dx^M dx^N 
 = g_{ab}(y) dy^a dy^ b+ r^2(y)\gamma(z)_{ij}dz^i dz^j \,. 
\label{BG:metric}
}
We assume that an $m$-dimensional metric $g_{ab}(y)$ on $\N^m$ 
is Lorentzian and an $n$-dimensional internal metric $\gamma(z)_{ij}$ on $\K^n$ is 
{\em Einstein} i.e., 
\Eq{{\hat R}_{ij}= (n-1)K\gamma_{ij} 
 \label{EinsteinSpace}
}
for some constant $K$, with ${\hat R}_{ij}$ being 
the Ricci tensor of $\gamma_{ij}$. 
When $\K^n$ is maximally symmetric, the constant $K$ corresponds 
to the sectional curvature of $\K^n$, and in what follows 
we normalize $K=0,\pm 1$. We assume that $\K^n$ be complete, 
as it describes the geometry of cross-sections of the event horizon.

We denote by $\nabla_M$, $D_a$, and $\hat D_i$, the covariant 
derivative operators compatible with $g_{MN}$, $g_{ab}$, and 
$\gamma_{ij}$, respectively.  
Having these derivative operators, we can define the curvature tensors 
on $\M$, $\N^m$ and $\K^n$, and find their relations in terms of 
the coordinate components as 
\Eqrsub{
&& R^a{}_{bcd}={}^m\! R^a{}_{bcd} \,,\\
&& R^a{}_{ibj}=-\frac{D^aD_b r}{r}g_{ij} \,,\\
&& R^i{}_{jkl}=\hat R^i{}_{jkl}
         -(Dr)^2(\delta^i_k\gamma_{jl}-\delta^i_l\gamma_{jk}) \,,
}
where ${}^m\! R^a{}_{bcd}$ and $\hat R^i{}_{jkl}$ are 
the curvature tensors of $g_{ab}$ and $\gamma_{ij}$, respectively. 
{}From this and Eq.~\eqref{EinsteinSpace}, we obtain
\Eqrsubl{BG:Ricci}{
& R_{ab}&=\frac{1}{2}{}^m\! R g_{ab} -n\frac{D_aD_b r}{r} \,,\\
& R_{ai}&=0 \,,\\
& R_{ij}&= \left(- \frac{\Box r}{r}
        +(n-1)\frac{K-(Dr)^2}{r^2} \right) g_{ij} \,,\\
& R &={}^m\! R
    -2n\frac{\Box r}{r}+n(n-1)\frac{K-(Dr)^2}{r^2} \,, 
}
where $\Box=D^aD_a$ and ${}^m\! R$ is the scalar curvature of $g_{ab}$. 
Note that the Ricci tensor takes the same form as in the case 
in which $\K^n$ is maximally symmetric\cite{Birmingham.D1999}. 

The geometric structure of our background spacetime requires that 
the background stress-energy tensor $T_{MN}$ should take the form 
\Eq{
 T_{ai}=0 \,, \quad T^i{}_j = P \delta^i{}_j \,,  
\label{BG:energymomentum} 
} 
where $P$ is a scalar field on $\N^m$.  

Then, the background Einstein equations, including cosmological 
constant, $\Lambda$, are written as 
\Eqr{
% G_{ab} &=& 
 && 
 {}^m\!G_{ab}-n\frac{D_aD_br}{r}
    -\left(
           \frac{n(n-1)}{2}\frac{K-(Dr)^2}{r^2} -n\frac{\Box r}{r}
     \right)g_{ab}
 = -\Lambda g_{ab} + \kappa^2 T_{ab} \,, 
\label{BG:Einstein:ab}
\\ 
% G^i{}_j &=& 
 && -\frac{1}{2}{}^m\!R 
             -\frac{(n-1)(n-2)}{2}\frac{K-(Dr)^2}{r^2} 
             + (n-1)\frac{\Box r}{r} 
 =  \kappa^2 P -\Lambda \,,   
\label{BG:Einstein:ij} 
% \\
% && G_{ai} = 0 \,.  
% \label{BG:Einstein:ab} 
} 
where $\kappa^2$ denotes the gravitational constant. 
Note that although our main concern is about static black holes for 
which $m=2$ with the metric form given by Eq.~\eqref{metric:GSBH} below, 
our metric ansatz above also allows us to consider various different
geometries, such as black-string/branes when $m \geq 3$ and 
Myers-Perry black holes with a single rotation when $m=4$.

%T2>Static BH
\subsection{Static black holes in $(2+n)$-dimensions} 
%%%
Now, having static black hole geometry in mind, let us set $m=2$ and 
consider an electromagnetic field $\F_{MN}$   
as a source for the background gravitational field. 
The field strength $\F_{MN}$ may be given by 
\Eq{
\F=\frac{1}{2}E_0 \epsilon_{ab}dy^a\wedge dy^b 
    +\frac{1}{2}\F_{ij}dz^i\wedge dz^j\,. 
\label{BG:EMfield}
}
Then, from $\nabla_{[M}\F_{N L]}=0$, we obtain 
\Eq{
   E_0=E_0(y) \,,\quad
  \F_{ij}=\F_{ij}(z) \,,\quad 
  \partial_{[k}\F_{ij]}=0 \,,
}
and from $\nabla_N \F^{MN}=0$, 
\Eqrsub{
&& 0=\nabla_N \F^{aN}=\frac{1}{r^n}\epsilon^{ab}D_b(r^nE_0)\,, 
\\
&& 0=\nabla_N \F^{iN}=\hat D_j\F^{ij} \,. 
}
These equations imply that the electric field $E_0$ takes the 
Coulomb form, 
\Eq{
E_0=\frac{q}{r^n} \,,  
\label{BG:Efield}
}
and $\hat\F=\frac{1}{2}\F_{ij}(z)dz^i\wedge dz^j$ is a harmonic form 
on $\K^n$. Although, in general, there may exist such a harmonic form that 
produces an energy-momentum tensor consistent with the structure of 
the Ricci tensors in Eq.~\eqref{BG:Ricci}, in the following we consider 
only the case $\F_{ij}=0$. 
With this assumption, the energy-momentum tensor for the electromagnetic 
field,  
\Eq{
 T^{{\rm (em)}}_{\ MN}=\F_{M L}\F_\nu{}^L
         -\frac{1}{4}g_{MN}\F_{LK}\F^{LK} \,,
}
is written 
\Eq{
T^{{\rm (em)}}{}^a{}_b=-P \delta^a{}_b \,, \quad 
T^{{\rm (em)}\: i}{}_j=P\delta^i{}_j \,; \quad
P=\frac{1}{2}E_0^2=\frac{q^2}{2r^{2n}} \,. 
\label{EMtensor:BG}
}
Then, solving the background Einstein equations, we have,  
when $\nabla r\not=0$, the black hole type solution 
\Eq{
ds^2=-f(r)dt^2+\frac{dr^2}{f(r)}+r^2d\sigma_n^2 \,,
\label{metric:GSBH}
}
where   
\Eq{
f(r)=K-\lambda r^2 -\frac{2M}{r^{n-1}}+\frac{Q^2}{r^{2n-2}} \,, 
\label{f:RNBH}
}
and
\Eq{
\lambda:=\frac{2\Lambda}{n(n+1)} \,,\quad
Q^2:=\frac{\kappa^2 q^2}{n(n-1)} \,.  
}
The spacetime described by this metric can contain a regular black hole 
for some restricted ranges of the parameters, $M$, $Q$, $\lambda$, $K$.
[For the allowed parameter regions for regular black holes, 
see Appendix~A of Paper III.]

Note that we can also consider, as our background, solutions with 
$\nabla r =0$, for which $\M$ becomes a cartesian product of a 
$2$-dimensional maximally symmetric spacetime $\N^2$ and 
the Einstein space $\K^n$. 
Such solutions include the Nariai solution\cite{Nariai.H1950,Nariai.H1961}, 
as well as the Bertotti-Robinson solution in $4$-dimensions. 
The solutions, and their parameter ranges are given 
in Paper III 
and also in Ref.~\citen{Cardoso.V&Dias&Lemos2004}. 
As in Paper~III, we can study perturbations of such Nariai-type
solutions in a similar manner as in the black hole background case, but
in this chapter, we are not going to deal with this case.

%T2>Tensor decomposition
\subsection{Decomposition of vectors and symmetric tensors on compact manifolds}

When considering perturbations, we will in general have to deal with 
two major issues. One is concerning the ambiguity in a choice of gauge, 
due to the invariance of the Einstein equations under an infinitesimal 
gauge transformation. 
This issue will be discussed in the next subsection. 
The other one is that even when linearised, the Einstein equations 
are still a set of intricately coupled equations for a number of
perturbation variables of the metric and matter fields, and
are in general very difficult to solve. We therefore need to reduce
the linearised Einstein equations to a simple tractable form. 
For this purpose we first classify perturbation variables into three different 
types according to their tensorial behavior on $\K^n$ in such 
a way that the linearised Einstein equations get decoupled and can be dealt
with separately for each type of perturbations.      
We then introduce harmonic tensors on ${\K}^n$ so that 
each type of the perturbed Einstein equations reduces 
to a set of equations on the $m$-dimensional spacetime $(\N^m,g_{ab})$.  
This procedure enables us to obtain a set of significantly simplified
equations for master scalar variables as we will see in later sections.

Let us consider how to classify perturbation variables. We first note 
the following two decomposition theorems:

\smallskip 
\noindent 
(i) Suppose $({\K}^n, \gamma_{ij})$ be a compact Riemannian manifold. 
Any dual vector field on ${\K}^n$ can be uniquely decomposed as 
\Eqr{
 v_i=V_i+{\hat D}_i S 
} 
where ${\hat D}^iV_i=0$. 
This is essentially the well-known Hodge decomposition theorem, 
and we refer to $V_i$ and $S$, respectively, as the vector- and 
scalar-type components of the dual vector $v_i$. 

\smallskip 
\noindent 
(ii) Suppose $({\K}^n, \gamma_{ij})$ be a compact Riemannian Einstein 
space, ${\hat R}_{ij} = c \gamma_{ij}$ for some constant $c$. 
Any second rank symmetric tensor field $t_{ij}$ can be uniquely 
decomposed as 
\Eqr{ 
t_{ij} &=& t^{(2)}_{ij} + 2{\hat D}_{(i} t^{(1)}{}_{j)} 
           + t_L \gamma_{ij} + \hat{L}_{ij}t_T 
\,, 
\\ 
    {\hat L}_{ij} &:=& 
    {\hat D}_i {\hat D}_j -\frac{1}{n}\gamma_{ij}{\hat \triangle} 
  %         + \left(
  %             {\hat D}_i {\hat D}_j -\frac{1}{n}\gamma_{ij}{\hat \triangle}
  %           \right)t_T\,,   
\label{def:Lij}
}
where ${\hat D}^i t^{(2)}_{ij} =0$, $t^{(2)}{}^i{}_i=0$, 
${\hat D}^i t^{(1)}{}_{i}=0$, and $t_L= t^m{}_m/n$. 
% Here and in the following 
% \Eq{
%     {\hat L}_{ij} \equiv 
%     {\hat D}_i {\hat D}_j -\frac{1}{n}\gamma_{ij}{\hat \triangle} 
% }
We refer to $t^{(2)}_{ij}$, $t^{(1)}_{i}$, and $(t_T, t_L)$, respectively, 
as the tensor-, vector- and scalar-type components of $t_{ij}$. 
Note that tensor component $t^{(2)}_{ij}$ exists only when $n \geq 3$.

A similar decomposition theorem--in which $\K^n$ is considered to be
maximally symmetric% , but not necessarily to be compact--
--has been proved by Kodama and Sasaki\cite{Kodama.H&Sasaki1984}. 
% 
% A subtle issue arises when $\K^n$ is compact; 
% for example the vector-type part may be unique only up to addition of a  
% Killing vector, if exists, on $\K^n$. 
% An explicit case will be seen in the next subsection of 
% tensor harmonics on $n$-sphere. However, when applied 
% to the context of our perturbation analysis, such additional 
% freedoms correspond to an unphysical, gauge freedom and 
% one can ignore them. 
%  
A general proof of the above theorems (i) and (ii) is found
in Ref.~\citen{Ishibashi.A&Wald2004}, in which the compactness of ${\K}^n$ 
is used in an essential way to show that any symmetric elliptic operator, 
such as ${\hat D}^i{\hat D}_{(i}V_{j)}$ on $V_i$, 
is essentially self-adjoint and the spectrum 
of any self-adjoint operator is discrete. 
The symmetric, elliptic operators appeared in the proof will be 
essentially self-adjoint on many non-compact manifolds 
of interests, even though their spectra will be continuous, 
and analogous decomposition theorems should also hold for many non-compact 
manifolds. 
In the following when considering the case in which ${\K}^n$ is 
non-compact, we simply assume that analogous decomposition results hold.

Now let us consider metric perturbations $h_{MN} = \delta g_{MN}$ 
on our background spacetime $(\M,g_{MN})$. We may project 
$h_{MN}$ relative to the Einstein manifold ${\K}^n$ as 
\Eqr{
 h_{MN}dx^M dx^N = h_{ab}dy^ady^b + 2h_{ai}dy^adz^i+h_{ij}dz^idz^j \,.
}
The component $h_{ab}$ is purely scalar with respect to transformations 
on ${\K}^n$. As for the components $h_{ai}$ and $h_{ij}$, 
applying the above decomposition theorems (i) (ii), we further decompose them 
into their scalar, vector and tensor parts with respect to ${\K}^n$ as 
\Eqr{
h_{ai} &=& \hat{D}_ih_a + h^{(1)}_{ai} \,, 
\\
h_{ij} &=& h_T^{(2)}{}_{ij} + 2{\hat D}_{(i} h_T^{(1)}{}_{j)} + h_L \gamma_{ij}
            + {\hat L}_{ij} h_T^{(0)} \,,
          % + \left(
          %      {\hat D}_i {\hat D}_j -\frac{1}{n}\gamma_{ij}{\hat \triangle}
          %   \right)h_T^{(0)}\,, 
}
where    
\Eqr{
  {\hat D}^jh_T^{(2)}{}_{ij} &=& h_T^{(2)}{}^i{}_{i} = 0 \,, 
\\
  {\hat D}^ah^{(1)}_{ai} &=& 0 \,, \quad 
      {\hat D}^ih_T^{(1)}{}_{i}  = 0 \,. 
} 
Thus, the tensor part of $h_{MN}$ is $h_T^{(2)}{}_{ij}$, 
the vector part of $h_{MN}$ consists of $(h^{(1)}_{ai}, h_T^{(1)}{}_i)$, 
and the scalar part of $h_{MN}$ consists of $(h_{ab},h_a,h_L,h_T^{(0)})$.

Similarly, we can decompose perturbations of the energy-momentum tensor,  
$\delta  T_{MN}$, into the tensor part $\delta T^{(2)}_{Tij}$,
the vector part $(\delta T^{(1)}_{ai}, \:\delta T^{(1)}_{Ti})$, and 
the scalar part $(\delta T_{ab}, \: \delta T_{a}, \: \delta T_L, 
\: \delta T_T)$, where   
\Eqr{
\delta T_{ai} &=& {\hat D}_i\delta T_{a}+ \delta T^{(1)}_{ai} \,,
\\
\delta T_{ij} &=& \delta T^{(2)}_{Tij} + 2{\hat D}_{(i} \delta T_T^{(1)}{}_{j)}
                + \delta T_L \gamma_{ij} 
                + {\hat L}_{ij} \delta T_T \,,
%                + \left(
%                {\hat D}_i {\hat D}_j -\frac{1}{n}\gamma_{ij}{\hat \triangle}
%            \right)\delta T_T \,,  
} 
with ${\hat D}^i\delta T_T^{(2)}{}_{ij}=0,\:\delta T_T^{(2)}{}^i{}_i=0$, 
${\hat D}^i\delta T^{(1)}_{ai}=0 = {\hat D}^i\delta T_T^{(1)}{}_{i}$.

The linearised Einstein equations are decomposed into these three
types, % and each tensorial part decouples from the other types, 
and thus we can deal with perturbations of each type separately.
Before writing down the equations for each type of perturbations, 
we discuss the gauge issue. 

%T2>Gauge-invariant formulation
\subsection{Gauge-invariant formulation}  

As is well-known, the Einstein equations are invariant  
under gauge transformation generated by an (infinitesimal)  
vector field $\xi^M $. Accordingly, a perturbation variable and 
its gauge transformed one, such as $\delta g_{MN}$ 
and $\delta g_{MN}- \Lie_\xi g_{MN}$, both must describe the same 
physical situation, hence giving rise to an ambiguity in the representation 
of perturbation variables. In order to remove this gauge ambiguity and 
extract the physical degrees of freedom, we may proceed either by 
imposing appropriate conditions that completely fix the gauge freedom, 
or by constructing manifestly gauge-invariant variables and 
writing down the relevant equations in terms of them. 
The two approaches are equivalent. To see this, let us have a look 
at gauge transformation law of perturbation variables.

In terms of the coordinates $(y^a,z^i)$, the gauge transformation 
of the metric perturbation $h_{MN}= \delta g_{MN}$ is written as 
\Eqr{
 h_{ab} &\rightarrow& h_{ab} - D_a \xi_b - D_b \xi_a \,,  
\\
 h_{ai} &\rightarrow& h_{ai} - r^2 D_a\left(\frac{\xi_i}{r^2} \right) 
                             - {\hat D}_i \xi_a \,, 
\\ 
 h_{ij} &\rightarrow& h_{ij} - 2{\hat D}_{(i} \xi_{j)} 
                             - 2r (D^a r) \xi_a \gamma_{ij} \,. 
}
Similarly, the gauge transformation of perturbation of 
the energy-momentum tensor is given by 
\Eqr{ 
 \delta T_{ab} &\rightarrow& \delta T_{ab} -\xi^cD_cT_{ab}
                                           -T_{ac}D_b\xi^c
                                           -T_{bc}D_a\xi^c \,, 
\label{pert:Tab} 
\\ 
 \delta T_{a i} &\rightarrow& \delta T_{a i} - T_{ab}{\hat D}_i\xi^b 
                                             - r^2 P D_a(\xi_i/r^2) \,,
\label{pert:Tai}
\\ 
 \delta T_{ij} &\rightarrow& \delta T_{ij} - \xi^a D_a(r^2P)\gamma_{ij} 
                                           -P(  
                   {\hat D}_i \xi_j + {\hat D}_j \xi_i  
                                             ) \,.
\label{pert:Tij}
}

We can of course classify the gauge transformations above
into the three tensorial types, and discuss each type separately. 
As is clear from the decomposition theorem~(i), the generator $\xi^M$
has only the vector- and scalar-type component; it is decomposed as 
\Eq{ 
    \xi_a = T_a \,, \quad \xi_i = V_{i}+ {\hat D}_i S \,, 
   } 
where $\gamma^{ij}{\hat D}_iV_j = 0$.    
% ${\hat D}^i = \gamma^{ij}{\hat D}_j$.
Any tensor-type perturbation variable is by itself 
gauge-invariant: we have 
$(h_T^{(2)}{}_{ij}, \: \delta T_T^{(2)}{}_{ij})$. 
For the vector type perturbations, the gauge transformation law 
of the metric perturbation is given as follows: 
\Eqr{
&& 
 h^{(1)}_{ai} \rightarrow h^{(1)}_{ai}
                           -r^2 D_a \left(\frac{V_i}{r^2}\right)\,, 
\\  
 &&  h^{(1)}_{Ti} \rightarrow h^{(1)}_{Ti} - V_i\,.
 }
Inspecting the transformation laws, we find the following combination 
is gauge-invariant: 
\Eqr{
  F^{(1)}_{ai} = h^{(1)}_{ai}-r^2 D_a\left(\frac{h^{(1)}_{Ti}}{r^2}
  \right)
\label{def:F1ai}  
    }
Similarly, inspecting the gauge-transformation law of 
the matter perturbation, we find two gauge-invariant variables: 
\Eqr{
\tau^{(1)}_{ai} &:=& \delta T^{(1)}_{ai} -Ph^{(1)}_{ai} \,,
\label{def:tau1ai}  
\\
\tau^{(1)}_{ij} &:=& 2{\hat D}_{(i} \delta T_T^{(1)}{}_{j)}
                   -2P {\hat D}_{(i} h_T^{(1)}{}_{j)} \,.
\label{def:tau1ij}  
}
Any vector-type gauge invariant variable can be expressed as a 
linear combination of 
$(F^{(1)}_{ai}, \:\tau^{(1)}_{ai},\:\tau^{(1)}_{ij})$ 
and their derivatives.

The above equations, \eqref{def:F1ai}, \eqref{def:tau1ai}, \eqref{def:tau1ij}, 
defining the gauge invariant variables may be viewed in a way that 
$(h^{(1)}_{ai}, \delta T^{(1)}_{ai}, \delta T_T^{(1)}{}_{j})$ are expressed 
in terms of the gauge-invariants 
$(F^{(1)}_{ai}, \tau^{(1)}_{ai},\tau^{(1)}_{ij})$, and $h^{(1)}_{Ti}$. 
Since, under gauge-transformation, $h^{(1)}_{Ti}$ behaves just like 
$-\xi_i$, it can be chosen to take any value, and therefore one may view 
that $h^{(1)}_{Ti}$ alone is responsible for the gauge ambiguity. 
This, in turn, implies that the specification of $h^{(1)}_{Ti}$ 
in terms of $(F^{(1)}_{ai},\tau^{(1)}_{aj},\tau^{(1)}_{ij})$, corresponds to 
fixing the gauge freedom of the vector type perturbation.

For the scalar-type metric perturbation,
the gauge transformation law is explicitly given as follows: 
 \Eqr{
&& 
 h_{ab} \rightarrow h_{ab} - 2 D_{(a} T_{b)} \,, 
\\
&&
 h_{a} \rightarrow h_{a} -T_a - r^2 D_a \left(\frac{S}{r^2}\right) \,,  
\\
&&
 h_L \rightarrow h_L - 2 r({D^ar})T_a - \frac{2}{n}{\hat \triangle}S \,,  
\\
&&  
 h_T \rightarrow h_T - 2 S \,. 
}
Now let us define ${X_M = (X_a,X_i={\hat D}_i X_L)}$ by 
\Eq{
X_a := -h_a + \frac{r^2}{2}D_a\left(\frac{h_T}{r^2}\right) \,, 
\quad     
X_L := - \frac{h_T}{2}  \,. 
\label{def:X}
}
Then, noting that $X_M$ gauge-transforms as $X_M \rightarrow 
X_M + \xi_M$, i.e.,  
\Eq{
(X_a,X_L) \rightarrow (X_a+T_a, X_L + S) \,, 
} 
we can immediately find gauge-invariant combinations:  
\Eqr{
 F^{(0)}{}_{ab} &=&  h_{ab} + 2D_{(a}X_{b)} \,, 
\label{def:Fab}
\\
% F &=& \frac{h_L}{2r^2} + \frac{D^ar}{r}X_a
%      + \frac{1}{nr^2}{\hat \triangle} X_L \,. 
 F^{(0)} &=& h_L +2r (D^ar)X_a + \frac{2}{n}{\hat \triangle} X_L \,. 
\label{def:F}
}
As for the matter perturbation, we find gauge-invariant combinations: 
\Eqr{
\Sigma^{(0)}{}_{ab}
         &:=& \delta T_{ab} +X^cD_cT_{ab}+T_{ac}D_bX^c+T_{bc}D_aX^c \,,
\label{def:Sab}
\\
\Sigma^{(0)}{}_{ai} &:=& {\hat D}_i \delta T_{a} + T_{ab}{\hat D}_iX^b 
             + r^2 P D_a\left(\frac{{\hat D}_i X_L}{r^2} \right) \,,
\label{def:Sai}
\\
 \Sigma^{(0)} &:=& \delta T_L -Ph_L + r^2 X^aD_aP \,, 
% \nonumber \\
%             &=& r^2\left(\delta P+ X^aD_aP \right)\,, 
% {\tilde \Sigma} 
%  &\equiv& \delta T_L + X^aD_a(r^2P)+ \frac{2}{n}P{\hat \triangle}X_L \,, 
%                = r^2(\delta P + X^aD_aP)+ PF 
            %\delta T_L= Ph_L+r^2\delta P %
\label{def:S}
\\
\Pi^{(0)}{}_{ij} &:= & {\hat L}_{ij}\delta T_T +2P {\hat L}_{ij} X_L\,.
                        % = L_{ij}\delta T_T -P L_{ij} h_T
\label{def:Piij}
}
Any scalar-type gauge invariant variable can be expressed as a 
linear combination of the gauge-invariant variables 
$(F^{(0)}{}_{ab},F^{(0)}{},\Sigma^{(0)}{}_{ab},\Sigma^{(0)}{}_{ai},
\Sigma^{(0)}{},\Pi^{(0)}{}_{ij})$ 
and their derivatives. 

One may view the above defining equations,  
\eqref{def:X}, \eqref{def:Fab},
\eqref{def:F}, \eqref{def:Sab}, \eqref{def:Sai}, \eqref{def:S},
\eqref{def:Piij}, in such a way that all the scalar-type perturbations 
$(h_{ab},h_a,h_L,h_T)$ and $(\delta T_{ab},\delta T_a,\delta T_L,\delta T_T)$
are expressed in terms of $F^{(0)}{}_{ab}$, $F^{(0)}$,
$\Sigma^{(0)}{}_{ab}$, $\Sigma^{(0)}{}_{ai}$, $\Sigma^{(0)}{}$,
$\Pi^{(0)}{}_{ij}$--which are gauge invariant, 
and $X_M$--which can be taken as completely arbitrary 
as the generator $\xi_M$ is. 
Then, one may say that specifying $X_M$ in terms of 
the above set of the gauge-invariant variables 
(or assigning $X_M$ some specific value) corresponds to 
fixing the gauge freedom. 
For example, the specification, $X_M =0$, corresponds 
to the longitudinal gauge often used in the cosmological context, 
when $m=1$, and to the Regge-Wheeler gauge, when $m=2$.  

The perturbed Einstein equations are gauge-invariant and therefore can 
be written in terms of the gauge-invariant variables introduced above.
Note that at this point all variables are functions on $\M$, being
dependent upon the $(m+n)$-coordinates, $x^M$.  
In the next section, we introduce tensor harmonics on $\K^n$,  
so that by expanding the perturbation variables in terms of the tensor 
harmonics, we can separate variables and reduce the relevant equations 
in $\M$ to equations in $\N^m$.

% \subsubsection{Perturbation of electromagnetic fields}
% \label{subsec:EMFperturbation}
As a specific example of a source for gravitational field,
let us consider the Maxwell field in the $m=2$ case. 
Perturbation of the field strength $\delta \F_{MN}$ satisfies  
the perturbed Maxwell equations: 
\Eqr{
\nabla_{[M} \delta \F_{N L]}=0 \,, 
\label{MaxwellEq:Perturbation:dF}
\quad 
 \delta \left( \nabla_N \F^{MN} \right) =J^M \,,   
} 
where here and in the following the external current $J^M$ is treated 
as a first-order quantity. 
The first equation implies that 
$\delta\F_{M N}$ is expressed in terms of the perturbation of the vector 
potential, $\delta \A_M$, as 
$\delta\F_{MN} =2 \nabla_{[M}\delta\A_{N]}$. Therefore 
$\delta\F_{MN}$ does not contain any tensor-type perturbation. 
The second equation gives two set of equations, 
\Eqrsubl{MaxwellEq:Perturbation}{
&& \frac{1}{r^n}D_b(r^n (\delta\F)^{ab}) 
          +\hat D_i(\delta\F)^{ai}
          +E_0\epsilon^{ab}\left( 
          \frac{1}{2}D_b (h^i{}_i-h^c{}_c)-\hat D_i h^i{}_b\right) =J^a \,,
\label{MaxwellEq:Perturbation:Ja}\\
&& \frac{1}{r^{n-2}}D_a\left[ r^{n-2}
      \left( (\delta\F)_i{}^a+E_0\epsilon^{ab}h_{ib} \right) \right]
       +\hat D_j \delta\F_i{}^j =J_i \,,  
\label{MaxwellEq:Perturbation:Ji}
}
where note that 
$
(\delta \F)^{MN}=g^{ML}g^{N K}\delta \F_{LK}
$.

The contribution of electromagnetic field perturbations to 
the energy-momentum tensor are given by 
\Eqrsubl{EMFperturbation:EMtensor}{
& \delta T^{{\rm (em)}}_{ab}= & \frac{E_0}{2}\left(\epsilon^{cd}\delta\F_{cd}
         +E_0 h^c{}_c  \right)g_{ab}
         -\frac{1}{2}E_0^2 h_{ab} ,\\
& \delta T^{{\rm (em)}}{}^a{}_i= & - E_0\epsilon^{ab} \delta\F_{bi},\\
& \delta T^{{\rm (em)}}{}^i{}_j= &-\frac{E_0}{2}\left( \epsilon^{cd}\delta\F_{cd}
         +E_0 h^c_c  \right)\delta^i{}_j.        
}

For vector-type perturbations, we note that $\delta\F_{MN}$ is 
gauge-invariant as well as invariant under a coordinate gauge transformation 
$x^M \rightarrow x^M +\xi^M$. 
We find 
\Eqr{
  \delta \tau^{{{\rm (em)}}(1)}_{ai} = - E_0\epsilon_{ab}D^b\delta \A^{(1)}_i \,, 
\quad 
  \delta \tau^{{{\rm (em)}}(1)}_{ij} = 0 \,, 
}
where ${\hat D}^i \delta \A^{(1)}_i=0$. 
We find from the first of Eqs.~\eqref{MaxwellEq:Perturbation:Ji}  
% using the gauge-freedom of $\A$ if necessary,  
that a vector perturbation of the Maxwell field can be expressed in 
terms of the gauge-invariant variable $\delta \A^{(1)}_i$ on $\N^2$ as, 
\Eq{ 
\delta \F_{ab}=0 \,, \ 
\delta \F_{ai}=D_a \delta \A^{(1)}_i \,, \
\delta \F_{ij}= \hat D_i \delta \A^{(1)}_j - \hat D_j \delta \A^{(1)}_i \,. 
\label{Pert:Vector:Maxwell:A}
}

\medskip

For scalar-type perturbations, $\delta \F_{MN}$ gauge-transforms as: 
\Eqr{
\delta \F_{ab} \rightarrow \delta \F_{ab} - D_c(E_0\xi^c)\epsilon_{ab} \,, 
\quad 
\delta \F_{aj} \rightarrow \delta \F_{aj}-E_0\epsilon_{ab}{\hat D}_j\xi^b\,.   
% \quad 
% \delta \F_{ij} \rightarrow \delta \F_{ij} \,.  
}
Note that $\delta \F_{ij} = 0 $ for a scalar perturbation. 
So, we can immediately find gauge-invariant combinations, 
${\E}^{(0)},\ {\E}^{(0)}_a$, given by 
\Eqr{
 \epsilon_{ab}{\E}^{(0)}&=& \delta \F_{ab}+ \epsilon_{ab} D_c(E_0X^c) \,,  
\label{def:gi:scalar:pert:Maxwell}
\\ 
 \epsilon_{ab} {\hat D}_i {\E}^{(0)b }
 &=& \delta \F_{ai} + E_0 \epsilon_{ab}{\hat D}_iX^b \,. 
\label{def:gi:scalar:pert:Maxwell:a}
} 
The gauge-invariant variables introduced above are then written as 
\Eqr{
\Sigma^{{\rm (em)}(0)}{}_{ab} &=& - \frac{E_0^2}{2}
                                \left(
                                      F^{(0)}{}_{ab}-F^{(0)}{}^c{}_cg_{ab}
                                \right)
                        - E_0 {\E}^{(0)}g_{ab} \,, 
\\ 
\Sigma^{{\rm (em)}(0)}{}_{ai} &=& - E_0 {\hat D}_i {\E}^{(0)}_a \,, 
\\
\Sigma^{{\rm (em)}(0)}{} &=& r^2
                      \left( 
                            E_0{\E}^{(0)} - \frac{E_0^2}{2}F^{(0)}{}^c{}_c 
                      \right) \,,
\\
\Pi^{{\rm (em)}(0)}{}_{ij} &=& 0 \,. 
}

%T1>Harmonic tensors
\section{Harmonic tensors on the Einstein space $\K^n$} 
\label{sec:Harmonictensors}

When we write down perturbation equations and solve them, it is often more 
convenient to expand perturbation variables into Fourier-type harmonic 
components in terms of harmonic tensors appropriate for each tensorial type on 
the internal space $\K^n$. 
In this section, we summarize the definitions and basic properties of 
such harmonic tensors relevant to the descriptions in the present paper. 

%T2>Scalar harmonics
\subsection{Scalar harmonics}

%T3>Definitions
\subsubsection{Definition and properties}

Covariant derivative $\h D$ along $\K^n$ appears only in the form of 
the Laplace-Beltrami operator in the perturbation equations for 
scalar-type variables because $\gamma_{ij}$ is the only non-trivial symmetric 
tensor on the Einstein space $\K^n$. Hence, the scalar-type perturbation 
equations reduce to a set of PDEs on $\N$  by expanding scalar-type 
perturbation variables in terms of scalar harmonic functions on $\K^n$ that 
satisfy 
\Eq{
\left( \hat\triangle +k^2 \right)\SHB=0 \,.
}
Here, when $\K^n$ is non-compact, we assume that $-\hat\triangle$ is extended 
to a non-negative self-adjoint operator in the $L^2$-space of functions 
on $\K^n$. Hence, $k^2\ge0$. Such an extension is unique if $\K^n$ is 
complete\cite{Craioveanu.M&Puta&Rassias2001B} and is given by 
the Friedrichs self-adjoint extension of the symmetric and non-negative 
operator $-\hat\triangle$ on $C_0^\infty(\K^n)$.

If $\K^n$ is closed, the spectrum of $\hat\triangle$ is completely discrete, each eigenvalue has a finite multiplicity, and the lowest eigenvalue is $k^2$=0, whose eigenfunction is a constant. A perturbation corresponding to such a constant mode generally represents a variation of the parameters of the background solution such as $\lambda$, $M$, and $Q$. Thus, this mode is relevant to arguments on perturbative uniqueness of a background solution. Note that when $k^2=0$ is contained in the full spectrum but does not belong to the point spectrum, as in the case $\K^n=\RF^n$, it can be ignored without loss of generality.  

For modes with $k^2>0$, we can use the vector fields and the symmetric trace-free tensor fields defined by 
\Eqrsub{
&& \SHB_i=-\frac{1}{k} \hat D_i\SHB,\\
&& \SHB_{ij}=\frac{1}{k^2}\hat D_i\hat D_j\SHB
             +\frac{1}{n}\gamma_{ij}\SHB;\ \SHB^i{}_i=0
}
to expand vector and symmetric trace-free tensor fields, respectively. Note that $\SHB_i$ is also an eigenmode of the operator $\hat D\cdot\hat D$, i.e., 
\Eq{
[\hat D\cdot\hat D+k^2-(n-1)K]\SHB_i=0,
}
while $\SHB_{ij}$ satisfies
\Eq{
(\hat\triangle_L -k^2)\SHB_{ij}=0,
}
where $\h\triangle_L$ is the Lichnerowicz operator defined by
\Eq{
\hat\triangle_L h_{ij}:= -\h D\cdot \h D h_{ij} - 2 \h R_{ikjl} h^{kl} 
 + 2(n-1) K h_{ij}.
}
When $\K$ is a constant curvature space, this operator is related to the Laplace-Beltrami operator by
\Eq{
\h\triangle_L = -\h\triangle + 2nK,
}

In the case of scalar harmonics, the modes with $k^2=nK$ are 
exceptional. Given our assumption, these modes exist only for $K=1$. 
Because $\K^n$ is compact and closed in this case, from the identity
\Eq{
\hat D_j \SHB^j{}_i=\frac{n-1}{n}\frac{k^2-nK}{k}\SHB_i \,,
}
we have $\hat D_j\SHB^j{}_i=0$. From this, it follows that $\int d^nz 
\sqrt{\gamma}{\SHB_{ij}}^*\SHB^{ij}=0$. Hence, $\SHB_{ij}$ vanishes 
identically.  

We can further show that the second smallest eigenvalue for $-\h\triangle$ is equal to or greater than $nK$ when $K>0$. 
To see this, let us define $Q_{ij}$ by 
\Eqn{
Q_{ij}:= {\hat L}_{ij}Y 
       =\h D_i \h D_j Y -\frac{1}{n}\gamma_{ij}\h \triangle Y \,.
}
Then, we have the identity
\Eqn{
Q_{ij}Q^{ij}=\h D^i(\h D^i Y \h D_i\h D_j Y-Y\h D_i\h \triangle Y-\h R_{ij}\h D^jY)
  +Y\insbra{\h \triangle(\h \triangle +(n-1)K)}Y
  -\frac{1}{n}(\h \triangle Y)^2 \,.
}
For $Y=\SHB$, integrating this identity, we obtain the constraint on the second eigenvalue
\Eq{
k^2\ge nK \,.
}
For $\K^n=S^n$, the equality holds for the second smallest eigenvalue as we see soon.

%T3>Harmonic functions on S^n
\subsubsection{Harmonic functions on $S^n$}

Explicit expressions for the harmonic functions on higher-dimensional spaces 
sometimes become necessary to investigate the global structure of 
perturbations. The multiplicity of eigenvalues also has a crucial importance 
in applying the perturbation theory to black hole evaporation. 
Here, we give such information for harmonic functions on $S^n$.

%T4>homogeneous coordinates
There are several ways to express higher-dimensional spherical harmonic 
functions. For example, we can get expressions in terms of special functions 
by solving the recurrence relation obtained by dimensional reduction. 
In this paper, we give a different approach in which harmonic functions are 
expressed in terms of the homogeneous cartesian coordinates 
for the Euclidean space $E^{n+1}$ containing the unit sphere $S^n$. 

Let us denote the homogeneous cartesian coordinates of $S^n$ by $\Omega^A$ 
($A=1,\cdots,n+1$); $\Omega\cdot\Omega=1$. Then, we can show that 
$\Omega^A$ satisfies 
\Eqrsubl{HarmonicCoord:Properties}{
&& \h D_i \h D_j \Omega^A=-\gamma_{ij}\Omega^A,\\
&& \h \triangle \h D_i\Omega^A=-\h D_i \Omega^A, \\
&& \h D_i \Omega^A \h D^i\Omega^B=\delta^{AB}-\Omega^A\Omega^B.
}
From these formulae, the following theorem holds. 

%T4>theorem
\bigskip
\begin{theorem}[Scalar harmonics on $S^n$]\label{th:HarmonicScalar}
Let us define the function $Y_{\bm{a}}$ on $S^n$ by
\Eq{
Y_{\bm{a}}(\Omega)=a_{A_1\cdots A_\ell}\Omega^{A_1}\cdots \Omega^{A_\ell}
\label{HarmonicFunction:HCR}
}
in terms of a constant tensor $\bm{a}=(a_{A_1\cdots A_\ell})$ ($A_1,\cdots,A_\ell=1,\cdots,n+1$). Then, the following statements hold: 
\Bitm
\item[1)] $Y_{\bm{a}}$ is a harmonic function on $S^n$ with the eigenvalue
\Eq{
k^2=\ell(\ell+n-1),\quad \ell=0,1,2,\cdots,
}
if and only if $\bm{a}$ satisfies the conditions
\Eqrsubl{HarmonicScalar:HCR}{
&& a_{A_1\cdots A_\ell}=a_{(A_1\cdots A_\ell)},\\
&& a_{A_1\cdots A_{\ell-2}}{}^B{}_B=0\quad (\ell\ge2).
}
\item[2)] The harmonic functions $\Set{Y_{\bm{a}}}$ form a complete basis 
in $L^2(S^2)$. 
Two harmonic functions with different values of $\ell$ are orthogonal 
and those with the same $\ell$ have the inner product 
\Eqrsubl{HarmonicFunction:HCR:InnerProduct}{
&& (Y_{\bm{a}},Y_{\bm{a}'}):=\int d^n\Omega \bar Y_{\bm{a}}Y_{\bm{a}'}
= C(n,\ell) \b a^{j_1\cdots j_\ell}a'_{j_1\cdots j_\ell},\\
&& C(n,\ell)=\frac{2\pi^{\frac{n+1}{2}}\ell!}
               {2^\ell\Gamma\left(\frac{n+1}{2}+\ell\right)}.
}
\item[3)] The multiplicity of the $\ell$-eigenvalue is given by
\Eq{
N^\ell_S(S^n)=\frac{(n+2\ell-1)(n+\ell-2)!}{(n-1)!\,\ell!} \,. 
\label{HarmonicFunction:Multiplicity}
}
\Eitm
\end{theorem}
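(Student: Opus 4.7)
The plan is to dispatch the three parts in order using the ambient identities \eqref{HarmonicCoord:Properties} and standard dimension counting. Throughout, the antisymmetric part of $\bm{a}$ contributes nothing to $Y_{\bm{a}}$, so the symmetry condition in \eqref{HarmonicScalar:HCR} is established at the outset.

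For Part 1, I would compute $\hat\triangle Y_{\bm{a}}$ directly via the Leibniz rule, using $\hat\triangle\Omega^A = -n\,\Omega^A$ (the trace of the first identity in \eqref{HarmonicCoord:Properties}) and $\hat D_i\Omega^A \hat D^i\Omega^B = \delta^{AB}-\Omega^A\Omega^B$ (the third identity). After collecting equal terms by the symmetry of $\bm{a}$, the result is
\begin{equation}
\hat\triangle Y_{\bm{a}} = -\ell(\ell+n-1)\,Y_{\bm{a}} + \ell(\ell-1)\,Y_{\bm{a}'},
\end{equation}
where $\bm{a}'$ denotes the symmetric rank-$(\ell-2)$ trace $\delta^{AB}a_{AB A_3\cdots A_\ell}$. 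The \emph{if} direction is immediate. For the \emph{only if} direction, demanding that $Y_{\bm{a}}$ be an eigenfunction with eigenvalue $k^2 = \ell(\ell+n-1)$ forces $Y_{\bm{a}'}\equiv 0$ on $S^n$; but $Y_{\bm{a}'}$ is a homogeneous polynomial of degree $\ell-2$ on $\RF^{n+1}$, so vanishing on the unit sphere implies vanishing everywhere by radial scaling, hence $\bm{a}' = 0$.

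For Part 2, orthogonality between eigenspaces with distinct $\ell$ is immediate from self-adjointness of $-\hat\triangle$. For the normalization at fixed $\ell$, expand
\begin{equation}
(Y_{\bm{a}}, Y_{\bm{a}'}) = \bar a_{A_1\cdots A_\ell}\, a'_{B_1\cdots B_\ell}\, M^{A_1\cdots A_\ell B_1\cdots B_\ell},\quad M^{C_1\cdots C_{2\ell}}:=\int_{S^n}\Omega^{C_1}\cdots\Omega^{C_{2\ell}}\,d\Omega.
\end{equation}
By $\OG(n+1)$-invariance $M$ is a sum of Kronecker-$\delta$ products over all $(2\ell-1)!!$ pairings of its indices. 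Tracelessness of $\bm{a}$ and $\bm{a}'$ annihilates every pairing that couples two $A$-indices or two $B$-indices, leaving exactly $\ell!$ cross-pairings, each contributing $\bar a^{C_1\cdots C_\ell}a'_{C_1\cdots C_\ell}$ by symmetry. The overall scalar coefficient is pinned down by evaluating the Gaussian integral $\int_{\RF^{n+1}}e^{-r^2}x^{C_1}\cdots x^{C_{2\ell}}\,d^{n+1}x$ in two ways (radial times angular, versus Wick contraction), yielding $C(n,\ell)=2\pi^{(n+1)/2}\ell!/\bigl(2^\ell\,\Gamma((n+1)/2+\ell)\bigr)$. Completeness of $\{Y_{\bm{a}}\}$ in $L^2(S^n)$ then follows from the Stone-Weierstrass theorem (polynomials dense in $C(S^n)$) combined with iterated trace extraction decomposing every symmetric tensor into a finite sum of traceless pieces of decreasing rank.

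For Part 3, count the dimension of symmetric traceless rank-$\ell$ tensors on $\RF^{n+1}$: symmetric rank-$\ell$ tensors have dimension $\binom{n+\ell}{\ell}$, the trace map onto symmetric rank-$(\ell-2)$ tensors is surjective (image of dimension $\binom{n+\ell-2}{\ell-2}$), and the difference simplifies via the identity $(n+\ell)(n+\ell-1)-\ell(\ell-1)=n(n+2\ell-1)$ to \eqref{HarmonicFunction:Multiplicity}. The main obstacle is the combinatorial/normalization computation inside Part 2: isolating the $\ell!$ surviving pairings under the trace-free constraint and extracting the correct $\Gamma$-function factor in $C(n,\ell)$. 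Parts 1 and 3 are routine once the ambient homogeneous-polynomial representation is accepted.
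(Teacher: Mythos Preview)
Your proposal is correct. Parts 1) and 3) follow essentially the same route as the paper: the paper computes $\hat\triangle Y_{\bm a}$ via the identities \eqref{HarmonicCoord:Properties} to obtain the same two-term expression (up to an immaterial factor of $2$ in the trace term), and for the multiplicity it simply says to count solutions of \eqref{HarmonicScalar:HCR}, which is exactly your $\binom{n+\ell}{\ell}-\binom{n+\ell-2}{\ell-2}$ count. Your treatment of the ``only if'' direction in Part 1) is in fact more explicit than the paper's.

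The one genuine methodological difference is in the inner-product computation in Part 2). The paper introduces the plane-wave generating function $F(r^2)=\int_{S^n} e^{i\bm r\cdot\Omega}\,d^n\Omega$, evaluates it as a Bessel function, and then extracts \eqref{HarmonicFunction:HCR:InnerProduct} by differentiating $\ell$ times with respect to $x^A$ and setting $x=0$. You instead compute the moment tensor $M^{C_1\cdots C_{2\ell}}$ by $\OG(n+1)$-invariance and Wick contraction of a Gaussian on $\RF^{n+1}$, then use tracelessness of $\bm a,\bm a'$ to kill all self-pairings. Both arguments are standard; yours has the advantage of making transparent \emph{why} the answer is proportional to $\bar a\cdot a'$ (only the $\ell!$ cross-pairings survive), while the paper's generating-function method packages the combinatorics into derivatives of $F$ and would generalise more directly to the vector and tensor harmonics treated later in the section. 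Completeness is argued the same way in both (density of polynomials restricted to $S^n$).
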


%T4>proof
\begin{proof}
The first statement follows from  
\Eqr{
\h \triangle_n Y_{\bm{a}}
 &=&-\ell n Y_{\bm{a}}+\sum_{p\not=q} a_{A_1\cdots A_\ell} 
 \Omega^{A_1}\cdots \h D_k \Omega^{A_p}\cdots \h D^k \Omega^{A_q}\cdots \Omega^{A_\ell}
\nonumber\\
&=&-\ell (\ell +n-1)Y_{\bm{a}} + \frac{\ell(\ell-1)}{2}a_{A_1\cdots A_{\ell-2}}{}^B{}_B \Omega^{A_1}\cdots \Omega^{A_{\ell-2}}
}
which can be easily verified with the helps of the identities, 
Eqs.~\eqref{HarmonicCoord:Properties}. 

Next, all polynomials in the cartesian coordinates $x^A$ for $E^{n+1}$ are dense in the space of continuous functions in the unit cube in $E^{n+1}$, hence its restriction on $S^n$ is also dense in the space of continuous functions on $S^n$. This implies that all the harmonic functions of the type $Y_{\bm{a}}$ are dense in the function space $L^2(S^n)$. This proves the completeness. 
The inner product of these harmonic functions can be calculated as follows. First, by differentiating the function
\Eq{
F(r^2):=\int d^n\Omega e^{i \bm{r}\cdot\Omega}
       =\Omega_{n-1}\int_0^\pi d\theta\sin^{n-1}\theta e^{ir\cos\theta}
       =\Omega_n \frac{\Gamma\pfrac{n+1}{2}}{(r/2)^{(n-1)/2}} J_{\frac{n-1}{2}}(r)
}
repeatedly with respect to $x^A$ ($\bm{r}=(x^A)$), we obtain 
\Eq{
 \int d^n\Omega Y_{\bm{a}}(\Omega) e^{i\bm{r}\cdot\Omega}
 =(-1)^\ell  2^{\ell } x^{A_1}\cdots x^{A_{\ell }} a_{A_1\cdots A_{\ell }} F^{(\ell)}(r^2).
}
Differentiating this $\ell$-th times with respect to $x^A$ and putting $x^A=0$ yield Eq.~\eqref{HarmonicFunction:HCR:InnerProduct}. The multiplicity formula of the eigenvalue can be easily obtained by just counting the linearly independent solutions to Eq.~\eqref{HarmonicScalar:HCR}.
\end{proof}

%T2>Vector harmonics
\subsection{Vector harmonics}

%T3>Definitions
\subsubsection{Definition and properties}

A harmonic vector is defined as a vector field on $S^n$ satisfying
\Eq{
   (\hat D\cdot\hat D +k_v^2)\VHB_i=0;\quad \hat{D}^i \VHB_i=0 \,. 
} 
From this we can define a symmetric trace-free tensor of rank $2$ by 
\Eq{
\VHB_{ij}=-\frac{1}{k_v}\hat D_{(i}\VHB_{j)} \,,
\label{VectorHarmonics:SymmetricTensor:def}
}
where the factor $1/k_v$ is just a convention (see below for the case 
$k_v=0$). This tensor is an eigentensor of the Lichnerowicz operator, 
\Eq{
\hat\triangle_L \VHB_{ij}=\left[ k_v^2+(n-1)K \right]\VHB_{ij} \,,
}
but is not an eigentensor of the Laplacian in general when $\K^n$ is not 
a constant curvature space.

In this paper, we assume that the Laplacian $-\hat D\cdot\hat D$ is extended 
to a non-negative self-adjoint operator in the $L^2$-space of 
divergence-free vector fields on $\K^n$, in order to guarantee the 
completeness of the vector harmonics. Because $-\hat D\cdot\hat D$ is 
symmetric and non-negative in the space consisting of smooth divergence-free 
vector fields with compact support, it always possesses a Friedrichs 
extension that has the desired property\cite{Akhiezer.N&Glazman1966B}. 
With this assumption, $k_v^2$ is non-negative.  

One subtlety that arises in this harmonic expansion concerns the zero modes of the Laplacian. If $\K^n$ is closed, from the integration of the identity $\hat D^i(V^j\hat D_i V_j)-\hat D^i V^j 
\hat D_i V_j=V^j\hat D\cdot\hat D V_j$, it follows that $\hat D_i \VHB_j=0$ for $k_v^2=0$. Hence, we cannot construct a harmonic tensor from such a vector harmonic. We obtain the same result even in the case in which $\K^n$ is open if we require that $\VHB^j\hat D_i \VHB_j$ fall off sufficiently rapidly at infinity. In the present paper, we assume that this fall-off condition is satisfied.  From the identity $\hat D^j\hat D_i V_j=\hat D_i\hat D^jV_j+(n-1)K\hat V_i$, such a zero mode exists only in the case $K=0$. We can further show that vector fields satisfying $\hat D_i V_j=0$ exist if and only if $\K^n$ is a product of a locally flat space and an Einstein manifold with vanishing Ricci tensor.

More generally, $\VHB_{ij}$ vanishes if $\VHB_i$ is a Killing vector. In this case, from the relation
\Eq{
2k_v \hat D_j \VHB^j{}_i= \left[ k_v^2 -(n-1)K\right] \VHB_i \,,
}
$k_v^2$ takes the special value $k_v^2=(n-1)K$. Because $k_v^2\ge0$, this 
occurs only for $K=0$ or $K=1$. In the case $K=0$, this mode corresponds to 
the zero mode discussed above. In the case $K=1$, since we are assuming that 
$\K^n$ is complete, $\K^n$ is compact and closed, as known from Myers' 
theorem\cite{Myers.S1941}, and we can show the converse, i.e. that if 
$k_v^2=(n-1)K$, then $\VHB_{ij}$ vanishes, by integrating the identity 
$0= {{\VHB}^i} \hat{D}^j \VHB_{ij}
  =\hat{D}^j({{\VHB}^i}\VHB_{ij})+k_v {{\VHB}^{ij}}\VHB_{ij}$ over $\K^n$. 
Furthermore, using the same identities 
\Eqrsub{
&& 2D_{[i} V_{j]}D^{[i}V^{j]}=2D_i(V_jD^{[i}V^{j]})
   +V_j\insbra{-\triangle +(n-1)K} V^j \,,\\
&& 2D_{(i} V_{j)}D^{(i}V^{j)}=2D_i(V_jD^{(i}V^{j)})
   +V_j\insbra{-\triangle -(n-1)K} V^j \,.
\label{ids:Vector}
}
we can show that there is no eigenvalue in the range $0\le k_v^2\le(n-1)|K|$ where the second equality holds only for $K=-1$.

%T3>Harmonic vectors on S^n
\subsubsection{Harmonic vectors on $S^n$}

We can give explicit expressions for vector harmonics on $S^n$ in terms of 
the homogeneous coordinates using Theorem~\ref{th:HarmonicScalar}.

\bigskip
%T4>theorem
\begin{theorem}[Harmonic vectors on $S^n$]\label{th:HarmonicVector}
Let us define the vector field $V_{\bm{b}}^i$ by
\Eq{
V_{\bm{b}}^i=b_{A_1\cdots A_\ell ;B}\Omega^{A_1}\cdots\Omega^{A_\ell }
                   \h D^i\Omega^B \,.
\label{HarmonicVector:HCR}
}
in terms of a constant tensor $\bm{b}=(a_{A_1\cdots A_\ell ;B})$($A_1,\cdots,A_\ell,B=1,\cdots,n+1$). Then, the following statements hold:
\Bitm
\item[1)] $V_{\bm{b}}^i$ is a divergence-free harmonic vector on $S^n$ with eigenvalue
\Eq{
k_v^2=\ell(\ell+n-1)-1,\quad \ell=1,2,\cdots \,,
}
if and only if the constant tensor $\bm{b}$ satisfies the conditions
\Eqrsubl{HarmonicVector:HCR:Conditions}{
&& b_{A_1\cdots A_\ell ;B}=b_{(A_1\cdots A_\ell );B} \,,
\label{HarmonicVector:HCR:C1}\\
&& b_{A_1\cdots A_{\ell -2}i}{}^i{}_{;B}=0 \,,
\label{HarmonicVector:HCR:C2} \\
&& b_{(A_1\cdots A_\ell ;A_{\ell +1})}=0 \,.
\label{HarmonicVector:HCR:C3}
}
\item[2)] All harmonic vectors $\Set{V_{\bm{b}}^i}$ form a complete basis 
in the $L^2$ space of divergence-free vector fields on $S^n$. 
Two harmonic vectors with different values of $\ell$ are orthogonal 
and those with the same $\ell$ have the inner product 
\Eq{
(V_{\bm{b}},V_{\bm{b}'})
  :=\int d^n\Omega (\bar V_{\bm{b}})_i V_{\bm{b}'}^i
  =C(n,\ell )\bm{\bar b}\cdot \bm{b}' \,, 
\label{HarmonicVector:HCR:InnerProduct}
}
where $C(n,\ell )$ is the number given in Eq.~\eqref{HarmonicFunction:HCR:InnerProduct}D
\item[3)] The multiplicity of the $\ell$-th eigenvalue is
\Eq{
N_V^l(S^n) 
 = \frac{(n+2\ell-1)(n+\ell-1)(n+\ell-3)!}{(\ell+1)(\ell-1)!(n-2)!} \,. 
}
\Eitm
\end{theorem}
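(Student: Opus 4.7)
The plan is to verify each of the three claims in turn by direct computation using the identities of Eq.~\eqref{HarmonicCoord:Properties}, in close analogy with the proof of Theorem~\ref{th:HarmonicScalar}.

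For claim (1), I would begin with the divergence. Using the Leibniz rule, the symmetry (C1), and the identities $\hat D^i\Omega^A \hat D_i\Omega^B = \delta^{AB}-\Omega^A\Omega^B$ and $\hat\triangle\Omega^B = -n\Omega^B$, a direct expansion produces
\[
\hat D_i V^i_{\bm b} = \ell\, b_{A_1\cdots A_{\ell-1}B;B}\,\Omega^{A_1}\cdots\Omega^{A_{\ell-1}} \;-\; (\ell+n)\, b_{(A_1\cdots A_\ell;A_{\ell+1})}\,\Omega^{A_1}\cdots\Omega^{A_{\ell+1}}.
\]
Under (C2), contracting (C3) over $A_\ell, A_{\ell+1}$ shows $b_{A_1\cdots A_{\ell-1}B;B}=0$, and both terms then vanish; conversely, splitting the right-hand side into its harmonic components on $S^n$ and requiring each to vanish reproduces (C3). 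A parallel but longer computation of $\hat D_j\hat D^j V^i_{\bm b}$, which additionally uses $\hat D_i\hat D_j\Omega^A = -\gamma_{ij}\Omega^A$ together with the vector-commutator contribution carrying the Ricci factor $(n-1)K=n-1$ on $S^n$, gives $\hat D\cdot\hat D V^i_{\bm b} = -[\ell(\ell+n-1)-1]V^i_{\bm b}$; the ``$-1$'' shift relative to the scalar eigenvalue is exactly this Ricci correction.

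For claim (2), completeness follows by density: polynomial vector fields on $S^n$ are $L^2$-dense (Stone-Weierstrass plus tangential projection), and a generic polynomial vector field of degree $\ell$ decomposes as a $V_{\bm b}$ with $\bm b$ obeying (C1)-(C3), plus a pure-gradient component $\hat D^i Y_{\bm a}$; the latter is killed by the divergence-free requirement. Orthogonality of harmonics with distinct $\ell$ is automatic since $V_{\bm b}^i$ is an eigenvector of the self-adjoint operator $-\hat D\cdot\hat D$. For the inner-product formula, I would substitute $\hat D^i\Omega^B \hat D_i\Omega^{B'}=\delta^{BB'}-\Omega^B\Omega^{B'}$ into the integrand; the $\Omega^B\Omega^{B'}$ piece is proportional to $\bigl(b_{(A_1\cdots A_\ell;A_{\ell+1})}\Omega^{A_1}\cdots\Omega^{A_{\ell+1}}\bigr)\bigl(b'_{(A'_1\cdots A'_\ell;A'_{\ell+1})}\Omega^{A'_1}\cdots\Omega^{A'_{\ell+1}}\bigr)$ and hence vanishes pointwise by (C3), while the $\delta^{BB'}$ piece reduces for each fixed $B$ to the scalar inner product $(Y_{b(\cdot;B)},Y_{b'(\cdot;B)})$; summing over $B$ and invoking Theorem~\ref{th:HarmonicScalar}(2) delivers $C(n,\ell)\bar{\bm b}\cdot\bm{b'}$.

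The multiplicity count in claim (3) is the step where I expect the main difficulty, because the constraints (C1)-(C3) interact through traces and must be disentangled carefully. My plan is a rank-nullity argument applied to the linear map $\sigma:\bm b \mapsto b_{(A_1\cdots A_\ell;A_{\ell+1})}$ on the space $T$ of tensors satisfying (C1)-(C2); its kernel is exactly the space of valid $\bm b$. The domain $T$ has dimension $(n+1)N_S^\ell(S^n)$, coming from a free $B$-index paired with a symmetric trace-free $\ell$-tensor in the $A$-indices. For the image, direct contraction shows that $\gamma^{A_\ell A_{\ell+1}}\sigma(\bm b)$ is automatically symmetric trace-free in $A_1,\ldots,A_{\ell-1}$ (using (C1) and (C2)), so in the irreducible decomposition $\mathrm{Sym}^{\ell+1}(\RF^{n+1}) = \bigoplus_{k\ge 0}\gamma^{k}\otimes_{\mathrm{sym}}\mathrm{Sym}^{\ell+1-2k}_{\mathrm{TF}}$ only the top two summands can appear. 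Surjectivity onto them is established by explicit ansatz: take $b_{A_1\cdots A_\ell;B}=c^{\mathrm{TF}}_{A_1\cdots A_\ell B}$ for the top component, and for the $\gamma$-component a linear combination $\alpha\,\gamma_{A_i A_j}t_{\cdots}+\beta\,\gamma_{A_i B}t_{\cdots}$ with $\alpha,\beta$ tuned so that (C2) is preserved and the desired symmetrized tensor is produced. Rank-nullity then gives
\[
N^\ell_V(S^n) = (n+1)N_S^\ell(S^n) - N_S^{\ell+1}(S^n) - N_S^{\ell-1}(S^n),
\]
which, after substituting Eq.~\eqref{HarmonicFunction:Multiplicity} and simplifying, reproduces the stated closed-form expression; equivalently, this is the Weyl dimension of the $SO(n+1)$-irreducible representation with highest weight $(\ell,1,0,\ldots,0)$.
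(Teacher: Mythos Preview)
Your proposal is correct, and each of your three arguments differs in flavour from the paper's.

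For part~(1), the paper argues in the reverse direction: given an arbitrary divergence-free harmonic vector $V^i$ with eigenvalue $k_v^2$, it defines the scalar functions $V^B:=V^i\hat D_i\Omega^B$ and computes $\hat\triangle V^B=-(k_v^2+1)V^B$, so each $V^B$ is a scalar harmonic with $k^2=k_v^2+1$. Theorem~\ref{th:HarmonicScalar} then forces $k_v^2+1=\ell(\ell+n-1)$ and supplies conditions (C1)--(C2) automatically; (C3) is then extracted from the divergence, just as you do. This top-down route has the advantage that the eigenvalue relation and the representation of \emph{every} harmonic vector in the form $V_{\bm b}$ fall out simultaneously, which also feeds directly into the completeness claim. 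Your bottom-up verification is equally valid but you must separately argue that no harmonic vector is missed (you fold this into part~(2) via the polynomial-density argument).

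For part~(2), the paper invokes the spectral completeness of $-\hat D\cdot\hat D$ on divergence-free vector fields together with the representation just obtained; you instead use Stone--Weierstrass plus tangential projection. For the inner product the paper says only ``by the same method as for scalars'' (the generating-function trick), whereas your reduction via $\hat D^i\Omega^B\hat D_i\Omega^{B'}=\delta^{BB'}-\Omega^B\Omega^{B'}$, with the $\Omega^B\Omega^{B'}$ term killed pointwise by (C3), is more direct and actually preferable.

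For part~(3), the paper fixes one embedding index to $n+1$, reduces (C1)--(C3) to a pair of conditions on the remaining components, and counts by hand. Your rank--nullity argument on $\sigma$, together with the observation that $\sigma$ surjects onto the top two irreducible summands of $\mathrm{Sym}^{\ell+1}(\RF^{n+1})$, is more conceptual and yields the clean identity $N_V^\ell=(n+1)N_S^\ell-N_S^{\ell+1}-N_S^{\ell-1}$, which one checks agrees with the closed form. The identification with the $SO(n+1)$-irreducible of highest weight $(\ell,1,0,\ldots,0)$ is a nice bonus the paper does not make explicit.

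One small comment: your attribution of the ``$-1$'' shift to a Ricci commutator is correct but slightly compressed; in practice the cleanest route is to use the ready-made identity $\hat\triangle\hat D_i\Omega^A=-\hat D_i\Omega^A$ from Eq.~\eqref{HarmonicCoord:Properties}, which already packages the Ricci contribution.
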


%T4>proof
\begin{proof}
For a harmonic vector $V^i$ with the eigenvalue $k_v^2$, let us define a set of functions on $S^n$ by $V^B=\h V^i \h D_i \Omega^B$. Then, from Eq.~\eqref{HarmonicCoord:Properties} we obtain $\h\triangle V^B =-(k_v^2+1)V^B$. This implies that $V^B$ is a harmonic function with the eigenvalue $k^2=k_v^2+1=\ell (\ell +n-1)$. Therefore, $V_i=V_B \h D_i\Omega^B$ can be written in terms of a constant tensor $b_{A_1\cdots A_\ell ;B}$ satisfying the conditions, Eqs.~\eqref{HarmonicVector:HCR:C1} and \eqref{HarmonicVector:HCR:C2} as in Eq.~\eqref{HarmonicVector:HCR}. The divergence of this expression can be written
\Eq{
\h D_iV^i =\left[-(n+\ell )b_{A_1\cdots A_\ell ;A_{\ell +1}}
     -\ell a_{A_1\cdots A_{\ell -1}}\delta_{A_\ell  A_{\ell +1}}\right]
     \Omega^{A_1}\cdots\Omega^{A_{\ell +1}} \,,
}
where
\Eq{
a_{A_1\cdots A_{\ell -1}}:=b_{A_1\cdots A_{\ell -1} i;}{}^i \,.
}
From this, it follows that the divergence free condition for $V^i$ can be expressed as
\Eqn{
(n+\ell )b_{(A_1\cdots A_\ell ;A_{\ell +1})}
=\ell a_{(A_1\cdots A_{\ell -1}}\delta_{A_\ell A_{\ell +1})} \,.
}
After a short caculation, we find that this condition is equivalent to Eq.~\eqref{HarmonicVector:HCR:C3}. This proves the first statement.

The completeness in the second statement immediately follows from the completeness of the harmonic vectors. The formula for the inner product can be derived by the same method as used for the harmonic scalars. 

Finally, we can show that the conditions, Eqs.~\eqref{HarmonicVector:HCR:Conditions}, on $\bm{b}$ are reduced to the conditions on $b_{A'_1\cdots A'_\ell;B'}$ and $b_{n+1 A'_1\cdots A'_{\ell-1};B'}$ with $A'_1,\cdots,A'_\ell,B'=1,\cdots,n$,
\Eqrsub{
&& b_{(A'_1\cdots A'_\ell;B')}=0 \,,\\
&& (\ell-2) b_{n+1}{}^{B'}{}_{B' (A'_1\cdots A'_{\ell-3};A'_{\ell-2})}
   + 3 b_{n+1\,A'_1\cdots A'_{\ell-3}}{}^{B'}{}_{;B'} \,,
}
and all the other components are uniquely determined from these components.  By counting the number of linealy independent solutions to these conditions, we obtain the multiplicity in the theorem.
\end{proof}

%T2>Tensor harmonics
\subsection{Tensor harmonics}

%T3>Definitions
\subsubsection{Definition and properties}

In the Einstein space $\K^n$, no condition is imposed directly on the Riemann tensor $R^i{}_{jkl}$ itself, though the Ricci tensor is assumed to be proportional to the metric. Hence, in general the Lichnerowicz operator appears in the tensor-type perturbation equation instead of a simple sum of the Laplace-Beltrami operator and scalars as $\K^n$-dependent part. Thus, we have to use the eigentensors to the Lichnerowicz operators to expand tensor-type perturbations on a generic Einstein space $\K^n$:
\Eqr{
&& \hat\triangle_L \THB_{ij}=\lambda_L\THB_{ij} \,, \\
&& \THB^i{}_i=0,\quad \h D^j \THB_{ij}=0 \,.
}
Note that the Lichnerowicz operator preserves the trace-free and transverse conditions. When $\K^n$ is a constant curvature space, $\THB_{ij}$ becomes a harmonic tensor on $\K^n$,
\Eq{
(\h\triangle + k_t^2)\THB_{ij}=0 \,, \quad k_t^2=\lambda_L -2nK \,.
}
Note also that an Einstein space with dimension equal to or smaller than $3$ 
is always a constant curvature space. Further, there exist no symmetric 
harmonic tensor with rank $2$ on $S^2$ and very special ones on $T^2$ and 
$H^2/\Gamma$. To be precise, the following theorem holds:

\medskip 
\begin{theorem}[Harmonic tensors on $2$-dimensional constant curvature space]
Let $\K$ be a two-dimensional closed surface with a constant curvature $K$. 
Then, a symmetric harmonic tensor $\THB_{ij}$ with rank 2 represents a moduli 
deformation of $\K$ and exists  only when $K\le0$. For $T^2$ ($K=0$), 
$\THB_{ij}$ is a constant trace-free tensor with $k^2=0$ in the chart in 
which $ds^2=dx^2+dy^2$, while  $k^2=-2K$ for $H^2/\Gamma$ ($K<0$). 
\end{theorem}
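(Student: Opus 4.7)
The plan is to work in local isothermal coordinates on $\K$, identify transverse--traceless (TT) symmetric $2$-tensors with holomorphic quadratic differentials, and then obtain both the Laplacian eigenvalue and the case analysis from a single explicit computation followed by an integration by parts.

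First, every $2$-dimensional Riemannian manifold is locally conformally flat, so in an isothermal chart I would write the constant-curvature metric as $\gamma = e^{2\phi}|dz|^2$ with $z=x+iy$. The constant-curvature condition becomes the Liouville relation $\phi_{z\bar z} = -(K/4)\,e^{2\phi}$. A symmetric traceless tensor has only the components $\THB_{zz}$ and $\THB_{\bar z\bar z}=\overline{\THB_{zz}}$, and using the single nonzero Christoffel symbols $\Gamma^{z}_{zz}=2\phi_z$, $\Gamma^{\bar z}_{\bar z\bar z}=2\phi_{\bar z}$ one checks that the transverse condition $\h D^j\THB_{ij}=0$ collapses to $\partial_{\bar z}\THB_{zz}=0$. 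Thus TT symmetric $2$-tensors on $\K$ correspond exactly to holomorphic quadratic differentials $\THB_{zz}\,dz^2$, which is the standard infinitesimal description of Teichm\"uller (moduli) deformations of the conformal/constant-curvature structure.

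Second, I would compute $\h\triangle\THB_{zz}$ directly in the same chart. A short calculation gives $\h D_z\h D_{\bar z}\THB_{zz}=0$ and $\h D_{\bar z}\h D_z\THB_{zz}=-4\phi_{z\bar z}\THB_{zz}$, so
\[
\h\triangle\THB_{zz}=\gamma^{z\bar z}\bigl(\h D_z\h D_{\bar z}+\h D_{\bar z}\h D_z\bigr)\THB_{zz}=-8e^{-2\phi}\phi_{z\bar z}\,\THB_{zz}=2K\,\THB_{zz},
\]
invoking the Liouville relation at the last step. Hence every TT symmetric $2$-tensor on a closed $2$-dimensional constant-curvature surface is automatically a tensor harmonic with
\[
(\h\triangle + k^2)\THB_{ij}=0,\qquad k^2 = -2K.
\]
Third, pairing this relation with $\THB^{ij}$ and integrating by parts over the closed surface $\K$ yields
\[
\int_{\K}|\h D\THB|^2\,dV \;=\; -2K\int_{\K}|\THB|^2\,dV .
\]
If $K>0$ (so $\K=S^2$) the right-hand side is non-positive while the left-hand side is non-negative, forcing $\THB\equiv 0$: no such tensor exists. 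If $K=0$ (so $\K=T^2$) both sides vanish, so $\h D\THB=0$ and $\THB$ is a constant trace-free tensor in the flat chart $ds^2=dx^2+dy^2$, with $k^2=0$. If $K<0$ (so $\K=H^2/\Gamma$) we obtain a genuine positive eigenvalue $k^2=-2K$, and the space of such $\THB_{ij}$ is exactly the (finite-dimensional) space of holomorphic quadratic differentials, which by Riemann--Roch has real dimension $6g-6$ for $g\ge 2$, matching the dimension of Teichm\"uller space.

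The one step that requires real care is the tensor Laplacian calculation in the second paragraph: the Christoffel-symbol bookkeeping must be combined with the holomorphicity of $\THB_{zz}$ so that the entire expression collapses via the Liouville relation. Once the pointwise identity $\h\triangle\THB_{ij}=2K\THB_{ij}$ is in hand, the three-case conclusion reduces to the one-line integration by parts displayed above, and the identification with Teichm\"uller moduli is classical.
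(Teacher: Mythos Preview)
Your proof is correct. The isothermal-coordinate computation showing that the transverse condition forces $\partial_{\bar z}\THB_{zz}=0$ and that $\h\triangle\THB_{zz}=2K\THB_{zz}$ is accurate (the Christoffel bookkeeping and the use of the Liouville relation $\phi_{z\bar z}=-(K/4)e^{2\phi}$ both check out), and the integration-by-parts trichotomy cleanly delivers the three cases. The identification of TT symmetric $2$-tensors with holomorphic quadratic differentials, and hence with infinitesimal Teichm\"uller deformations, is the standard argument and justifies the ``moduli deformation'' claim in the statement.

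There is nothing to compare against in the paper: the theorem is stated there without proof. Your route via complex coordinates and holomorphic quadratic differentials is the natural one, and it has the virtue of simultaneously establishing both the eigenvalue $k_t^2=-2K$ and the nonexistence for $K>0$. One could alternatively argue the $K>0$ case directly from Riemann--Roch (no holomorphic sections of $K_\K^{\otimes 2}$ on $S^2$ since $\deg K_\K^{\otimes 2}=-4$), which you allude to at the end; either argument suffices, and your integration-by-parts version has the advantage of being self-contained within the differential-geometric setup of the paper.
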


Very little is known about the spectrum of the Lichnerowicz operator on 
a generic Einstein space. However, we can easily show that 
\Eq{
k_t^2 \ge n|K| \,,
\label{value:kt2}
}
for tensor harmonics on a constant curvature space with the helps of the 
identities 
\Eqrsub{
&& 2D_{[i}T_{j]k} D^{[i}T^{j]k}= 2D^i(T_{jk}D^{[i}T^{j]k})
   +T_{jk}( -\triangle +nK) T^{jk} \,,\\
&& 2D_{(i}T_{j)k} D^{(i}T^{j)k}= 2D^i(T_{jk}D^{(i}T^{j)k})
   +T_{jk}( -\triangle -nK) T^{jk} \,.
}
%

%T3>Harmonic tensors on S^n
\subsubsection{Harmonic tensors on $S^n$}

We can give explicit expressions for the symmetric harmonic tensors of 
rank $2$ on $S^n$ in terms of homogeneous coordiantes as in the case of 
harmonic vectors. 

\bigskip

%T4>theorem
\begin{theorem}[Harmonic tensors on $S^n$]
Let us define a tensor of rank 2 by 
\Eq{
T_{\bm{c}\,ij}=c_{A_1\cdots A_\ell ;B_1 B_2}\Omega^{A_1}\cdots\Omega^{A_\ell }
       \h D_i\Omega^{B_1}\h D_j\Omega^{B_2}
\label{Harmonic2Tensor:HCR}
}
where $\bm{c}=(c_{A_1\cdots A_\ell ;B_1 B_2})$ 
($A_1,\cdots,A_\ell,B_1,B_2=1,\cdots,n+1$) is a constant tensor on $E^{n+1}$. 
Then, the following statements hold: 
\Bitm
\item[i)] $T_{\bm{c}\,ij}$ is a harmonic tensor with the eigenvalue
\Eq{
k^2=\ell (\ell +n-1)-2\ (\ell =2,\cdots)
}
if and only if $\bm{c}$ satisfies the following conditions:
\Eqrsubl{Harmonic2Tensor:HCR:C}{
&& c_{A_1\cdots A_\ell ;B_1 B_2}=c_{(A_1\cdots A_\ell );B_1 B_2} \,,
\label{Harmonic2Tensor:HCR:C1}\\
&& c_{A_1\cdots A_{\ell -2}A}{}^A{}_{;B_1 B_2}=0 \,,
\label{Harmonic2Tensor:HCR:C2}\\
&& c_{(A_1\cdots A_\ell ;A_{\ell +1})}{}^B
   = c_{(A_1\cdots A_\ell ;}{}^B{}_{A_{\ell +1})}=0 \,.
\label{Harmonic2Tensor:HCR:C3}
}
\item[2)] The set of all harmonic tensors of the form $T_{\bm{c}\,ij}$ forms 
a complete basis for the $L^2$ space of trace-free and divergence-free tensors 
of rank $2$ on $S^n$. Two such harmonic tensors with different values of 
$\ell$ are orthogonal and those with the same $\ell$ have the innter product 
\Eq{
(T_{\bm{c}},T_{\bm{c}'}):=\int d^n\Omega \bar T_{\bm{c}\,ij}
   T_{\bm{c}'}{}^{ij}
   =C(n,\ell )\bm{\bar c}\cdot\bm{c}' \,,
\label{Harmonic2Tensor:HCR:InnerProduct}
}
where $C(n,\ell )$ is the same constant as that in Eq.~\eqref{HarmonicFunction:HCR:InnerProduct}D
\item[3)] For  $\ell\ge2, n\ge2$, the multiplicity of the $\ell$-th eigenvalue for symmetric harmonic tensors of rank 2 is 
\Eq{
N^\ell_T(S^n) =\frac{(n+1)(n-2)(n+\ell)(n+2\ell-1)(n+\ell-3)!}{2\ell(\ell-2)!(n-1)!} \,. 
}
\Eitm
\end{theorem}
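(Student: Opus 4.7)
The plan is to follow the pattern established in the proof of Theorem~\ref{th:HarmonicVector} for vector harmonics, adapted for rank-$2$ tensors. For part (1), given a transverse traceless symmetric eigentensor $T_{ij}$ of $-\hat\triangle$ on $S^n$ with eigenvalue $k^2$, I would introduce the scalar functions $T^{B_1 B_2}(\Omega) := T^{ij}\hat D_i\Omega^{B_1}\hat D_j\Omega^{B_2}$, which are automatically symmetric in $B_1\leftrightarrow B_2$. Using the identities \eqref{HarmonicCoord:Properties}, a direct computation shows $\hat\triangle T^{B_1 B_2} = -(k^2+2)\,T^{B_1 B_2}$, so by Theorem~\ref{th:HarmonicScalar} each $T^{B_1 B_2}$ is a degree-$\ell$ polynomial in $\Omega$ with $k^2+2=\ell(\ell+n-1)$, encoded by a constant tensor $c_{A_1\cdots A_\ell;B_1 B_2}$ that is symmetric and traceless on its first block, i.e., satisfies \eqref{Harmonic2Tensor:HCR:C1}--\eqref{Harmonic2Tensor:HCR:C2}. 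The identity $\hat D_i\Omega^B\hat D^i\Omega^{B'}=\delta^{BB'}-\Omega^B\Omega^{B'}$ lets me invert the definition of $T^{B_1B_2}$ modulo terms that vanish after symmetrization, recovering the form \eqref{Harmonic2Tensor:HCR}. Imposing $T^i{}_i=0$ and $\hat D^i T_{ij}=0$ on this form, and simplifying with \eqref{HarmonicCoord:Properties}, yields polynomials in $\Omega$ whose vanishing is equivalent to the conditions \eqref{Harmonic2Tensor:HCR:C3}; the converse direction uses the same identities to verify that any $T_{\bm c\,ij}$ satisfying \eqref{Harmonic2Tensor:HCR:C} is indeed a trace-free transverse harmonic tensor with the stated eigenvalue.

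For part (2), completeness of $\{T_{\bm c\,ij}\}$ follows from part (1) combined with the standing assumption that the Lichnerowicz operator on transverse traceless symmetric rank-$2$ tensors has a complete $L^2$ eigenbasis on $S^n$. The inner product formula \eqref{Harmonic2Tensor:HCR:InnerProduct} is obtained by the generating-function method already used in the scalar and vector cases: one considers $\int d^n\Omega\,T_{\bm c\,ij}(\Omega)\,e^{i\bm r\cdot\Omega}$, integrates by parts in each $\hat D\Omega^B$ factor, then differentiates $\ell+2$ times with respect to the cartesian components of $\bm r$ before setting $\bm r=0$. The angular integral again reduces to the Bessel function $F(r^2)$ from the proof of Theorem~\ref{th:HarmonicScalar}, and since only the total polynomial degree matters for the leading non-vanishing term at $\bm r=0$, the normalisation constant that emerges is the same $C(n,\ell)$ as in \eqref{HarmonicFunction:HCR:InnerProduct}.

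For part (3), I would count the dimension of the space of $\bm c$ satisfying \eqref{Harmonic2Tensor:HCR:C}. Following the reduction in Theorem~\ref{th:HarmonicVector}, I would split the index range by whether entries equal $n+1$, re-express \eqref{Harmonic2Tensor:HCR:C1}--\eqref{Harmonic2Tensor:HCR:C3} as a system on reduced tensors with indices in $\{1,\dots,n\}$, solve it recursively to identify the free data, and sum the resulting dimensions to obtain $N^\ell_T(S^n)$. The main obstacle is this combinatorial bookkeeping: compared with the vector case, the extra pair of indices $B_1 B_2$ (together with their $B_1\leftrightarrow B_2$ symmetry inherited from $T_{ij}$) interacts non-trivially with the two distinct total-symmetrisation relations in \eqref{Harmonic2Tensor:HCR:C3}, and care is needed to avoid overcounting those relations. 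A useful sanity check is that for $n=2$ the formula collapses to $0$, consistent with the preceding theorem stating that no symmetric harmonic tensor of rank $2$ exists on $S^2$.
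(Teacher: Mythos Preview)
Your approach to parts (1) and (2) is exactly what the paper intends: its proof says only that ``the first two statements can be easily proved by methods similar to those for the harmonic vector,'' and your sketch---passing to the scalar components $T^{B_1B_2}=T^{ij}\hat D_i\Omega^{B_1}\hat D_j\Omega^{B_2}$, picking up the eigenvalue shift $k^2\mapsto k^2+2$, invoking Theorem~\ref{th:HarmonicScalar}, and then reading off \eqref{Harmonic2Tensor:HCR:C3} from the trace-free and divergence-free conditions---is precisely that adaptation. The inner-product argument via the generating function $F(r^2)$ is likewise the same route as in the scalar and vector theorems.

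For part (3), be aware that the paper does \emph{not} carry out a proof: it states that ``lengthy calculations or sophisticated considerations based on group representation theory are required'' and defers to Ref.~\citen{Kanti.P&&2010}. Your proposed direct combinatorial count (splitting off the index value $n{+}1$ and reducing \eqref{Harmonic2Tensor:HCR:C} to constraints on tensors with indices in $\{1,\dots,n\}$) is the first of those two routes; it is viable, but the paper's own remark signals that the bookkeeping is substantially heavier than in the vector case, so you should not expect it to close as cleanly as Theorem~\ref{th:HarmonicVector}. Your $n=2$ sanity check is a good one; checking $n=3$ and small $\ell$ against known $\SO(n{+}1)$ representation dimensions would also help validate the count before committing to the full reduction.
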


%T4>proof
\begin{proof}
The first two statements can be easily proved by methods similar to 
those for the harmonic vector. To prove the last statement, lengthy 
calculations or sophisticated considerations based on group representation 
theory are required. See Ref.~\citen{Kanti.P&&2010} for details. \end{proof}

% \section
% \bigskip 
%   {\bf Einstein Black Holes(15pp)}  
% \bigskip 

%T1>Tensor perturbations
\section{Tensor-type perturbations} 
\label{sec:TensorPerturbation}

%T2>Generic BG
\subsection{Generic background} 
We first consider tensor perturbations in an $(m+n)$-dimensional generic 
background metric, Eq.~\eqref{BG:metric}, and the energy-momentum tensor,  
Eq.~\eqref{BG:energymomentum}. 
We have already seen in the previous section that 
$h_T^{(2)}{}_{ij}$, $\delta T_T^{(2)}{}_{ij}$ are
by themselves gauge-invariant. We expand these two in terms of the 
eigentensors $\THB_{ij}$ of the Lichnerowicz operator $\hat\triangle_L$ 
introduced in the previous section as follows: 
\Eq{
h_T^{(2)}{}_{ij} = 2r^2 H_T \THB_{ij}\,,
\quad 
\delta T_T^{(2)}{}_{ij} = r^2\left( \tau_T+2PH_T \right)\THB_{ij} \,. 
\label{TensorPerturbation:metric}  
}
The Einstein equations in terms of the gauge-invariant coefficients 
$H_T$ and $\tau_T$ are obtained from Eq.~(23) in 
Ref.~\citen{Kodama.H&Ishibashi&Seto2000} with the replacement 
$k^2 \tend \lambda_L-2nK$, where $\lambda_L$ is 
the eigenvalue of the Lichnerowicz operator. 
The result is expressed by the single equation 
\Eq{
\Box H_T +\frac{n}{r}Dr\cdot 
DH_T-\frac{\lambda_L-2(n-1)K}{r^2}H_T=-\kappa^2\tau_T \,. 
\label{BasicEq:tensor}
}

We emphasise that this equation holds whenever the background metric
is given in the form of Eq.~\eqref{BG:metric}, irrespective
to the dimension of $\N^m$, and therefore applies to a more general
background than that of a static black hole. For example, it has been applied 
to the stability analysis\cite{Kodama.H&Konoplya&Zhidenko09} 
of Myers-Perry black holes with a single rotation,  
% by taking $\N^4$ as the $4$-dimensional part of the MP metric 
% that looks like the $4$-dimensional Kerr metric and by taking 
whose metric takes the warped product form of Eq.~\eqref{BG:metric}, 
with $\K^n$ being the $(d-4)$-dimensional unit sphere.

We also note that Eq.~\eqref{BasicEq:tensor} is precisely the same form as 
the equation of motion for a massless test scalar field on the same 
background spacetime if $H_T$ is viewed as the scalar field with 
the angular momentum number being $\lambda_L-2(n-1)K$. 

%T2>Static BH BG
\subsection{Static black hole background} 
We turn to the black hole background, Eq.~\eqref{metric:GSBH}, for which 
$m=2$. If one introduces the master variable $\Phi$ by 
\Eq{
\Phi=r^{n/2}H_T,
}
Eq.~\eqref{BasicEq:tensor} can be put into the canonical form  
\Eq{
 \Box\Phi- \frac{V_T}{f}\Phi=-\kappa^2 r^{n/2} \tau_T, 
\label{MasterEq:Tensor}
}
where
\Eq{
V_T= \frac{f}{r^2}\left[ 
                        \lambda_L-2(n-1)K
                        +\frac{nrf'}{2}+\frac{n(n-2)f}{4}
                  \right]. 
\label{VT:GSBH}
}
In particular, for $f(r)$ given by Eq.~\eqref{f:RNBH}, $V_T$ is expressed as
\Eq{
V_T=\frac{f}{r^2}\left[\lambda_L +\frac{n^2-10n+8}{4}K 
    -\frac{n(n+2)}{4}\lambda r^2+\frac{n^2M}{2r^{n-1}}
    -\frac{n(3n-2)Q^2}{4r^{2n-2}}
 \right].
\label{VT:RNBH}
}

Note that since an electromagnetic field $\F_{ab}$ is described 
by a vector field, it does not have any tensor-type component. 
The electromagnetic field enter the equations for a tensor perturbation 
only through their effect on the background geometry.

%T1>Vector perturbations
\section{Vector-type perturbations}  
\label{sec:VectorPerturbation}

%T2>General BG
\subsection{General background case}  
We have already introduced a basis of vector-type gauge-invariant variables 
$(F^{(1)}_{ai}, \tau^{(1)}_{ai}, \tau^{(1)}_{ij} )$ in generic background 
of $(m+n)$-dimensions. 
We expand these gauge invariants in terms of vector harmonics $\VHB_i$ 
and write the perturbed Einstein equations for the expansion coefficients.  

\smallskip
For generic modes $m_V := k_v^2-(n-1)K\not=0$, 
\Eqr{
  F^{(1)}_{ai} = rF_a \VHB_i \,, \quad 
  \tau^{(1)}_{ai} = r\tau_a \VHB_i \,, \quad 
  \tau^{(1)}_{ij} = r^2 \tau_T \VHB_{ij} \,,  
\label{def:vector:Fai:tauai:tauij}
}
and the Einstein equations reduce to 
\Eqr{
&& 
\frac{1}{r^{n+1}}D^b
     \left\{
         r^{n+2}\left[
                      D_b\left(\frac{F_a}{r}\right)
                     -D_a\left(\frac{F_b}{r}\right)
                \right]
     \right\} - \frac{m_V}{r^2}F_a
    = -2\kappa^2 \tau_a \,, 
\label{eq:Vectorperturbation:evol:Fa}
\\
&&  
 \frac{k}{r^n}D_a(r^{n-1}F^a) = -\kappa^2\tau_T \,.
\label{eq:Vectorperturbation:constr:Fa}
}

\smallskip 
For the exceptional mode $m_V=0$, only the following combination 
\Eqr{
F^{(1)}_{ab}=rD_a\left(\frac{F_b}{r}\right)-rD_b\left(\frac{F_a}{r}\right)\,,
}
is gauge-invariant. For this mode we have only a single equation
\Eq{
 \frac{1}{r^{n+1}}D^b(r^{n+1}F^{(1)}_{ab})=-2\kappa^2\tau_a \,. 
}

%T2>Static BH BG
\subsection{Static black hole background case}  

In the black hole case with $m=2$, using the $2$-dimensional Levi-Civita 
tensor $\epsilon_{ab}$, we can rewrite the perturbed Einstein equations as 
\Eqrsubl{VectorPerturbation:BasicEq:metric}{
&&D_a\left( r^{n+1}F^{(1)} \right)
  -m_V r^{n-1}\epsilon_{ab}F^b
  =-2\kappa^2 r^{n+1}\epsilon_{ab}\tau^b,
\label{BasicEq:metric:vector1}\\
&& k_v D_a(r^{n-1}F^a)=-\kappa^2r^n\tau_T,
\label{BasicEq:metric:vector2}
}
where 
\Eq{
F^{(1)}=\epsilon^{ab}r D_a\pfrac{F_b}{r} \,,
 %     =\epsilon^{ab}r D_a\pfrac{f_b}{r} \,, 
\label{F^(1):def}
}
and $F_a$ is defined by the first of Eq.~\eqref{def:vector:Fai:tauai:tauij} 
with $F^{(1)}_{ai} \rightarrow h^{(1)}_{ai}$. 
This should be supplemented by the perturbation of the 
energy-momentum conservation law
\Eq{
D_a(r^{n+1}\tau^a)+\frac{m_V}{2k_v}r^n\tau_T=0 \,.
\label{EMconservation:vector}
}

Note that, for $m_V=0$, the perturbation variables $h_T^{(1)}{}_i$ and 
$\delta T_T^{(1)}{}_i$---hence $H_T$ and $\tau_T$---do not exist. 
The matter variable $\tau_a$ is still gauge-invariant, but concerning
the metric variables, only the combination $F^{(1)}$ defined in 
Eq.~\eqref{F^(1):def} is gauge invariant. In this case, the Einstein equations 
are reduced to the single equation \eqref{BasicEq:metric:vector1}, and 
the energy-momentum conservation law is given by 
Eq.~\eqref{EMconservation:vector} without the $\tau_T$ term.

These gauge-invariant perturbation equations can be reduced to 
a single wave equation with a source in the $2$-dimensional spacetime 
$\N^2$. First, for the generic modes $m_V\not=0$,  
from Eqs.~\eqref{EMconservation:vector} and \eqref{BasicEq:metric:vector2} 
we find that $F^a$ can be written in terms of a variable $\tilde \Omega$ as 
\Eq{
 \epsilon^{ab}D_b\tilde\Omega = 
r^{n-1}F^a - \frac{2\kappa^2}{m_V}r^{n+1}\tau^a \,.  
\label{FaByOmega}
}
Inserting this expression into Eq.~\eqref{BasicEq:metric:vector1}, we 
obtain the master equation
\Eq{
r^nD_a\left( \frac{1}{r^n}D^a\tilde\Omega \right)
  -\frac{m_V}{r^2}\tilde\Omega 
  =-\frac{2\kappa^2}{m_V} r^n\epsilon^{ab}D_a(r\tau_b) \, . 
\label{MasterEq:vector:metric}
}

For the special modes $m_V=0$, 
it follows from Eq.~\eqref{EMconservation:vector} with $\tau_T=0$ that 
$\tau_a$ can be expressed in terms of a function $\tau^{(1)}$ as
\Eq{
r^{n+1} \tau_a=\epsilon_{ab}D^b\tau^{(1)}.
\label{tau1:def}
}
Inserting this expression into Eq.~\eqref{BasicEq:metric:vector1} with 
$\epsilon^{cd}D_c(F_d/r)$ replaced by $F^{(1)}/r$, we obtain
\Eq{
D_a(r^{n+1}F^{(1)})=-2\kappa^2D_a\tau^{(1)}.
} 
Taking into account of the freedom of adding a constant in the definition 
of $\tau^{(1)}$, we have the general solution 
\Eq{
F^{(1)}=-\frac{2\kappa^2 \tau^{(1)}}{r^{n+1}}.
\label{BasicEq:vector:metric:exceptional}
}
Hence, there exists no dynamical freedom in these special modes. In 
particular, in the source-free case in which $\tau^{(1)}$ is a 
constant and $K=1$, this solution corresponds to adding a small 
rotation to the background solution.

%T2>Einstein-Maxwell
\subsection{Static black hole in Einstein-Maxwell system}

%T3>EM perturbations
\subsubsection{Perturbation of electromagnetic fields}
\label{subsec:EMFperturbation}

For vector-type perturbations, we note that ${\delta\F}_{MN}$ 
is gauge-invariant as well as invariant under a coordinate 
gauge transformation.  
As shown in Eq.~\eqref{Pert:Vector:Maxwell:A}, 
a vector perturbation of the Maxwell field can be expressed, 
in terms of the single gauge-invariant variable $\A$ defined 
by $\delta \A^{(1)}_i =\A \VHB_i$ on $\N^2$, as 
\Eq{ 
\delta \F_{ab}=0 \,,\ 
\delta \F_{ai}=D_a\A\VHB_i \,,\ 
\delta \F_{ij}=\A \left(\hat D_i\VHB_j-\hat D_j\VHB_i\right) \,.  
} 
We also expand the current $J_i$ as 
\Eq{
J_i=J\VHB_i \,.   
}
Then we obtain from the second of Eq.~\eqref{MaxwellEq:Perturbation:Ji} 
the gauge-invariant form for the Maxwell equation,  
\Eq{
 \frac{1}{r^{n-2}}D_a(r^{n-2}D^a \A)
    -\frac{k_v^2+(n-1)K}{r^2}\A  
   =-J +r E_0 F^{(1)} \,. 
\label{RN:vector:Master:EM0}
}

In order to complete the formulation of the basic perturbation equations, 
we must separate the contribution of the electromagnetic field to the source 
term in the Einstein equation \eqref{MasterEq:vector:metric}. 
% Because for a vector perturbation, \eqref{EMFperturbation:EMtensor} 
% is expressed as 
% %
% \Eq{
% \delta T^a{}_b=0,\quad
% \delta T^a{}_i=-E_0 \epsilon^{ab}D_b\A\VHB_i,\quad
% \delta T^i{}_j=0,
% }
%
The contributions of the electromagnetic field to $\tau_a$ and 
$\tau_T$ are given in terms of $\A$ by
\Eq{
\tau_{a}^{{\rm (em)}}=-\frac{E_0}{r}\epsilon_{ab}D^b\A \,, 
\quad 
\tau_{T}^{{\rm (em)}}=0 \,.
\label{tau:vector:EM}
}
Hence, the Einstein equations for the Einstein-Maxwell system can be 
obtained by replacing $\tau_a$ in Eq.~\eqref{MasterEq:vector:metric} by
\Eq{
\tau_a=\tau_{a}^{{\rm (em)}}+ \bar\tau_a \,, 
}
where the second term represents the contribution from matter other 
than the electromagnetic field. 

%T3>Master eqs
\subsubsection{Master equations}
%T4>k^2 <>(n-1)K
Now, generic modes: $m_V \neq 0$, 
we introduce new master variables by 
\Eq{
 \Phi_\pm:=a_\pm r^{-n/2}\left(\tilde\Omega -\frac{2\kappa^2 q}{m_V}\A \right) 
          + b_\pm r^{n/2-1}\A \,,
}
with 
\Eqrsub{
&& (a_+,b_+)= 
  \left( \frac{Q m_V}{(n^2-1)M+\Delta},\frac{Q}{q} \right) \,, 
\\
&& (a_-,b_-)= \left(1,\frac{-2n(n-1)Q^2}{q[(n^2-1)M+\Delta]} \right) \,, 
}
where $\Delta$ is a positive constant satisfying 
\Eq{
\Delta^2= (n^2-1)^2M^2+2n(n-1)m_VQ^2 \,. 
}
Then, from Eqs.~\eqref{MasterEq:vector:metric} 
and \eqref{RN:vector:Master:EM0}, 
we obtain the two decoupled wave equations as our master equations: 
\Eq{ 
  \Box \Phi_\pm - \frac{V_{V\pm}}{f}\Phi_\pm =S_{V\pm} \,,  
}
% \Eq{
% -\partial_t^2\Phi_\pm 
%   +f\partial_r(f\partial_r \Phi_\pm)- V_{V\pm}\Phi_\pm 
%   =S_{V\pm}.
% }
%
where  
\Eq{
V_{V\pm}=\frac{f}{r^2}\left[k_v^2 +\frac{(n^2-2n+4)K}{4}
   -\frac{n(n-2)}{4}\lambda r^2+\frac{n(5n-2)Q^2}{4r^{2n-2}} 
   +\frac{\mu_\pm}{r^{n-1}}\right] \,,
\label{Vpm:Vector}
}
\Eq{ 
\mu_\pm=-\frac{n^2+2}{2}M \pm \Delta \,, 
}
and 
\Eq{
S_{V\pm}=-a_\pm \frac{2\kappa^2 r^{n/2}f}{m_V}
      \epsilon^{ab}D_a(r\bar\tau_b)
    -b_\pm r^{n/2-1}f J \,. 
}
For $n=2, K=1$ and $\lambda=0$, the variables $\Phi_+$ and $\Phi_-$ 
are proportional to the variables for the axial modes, $Z^{(-)}_1$ 
and $Z^{(-)}_2$ given in Ref.~\citen{Chandrasekhar.S1983B}, and 
$V_{V+}$ and $V_{V-}$ coincide with the corresponding potentials, 
$V^{(-)}_1$ and $V^{(-)}_2$, respectively.

Here, note that in the limit $Q\tend0$, $\Phi_+$ becomes 
proportional to $\A$ and $\Phi_-$ to $\Omega$. Hence, $\Phi_+$ and 
$\Phi_-$ represent the electromagnetic mode and the gravitational 
mode, respectively. In particular, in the limit $Q\tend0$, the 
equation for $\Phi_-$ coincides with the master equation for a 
vector perturbation on a neutral black hole background derived in 
Paper~I. 

%T4>exceptional modes
% \subsubsection{Exceptional modes: $k_V^2=(n-1)K$} %T3>k^2=(n-1)K
As for the exceptional modes, $m_V=0$, from 
the definition, Eq.~\eqref{tau1:def}, of $\tau^{(1)}$ and 
Eq.~\eqref{tau:vector:EM}, we can express $\tau^{(1)}$ as 
\Eq{
\tau^{(1)}=-q\A +\bar\tau^{(1)} \,. 
}
Hence, Eq.~\eqref{BasicEq:vector:metric:exceptional} can be rewritten as
\Eq{
 F^{(1)}=\frac{2\kappa^2(q\A-\bar\tau^{(1)})}{r^{n+1}} \,.
}
Inserting this into Eq.~\eqref{RN:vector:Master:EM0}, we obtain
\Eq{
\frac{1}{r^{n-2}}D_a(r^{n-2}D^a\A)-\frac{1}{r^2}
 \left(2(n-1)K + \frac{2n(n-1)Q^2}{r^{2n-2}}\right)\A 
= -J -\frac{2\kappa^2 q}{r^{2n}}\bar\tau^{(1)} \,. 
\label{RN:vector:Master:EM:exceptional}
}
Therefore, only the electromagnetic perturbation is dynamical.

%T1>Scalar-type perturbation
\section{Scalar-type perturbations} 
\label{sec:ScalarPerturbation}

%T2>Metric perturbations  
\subsection{General background case}
In terms of scalar harmonics $\SHB$, we expand the scalar-type 
perturbation variables as 
\Eqrsub{
 && h_{ab}=f_{ab}\SHB \,, \quad
 h_a = - \frac{r}{k}f_a \SHB \,, \quad
 h_L = 2r^2H_L \SHB\,, \quad
 h_T = 2r^2\frac{H_T}{k^2} \SHB\,,
\\
 &&
 \delta T_{ab}= \tau_{ab}\SHB \,, \quad
 \delta T_a= - \frac{r}{k}(Pf_a +\tau_a)\SHB \,, \quad
\\
 &&
 \delta T_L =r^2(2H_LP +\delta P)\SHB \,, \quad
 \delta T_T =\frac{r^2}{k^2}(2H_TP +\tau_T)\SHB \,, 
}
and $X_a=X_a \SHB$. Here $P$ and all expansion coefficients, 
e.g. $\delta P$, are tensor fields on the $m$-dimensional spacetime $\N^m$. 

The basis of the scalar-type gauge-invariant variables introduced
in the previous section are expanded as:  
\Eqrsub{
&&F^{(0)}{}_{ab} = F_{ab}\SHB \,, \quad  
  F^{(0)} = 2r^2F \SHB \,, 
\\
&&
 \Sigma^{(0)}{}_{ab}=\Sigma_{ab} \SHB \,, \quad
 \Sigma^{(0)}{}_{ai}=r\Sigma_a \SHB_i \,, \quad
%                     (\Sigma_a=\tau_a-\frac{k}{r}(\tau_a{}^bX_b-PX_a))
\\
&&
 \Sigma^{(0)} = r^2 \Sigma \SHB \,, \quad 
%                     (\Sigma= \delta P +X^a D_a P)
 \Pi^{(0)}{}_{ij} = r^2 \tau_T \SHB_{ij} \,.
 }
The expansion coefficients here 
$F_{ab},F,\Sigma_{ab},\Sigma_a,\Sigma,\tau_T$ as well as $X_a$ 
are precisely the same as those given in 
Ref.~\citen{Kodama.H&Ishibashi&Seto2000}.

For the exceptional modes with $k^2=n$ for $K=1$,
$h_T$ and $\delta T_T$ (equivalently $H_T$ and $\tau_T$) are 
not defined, because a second-rank symmetric tensor cannot be
constructed from $\SHB$ for these modes. In this case, 
we define $F, F_{ab}, \Sigma_{ab},\Sigma_a$ and 
$\Sigma_L$ by setting $X_L=0$ in the above definitions. These 
quantities defined in this way are, however, gauge dependent. These 
exceptional modes are treated in Appendix~C of Paper~III.

%T2>Static BH BG
\subsection{Static black hole background}

%T3>Maxwell eq
\subsubsection{Maxwell equations}
From now on we consider the static black hole background with $m=2$. 
We often use $\epsilon_{ab}$.  
We expand $\E^{(0)}$ and $\E^{(0)}_a$ defined in 
Eqs.~\eqref{def:gi:scalar:pert:Maxwell}, \eqref{def:gi:scalar:pert:Maxwell:a} as  
\Eqrsubl{GaugeInvVar:scalar:EMF}{
 {\E} \SHB = {\E}^{(0)} \,, \quad 
 {\E}_a \SHB = -\frac{k}{r}{\E}^{(0)}_a \,. 
% && \delta \F_{ab}+D_c(E_0X^c)\epsilon_{ab}\SHB=\E \epsilon_{ab}\SHB,\\
% && \delta\F_{ai}-kE_0\epsilon_{ab}X^b\SHB_i=r\epsilon_{ab}\E^b\SHB_i.
}
Then, the Maxwell equations \eqref{MaxwellEq:Perturbation} are written 
\Eqrsub{
&& \frac{1}{r^n}D_a(r^n\E)+\frac{k}{r}\E_a
  -\frac{E_0}{2}D_a(F^c_c-2nF)=\epsilon_{ab}J^b,
\label{BasicEq:EM:Scalar:Maxwell1}\\
&& \epsilon^{ab}D_a(r^{n-1}\E_b)=-r^{n-1}J,
\label{BasicEq:EM:Scalar:Maxwell2}
}
with $J$ defined by $J_i=rJ \SHB_i$. 
Note that from the first of Eqs.~\eqref{MaxwellEq:Perturbation:dF} we have 
the relation 
\Eq{
\E=-\frac{1}{k}D_c(r\E^c).
\label{BasicEq:EM:Scalar:EbyEa}
}
Note also that Eqs.~\eqref{BasicEq:EM:Scalar:Maxwell1} and 
\eqref{BasicEq:EM:Scalar:Maxwell2} give the current conservation law
\Eq{
D_c(r^n J^c)=-kr^{n-1}J.
\label{BasicEq:EM:Scalar:J}
}

We find that the gauge invariant variables, $\E_a$ and $\E$, 
can be expressed in terms of the single master variable $\A$ as 
\Eqr{
\E_a =\frac{k}{r^{n-1}}\left(D_a \A + \tilde J_a\right) \,, 
\label{BasicEq:EM:Scalar:EabyA}  
\quad 
r^n\E = -k^2\A + \frac{q}{2}(F^c_c-2nF) \,, 
\label{BasicEq:EM:Scalar:EbyA}
}
where $\tilde J_a $ has been
defined by $J^a= {k^2}{r^{-n}}\epsilon^{ab}\tilde J_b$. 
By inserting these expressions into Eq.~\eqref{BasicEq:EM:Scalar:EbyEa}, 
we obtain the wave equation for $\A$: 
\Eq{
r^{n-2}D_a\left( \frac{D^a\A}{r^{n-2}} \right) -\frac{k^2}{r^2}\A
  =-r^{n-2}D_a\left(\frac{\tilde J^a}{r^{n-2}}\right)
    -\frac{q}{2r^2}(F^c_c-2nF) \,.  
\label{BasicEq:EM:Scalar:A}
}

The contribution of the electromagnetic field to the perturbation of the 
energy-momentum tensor, $\Sigma_{ab}$, $\Sigma_a$ and $\Sigma_L$,
are % given in terms of $\A$, $F_{ab}$ and $F$ as follows: 
\Eqrsub{
&& \Sigma_{ab}^{{\rm (em)}}=\left( \frac{qk^2}{r^{2n}}\A
  +\frac{nq^2}{r^{2n}}F \right)g_{ab}
 -\frac{q^2}{2r^{2n}}F_{ab} \,,\\
&& \Sigma_{a}^{{\rm (em)}}
    =-\frac{qk}{r^{2n-1}}\left(D_a\A+\tilde J_a\right) \,,\\
&& \Sigma^{{\rm (em)}}= -\frac{qk^2}{r^{2n}}\A-\frac{nq^2}{r^{2n}}F \,.  
\label{Sigma:EMF}
}

%T3>Master eqs.
\subsubsection{Master equations}

For generic modes of scalar perturbations, the Einstein equations 
consist of four sets of equations of the forms 
\Eq{
E_{ab}=\kappa^2 \Sigma_{ab} \,, \ 
E_a=\kappa^2 \Sigma_a \,,\ 
E_L=\kappa^2 \Sigma \,,\ 
E_T=\kappa^2 \tau_T \,.
\label{eqn:Einstein:KIS00}
}
For the definitions of $E_{ab}, E_a, E_L$ and $E_T$, see 
eqs.~(63)--(66) in Ref.~\citen{Kodama.H&Ishibashi&Seto2000}.  
Introducing the perturbation of the energy-momentum tensor as 
\Eq{
S_{ab}=r^{n-2}\kappa^2(\Sigma_{ab}-\Sigma_{ab}^{{\rm (em)}}),\ 
S_a=\frac{r^{n-1}\kappa^2}{k}(\Sigma_{a}-\Sigma_{a}^{{\rm (em)}}),\ 
S_L=r^{n-2}\kappa^2(\Sigma_{L}-\Sigma_{L}^{{\rm (em)}}) \,, 
\label{S:def}
} and 
\Eq{
S_T=\frac{2r^n}{k^2}\kappa^2 \tau_T.
\label{S_T:def}
}  
the conservation law for the energy-momentum tensor is given by 
the two equations
\Eqrsubl{EMconservation:scalar:RNBH}{
&& \frac{1}{r^2}D_a(r^2 S^a) -S_L+\frac{(n-1)(k^2-nK)}{2nr^2}S_T=0,\\
&& \frac{1}{r^2}D_b(r^2S^b_a)+\frac{k^2}{r^2}S_a -n\frac{D_a r}{r}S_L
   =k^2\frac{\kappa^2 q}{r^{n+2}}\tilde J_a.
}

Now we introduce $X,Y$ and $Z$ defined by
\Eq{
    X= r^{n-2}(F^t_t-2 F),\ 
    Y= r^{n-2}(F^r_r-2 F),\ 
    Z= r^{n-2} F^r_t,
\label{XYZ:def}
}
as in Paper~I. 
After the Fourier transformation with respect to the Killing time coordinate,  
$t$, of the black hole background, Eq.~\eqref{metric:GSBH}, 
the perturbation equations above can be reduced to a system consisting 
of three first order linear differential equations for $X,Y,Z$ and a single 
linear algebraic constraint on them, with inhomogeneous terms given by 
$\A,J_a, S_{ab}, S_a,S_T$. As performed in Paper~I, this constrained system 
can be further simplified to a single second-order ODE for a scalar field 
$\Phi$ given by 
\Eq{
 \Phi= -\frac{X+Y+S_T- {nZ}/{i\omega r}}{r^{n/2-2}H} \,,
\label{RN:scalar:MasterVar:metric}
}
with a source term, where 
\Eqr{
&& H=m +\frac{n(n+1)M}{r^{n-1}} - \frac{n^2Q^2}{r^{2n-2}} \,,\\ 
% && H=m +\frac{n(n+1)}{2}x - n^2z \,,\\
&& m=k^2-nK \,. 
% && x=\frac{2M}{r^{n-1}} \,,\ 
%    y=\lambda r^2 \,, \ 
%    z=\frac{Q^2}{r^{2n-2}} \,.    
}
Thus obtained ODE for $\Phi$ and the Maxwell equation for $\A$, 
given by Eq.~\eqref{BasicEq:EM:Scalar:A}, form a coupled second-order 
ODEs with source terms. Then, by introducing new master variables, 
$\Phi_\pm$, given as a linear combination of $\Phi$ and $\A$ below, 
we obtain the following two decoupled master equations for scalar-type 
perturbations:  
\Eq{
 f\frac{d}{dr}\left(f\frac{d}{dr}\Phi_\pm \right)
 + \left(\omega^2-V_{S\pm}\right)\Phi_\pm   =S_{S\pm} \,,   
}
where 
\Eq{
 \Phi_\pm :=a_\pm \Phi + b_\pm \A \,,
}
with 
\Eqrsub{
&& (a_+,b_+)=\left(\frac{m}{n}Q+\frac{(n+1)(M+\mu)}{2r^{n-1}}Q \,,
    \frac{(n+1)(M+\mu)Q}{qr^{n/2-1}}\right),\\
&& (a_-,b_-)=\left((n+1)(M+\mu)-\frac{2nQ^2}{r^{n-1}} \,,
    -\frac{4nQ^2}{qr^{n/2-1}} \right) \,, 
}
and where $\mu$ is a positive constant satisfying 
\Eq{
\mu^2=M^2+\frac{4mQ^2}{(n+1)^2} \,. 
}
If we define the parameter $\delta$ by 
\Eq{
\mu=(1+2m\delta)M \,,
}
the effective potentials $V_{S\pm}$ are expressed as
\Eq{
V_{S\pm} =\frac{fU_\pm}{64r^2H_\pm^2} \,,
}
with 
\Eqrsub{
&& H_+=1-\frac{n(n+1)}{2}\delta x \,, \quad 
   H_-=m+\frac{n(n+1)}{2}(1+m\delta)x \,, 
\\
&& 
 x=\frac{2M}{r^{n-1}} \,,\quad 
 y=\lambda r^2 \,, \quad 
 z=\frac{Q^2}{r^{2n-2}} \,,      
}
and 
\Eqrsub{
& U_+ =
& \left[-4 n^3 (n+2) (n+1)^2 \delta^2 x^2-48 n^2 (n+1) (n-2) \delta x
\right.\notag\\ 
&&\left.   -16 (n-2) (n-4)\right] y
  -\delta^3 n^3 (3 n-2) (n+1)^4 (1+m \delta) x^4
\notag\\
&&   +4 \delta^2 n^2 (n+1)^2 
   \left\{(n+1)(3n-2) m \delta+4 n^2+n-2\right\} x^3
\notag\\   
&&   +4 \delta (n+1)\left\{
   (n-2) (n-4) (n+1) (m+n^2 K) \delta-7 n^3+7 n^2-14 n+8
   \right\}x^2
\notag\\   
&&  + \left\{16 (n+1) \left(-4 m+3 n^2(n-2) K\right) \delta
     -16 (3 n-2) (n-2) \right\}x
\notag\\   
&&    +64 m+16 n(n+2) K \,,\\
& U_- =
  & \left[-4 n^3 (n+2) (n+1)^2 (1+m \delta)^2 x^2
      +48 n^2 (n+1) (n-2) m (1+m \delta) x  \right.
\notag\\
&& \left.  -16 (n-2) (n-4) m^2\right] y
     -n^3 (3 n-2) (n+1)^4 \delta (1+m \delta)^3 x^4
\notag\\
&& -4 n^2 (n+1)^2 (1+m \delta)^2 
     \left\{(n+1)(3 n-2) m \delta-n^2\right\} x^3
\notag\\  
&&  +4 (n+1) (1+m \delta)\left\{ m (n-2) (n-4) (n+1) (m+n^2 K) \delta
  \right. \notag\\
&& \left. \quad  +4 n (2 n^2-3 n+4) m+n^2 (n-2) (n-4) (n+1)K \right\}x^2
\notag\\
&&  -16m \left\{ (n+1) m \left(-4 m+3 n^2(n-2) K\right) \delta
\right.\notag\\
&&\left.  +3 n (n-4) m+3 n^2 (n+1) (n-2)K \right\}x
\notag\\
&&      +64 m^3+16 n(n+2)m^2 K \,.  
}

The source terms $S_{S\pm}$ are given by 
% of $\bar S_\Phi=S_\Phi|_{\A=0}$ 
% given in \eqref{RN:scalar:MasterEq:Source:Phi} and $S_\A$ given in  
% \eqref{RN:scalar:MasterEq:Source:A}:
%
\Eq{
S_{S\pm}=a_\pm  \bar S_\Phi + b_\pm  S_\A \,, 
}
where $\bar S_\Phi=S_\Phi|_{\A=0}$ with 
\Eqr{
&& S_\Phi=\frac{f}{r^{n/2}H}\left[ \kappa^2 E_0\left( 
      \frac{P_{S1}}{H}\left(\A-\frac{\tilde J_t}{i\omega}\right)
      +2nrf \tilde J_r 
   +2k^2\frac{\tilde J_t}{i\omega}
   +2nf\frac{r\partial_r \tilde J_t}{i\omega} \right) \right.
   \notag\\
&& \qquad\qquad\qquad
   -HS_T -\frac{P_{S2}}{H}\frac{S_t}{i\omega }
   -2nf\frac{r\partial_r S_t}{i\omega}
   -2nrfS_r \notag\\
&& \qquad\qquad\qquad \left.
   +\frac{P_{S3}}{H}\frac{rS^r_t}{i\omega}
   +2r^2\frac{\partial_r S^r_t}{i\omega}
   +2r^2S^r_r
   \right] \,,
\label{RN:scalar:MasterEq:Source:Phi}
}
and 
\Eq{
S_A= -\left( \frac{2n^2(n-1)^2zf^2}{r^2H}+\omega^2 \right)
        \frac{\tilde J_t}{i\omega}
        -r^{n-2}f\partial_r\pfrac{f\tilde J_r}{r^{n-2}}
        +\frac{2(n-1)E_0}{i\omega H}f\left( nf S_t
        -r S^r_t \right) \,.
\label{RN:scalar:MasterEq:Source:A}
}
Here $P_{S1}, P_{S2}$ and $P_{S3}$ are polynomials of $x,y$ and $z$, 
whose explicit expressions are 
\Eqr{
& P_{S1}=& \left[ -4n^4z+2n^2(n+1)x-4n(n-2)m \right]y \notag\\
&& +\left\{ 2n^2(n-1)x+4n(n-2)m+4n^3(n-2)K \right\}z
 \notag\\
&& -n^2(n^2-1)x^2+\left\{ -4n(n-2)m+2n^2(n+1)K \right\}x
\notag\\
&& +4m^2+4n^2mK \,, 
\\
& P_{S2}=& \left[ 6n^4z-n^2(n+1)(n+2)x+2n(n-4)m \right]y
\notag 
\\
&& -2n^4 z^2+\left\{ n^2(3n^2-n+2)x-4n(n-2)m 
   -6n^3(n-1)K \right\}z \notag 
\\
&&   -n^2(n+1)x^2
   +\left\{ n(3n-7)m+n^2(n^2-1)K \right\}x
  \notag\\
&& -2m^2-2n(n-1)mK \,, 
\\
& P_{S3}=& -2n^2(3n-2)z+n^2(n+1)x-2(n-2)m \,. 
}
Note that the master variable $\Phi_-$ coincides with that for the neutral 
and source-free case in Paper~I for $Q=0$ and $S_T=0$. 
Also note that the following relations hold: 
\Eqr{
&& Q^2=(n+1)^2M^2 \delta( 1+m\delta) \,, 
\\ 
&& H=H_+ H_- \,.
}
%

% In Appendix~\ref{Appendix:B}, we show the expressions of the basic variables 
% $X,Y,Z$ in terms of the master variable. 

{}From these relations, we find that $Q=0$ corresponds to $\delta=0$, and 
in this limit, $\Phi_-$ coincides with $\Phi$, and its equation coincides 
with the mater equation for the master variable $\Phi$ derived in Paper~I. 
Hence, $\Phi_-$ and $\Phi_+$ represent the gravitational mode and the 
electromagnetic mode, respectively. For $n=2, K=1$ and $\lambda=0$, these 
variables $\Phi_+$ and $\Phi_-$ are proportional to the variables for the 
polar modes, $Z^{(+)}_1$ and $Z^{(+)}_2$, appearing in 
Ref.~\citen{Chandrasekhar.S1983B}, and $V_{S+}$ and $V_{S-}$ coincide 
with the corresponding potentials, $V^{(+)}_1$ and $V^{(+)}_2$, 
respectively.

For the exceptional modes, % \eqref{STeq} 
the last of Eq.~\eqref{eqn:Einstein:KIS00} is not obtained 
from the Einstein equations. However, this equation with $\tau_T=0$ 
can be imposed as a gauge condition, as shown in Paper~I. Under this 
gauge condition, all equations derived in this subsection hold without 
change. However, the variables still contain some residual gauge 
freedom. See Appendix~D of Paper~III, for how to eliminate 
the residual gauge freedom to extract physical degrees of freedom.

% \subsection{Black holes coupled with dilaton/axion[HK]}
% \bigskip \bigskip \bigskip 

%T1>Lovelock BH
\section{Lovelock Black Holes}
\label{sec:Lovelock}

When Einstein derived the field equations for gravity, he adopted the three 
requirements as the guiding principle in addition to the principle of general 
relativity. The first is the metric ansatz that gravitational field is completely 
determined by the spacetime metric. The second is that the energy-momentum 
tensor sources gravity, hence it is balanced by a second-rank symmetric tensor 
constructed from the metric. The last is the requirement that the field 
equations contain the second derivatives of the metric at most and are 
quasi-linear, i.e., the coefficients of the second derivatives contain only 
the metric and its first derivatives. These requirements determine the 
gravitational field equation uniquely. 

In $4$-dimensions, we can obtain a similar result even if we loosen the third 
condition and require only that the field equations contain second derivatives 
of the metric at most. To be precise, gravity theories satisfying these weaker 
requirements are the Einstein gravity and the $f(R)$ gravity, the latter of 
which is mathematically equivalent to the Einstein gravity coupled with 
a scalar field with a non-trivial potential.

When we extend general relativity to higher dimensions, however, 
the difference between the two versions of the third requirement becomes 
important. In fact, we require the stronger version, we obtain the same field 
equations for gravity in higher dimensions. In contrast, we require 
the weaker version, we obtain a larger class of theories that contains 
the higher-dimensional general relativity as a special case. 
Such extensions to higher dimensions were first studied systematically 
by D. Lovelock in 1971\cite{Lovelock.D1971}. What he found was that 
the second-rank gravitational tensor $E_{MN}$ balancing the 
energy-momentum tensor $T_{MN}$ is a sum of polynomials in the curvature 
tensor and that the polynomial of each degree is unique up to 
a proportionality constant. The physical importance of such an extension was 
later recognised when B. Zwiebach\cite{Zwiebach.B1985} 
pointed out that the special quadratic combinations of the curvature tensor 
named the Gauss-Bonnet term naturally arises when we add quadratic terms of 
the Ricci curvature to the quadratic term in the Riemann curvature tensor obtained as the $\alpha'$ correction to 
the field equations in the heterotic string theory to obtain a ghost-free 
theory. In this section, we briefly overview 
the Lovelock theory, its static black hole solution and perturbation theory 
recently developed 
by Takahashi and Soda\cite{Takahashi&Soda09a,Takahashi.T&Soda2010A}.

For notational convenience, we introduce an orthonormal basis on $\M$, 
and throughout this section, we use upper case latin indices in the range 
$A,B,\dots,J$ to label the $1$-form $\theta^A$, ($A=0,\cdots,D-1$), 
of the basis, while we use upper case latin indices in the range 
$K,L,M,N,\dots$ to denote tensors on $\M$ as in the rest of this chapter.  

%T2>Lovelock theory
\subsection{Lovelock theory}  

%T3>Lagrangian
In 1986, B. Zumino\cite{Zumino1986} pointed out that the class of theories obtained by Lovelock 
has a natural mathematical meaning. That is, the Lovelock equations 
for gravity can be obtained from the action that is a linear combination of 
the terms each of which corresponds to the Euler form in some even dimensions:
\Eqr{
S &=& \int \sum_{k=0}^{[(D-1)/2]} \alpha_k \LL_k;\\
\LL_k &:=& \frac{1}{(D-2k)!}\epsilon_{A_1\cdots A_{2k} B_1\cdots B_{D-2k}} 
\R^{A_1A_2}\w
 \cdots\w \R^{A_{2k-1}A_{2k}}\w \theta^{B_1\cdots B_{D-2k}}
\notag\\
&=&  I_k \Omega_D \,,
}
where $\R^A{}_B$ is the curvature form with respect to the orthonormal 
$1$-form basis $\theta^A$ ($A,B=0,\cdots,D-1$), 
$\theta^{A\cdots B}=\theta^A\w\cdots \w \theta^B$, 
$\Omega_D$ is the volume form, and  
\Eq{
I_k= \frac{1}{2^k}\delta^{A_1\cdots A_{2k}}_{B_1\cdots B_{2k}}
  R_{A_1A_2}{}^{B_1B_2}\cdots R_{A_{2k-1}A_{2k}}{}^{B_{2k-1}B_{2k}} \,.
}
The explicit expressions for small values of $k$ are
\Eqrsub{
I_0 &=& 1 \,,\\
I_1 &=& R \,,\\
I_2 &=& R^2 - 4 R^A_B R^B_A + R_{AB}{}^{CD} R_{CD}{}^{AB} \,,\\
I_3 &=& R^3 - 3 R(- 4 R^A_B R^B_A + R_{AB}{}^{CD} R_{CD}{}^{AB}) 
 + 24 R^A_B R^B_C R^C_A + 3 R_{AB}{}^{CD}R_{CD}{}^{EF}R_{EF}{}^{AB}\,. 
\notag\\
&&
}
Hence, if we require that the theory has the Einstein theory in the low energy limit, $\alpha_0$ and $\alpha_1$ are related to the cosmological constant $\Lambda$ and the Newton constant $\kappa^2=8\pi G$ as 
\Eq{
\alpha_0=-\frac{\Lambda}{\kappa^2},\quad
\alpha_1=\frac{1}{2\kappa^2}.
}
Further, if the Gauss-Bonnet term $ \LL_2$ comes from 
the $\Order{\alpha'}$ correction in the heterotic string theory, $\alpha_2>0$.

%T3>Field eqs
From the Bianchi identity $\D\R_{AB}\equiv 0$, the variation of the Lagrangian density can be written
\Eqr{
\delta \LL_k 
 &=& d(*)
  + \frac{k}{(D-2k-1)!}  \epsilon_{A_1\cdots A_{2k}B_1\cdots B_{D-2k}}
 \delta\omega^{A_1 A_2} \w \R^{A_3A_4}\w \cdots \Theta^{B_1}\w\theta^{B_2\cdots B_{D-2k}}
\notag\\
&& +
\frac{(-1)^{D-1}}{(D-2k-1)!}\epsilon_{A_1\cdots A_{2k}B_1\cdots B_{D-2k}}
 \R^{A_1A_2}\w\cdots \R^{A_{2k-1}A_{2k}}\w\theta^{B_1\cdots B_{D-2k-1}}\w
\delta\theta^{B_{D-2k}}
\notag\\
 &=& d(*)  + \inrbra{k T^{(k)}{}^A_{B_1 B_2} (\delta\omega^{B_1B_2})_A
    + (-1)^{D-1} E^{(k)}{}^A_B \delta\theta^B_M e^M_A }\Omega_D \,,
}
where $\omega^A{}_B$ is the connection form with respect to $\theta^A$, $e_A$ is the vector basis dual to $\theta^A$, $\D$ is the corresponding covariant exterior derivative, $\Theta^A=\D\theta^A$ is the torsion 2-form, and 
\Eqrsub{
T^{(k)}{}^A_{B_1B_2} &=& \delta^{A C_1\cdots C_{2k}}_{B_1B_2 D_1\cdots D_{2k-1}}
 R^{D_1D_2}{}_{C_1C_2}\cdots R^{D_{2k-3}D_{2k-2}}{}_{C_{2k-3}C_{2k-2}}
 T^{D_{2k-1}}_{C_{2k-1}C_{2k}} \,,
\label{Lovelock:GFEQ:torsion}\\
E^{(k)}{}^A_B 
 &=& -\frac{1}{2^k} \delta^{A A_1\cdots A_{2k}}_{B B_1\cdots B_{2k}}
   R_{A_1A_2}{}^{B_1B_2}\cdots  R_{A_{2k-1} A_{2k}}{}^{B_{2k-1}B_{2k}} 
\,.
\label{Lovelock:GFEQ:curvature}
}
Hence, if we treat $\theta^A$ and $\omega^A{}_B$ 
as independent dynamical variables, the field equations are given by
\Eqrsub{
&& \sum_k \alpha_k E^{(k)}{}^A_B=0 \,,\\
&& \sum_k k\alpha_k T^{(k)}{}^A_{BC}=0 \,. 
}
If we require the connection to be Riemannian, the second equation becomes 
trivial due to the torsion free condition $T^A_{BC}=0$. 
Examples of the explicit expressions for $E^{(k)}{}^A_B$ and 
$T^{(k)}{}^A_{BC}$ are 
\Eqrsub{
E^{(0)}{}^A_B &=& -\delta^A_B \,,\\
E^{(1)}{}^A_B &=& 2 R^A_B - R \delta^A_B \,,\\
E^{(2)}{}^A_B &=& -\delta^A_B I_2 + 4RR^A_B-8 R^{AC}{}_{BD}R^D_C 
\nonumber \\
      &{}&\; \quad \qquad    
      + 8 R^A_C R^C_B + 4 R^{A C_1C_2C_3}R_{BC_1C_2C_3} \,,\\
T^{(1)}{}^A_{BC} &=& T^A_{BC}+ 2 \delta^A_{[B} T^D_{C]D} \,,\\
T^{(2)}{}^A_{BC} &=& 8\delta^A_{[B}(-2 R^D_{C]}T_D + R^{D_1D_2}{}_{C]D_3}T^{D_3}_{D_1D_2}
        +RT_{C]}-2 R^{D_1}_{D_2}T^{D_2}_{C]D_1} )
\notag\\
 && + 12 (R^{AD}{}_{BC}T_D-2 R^A_{[B}T_{C]} - 2 R^{AD_1}{}_{D_2[B}T^{D_2}_{C]D_1}
    -12 R^A_D T^D_{BC}) \,.
}
%

%T2>Static bh solution
\subsection{Static black hole solution}

%T3>Constant curvature space
\subsubsection{Constant curvature spacetimes}

In general relativity, a constant curvature spacetime is always a vacuum solution and the curvature is uniquely determined by the value of the cosmological constant. This feature is not shared by the Lovelock theory. In fact, the Lovelock theory does not allow a vacuum constant curvature solution for some range of the coupling constants $\Set{\alpha_k}$, and have multiple constant curvature solutions with different curvatures for other range of the coupling constants. 

To see this, let us insert the constant Riemann curvature
\Eq{
R_{MNLK}= \lambda (g_{ML}g_{NK}-g_{MK}g_{NL})
}
into the field equation \eqref{Lovelock:GFEQ:curvature}. Then, we obtain
\Eq{
P(\lambda)=0 \,,
}
where
\Eq{
P(X):=\sum_{k=0}^{[(D-1)/2]} \alpha_k \frac{X^k}{(D-2k-1)!} \,.
}
For general relativity for which $\alpha_k=0$ for $k\ge2$, this equation has a unique solution. In contrast, when $D>4$ and $\alpha_k\neq$ ($k\ge 2$), the equation can have no solution or multiple solutions depending on the functional shape of $P(X)$. 

%T3>BH solution
\subsubsection{Black hole solution}

Now, let us look for spherically symmetric black hole solutions. 
Because the Birkhoff-type theorem holds for the Lovelock theory except for 
the case in which $P(\lambda)=0$ has a root with multiplicity higher than 
one\cite{Zegers.R2005,Wiltshire.D1986}, 
we only consider static spacetimes whose metric can be put into the form
\Eq{
ds^2=  -f(r) dt^2 + \frac{dr^2}{h(r)} + r^2 d\sigma_n^2 \,,
}
where $d\sigma_n^2$ represents the metric of a constant curvature space 
with sectional curvature $K$. For a spherically symmetric solution, $K=1$. 
However, because the argument in this section holds for any value of $K$, 
we consider this slightly general spacetime. 

Then, the non-vanishing components of the curvature tensor are given up to 
symmetry by 
\Eqrsubl{SSST:curvature}{
&& R^{01}{}_{01}=\frac{h}{2}\inpare{ -\frac{f''}{f} + \frac{(f')^2}{2f^2}}
 -\frac{h'f'}{4f} \,, \\
&& R_{0i0j}=\frac{hf'}{2rf} g_{ij} \,,\quad
   R_{1i1j}=-\frac{h'}{2r} g_{ij} \,, \\
&& R_{ijkl}=X (g_{ik}g_{jl}-g_{il}g_{jk}) \,,
}
where $X(r):=(K-h(r))/r^2$. Inserting these into 
Eq.~\eqref{Lovelock:GFEQ:curvature}, 
we find that the field equations reduce to 
\Eq{
(r^{n+1}P(X(r)))'=0 \,,\quad
P^{(1)}(X(r))(f(r)/h(r))'=0 \,. 
}
The first of these is integrated to yield
\Eq{
P(X(r))= \frac{C}{r^{n+1}} \,, 
}
where $C$ is an integration constant. This determines the function $h(r)$ 
implicitly. In particular, for $C=0$, we have $h(r)=K-\lambda r^2$ 
for each solution to $P(\lambda)=0$. If $P^{(1)}(\lambda)\neq 0$, after 
an appropriate scaling of $t$, $f(r)=h(r)$ follows from the second of 
the above field equations: 
\Eq{
ds^2=-(K-\lambda r^2) dt^2 + \frac{dr^2}{K-\lambda r^2} + r^2 d\sigma_n^2 \,. 
}
This represents a constant curvature spacetime with sectional curvature 
$\lambda$ irrespective of the value of $K$, as is well known. 

This implies that for $C\neq0$, we have in general multiple solutions 
corresponding to multiple solutions to $P(\lambda)=0$. Each solution 
approaches a constant curvature spacetime with sectional curvature $\lambda$ 
at large $r$ asymptotically. For these solutions, we can always 
put $f(r)=h(r)$ and the metric can be written\cite{Wheeler.JT1986}  
\Eq{
ds^2= -f(r) dt^2 + \frac{dr^2}{f(r)}+ r^2d\sigma_n^2,\quad
f(r)=K-X(r) r^2.
}
In general, the constant $C$ is proportional to the total mass $M$ of the system and positive if $M>0$. We can easily show that $X(r)$ changes monotonically with $r$ from infinity to some value of $r$ where the metric becomes singular. This singularity may or may not be hidden by a horizon depending on the functional shape of $P(X)$. In the former case, we obtain a regular black hole solution. 

%T2>Perturbation
\subsection{Perturbation equations for the static solution}  

The linear perturbation of the Lovelock tensor, 
Eq.~\eqref{Lovelock:GFEQ:curvature}, in general reads 
\Eq{
\delta E^M_N = -\sum_{k=1}^{[(D-1)/2]} \frac{k\alpha_k}{2^k}\delta^{M M_1\cdots M_{2k}}_{N N_1\cdots N_{2k}} R_{M_1 M_2}{}^{N_1N_2}\cdots R_{M_{2k-3}M_{2k-2}}{}^{N_{2k-3}N_{2k-2}}
 \delta R_{M_{2k-1}M_{2k}}{}^{N_{2k-1}N_{2k}} \,.
}
Inserting Eq.~\eqref{SSST:curvature} with $f(r)=h(r)=K-X(r)r^2$ into this 
yields\cite{Takahashi.T&Soda2010A}
\Eqrsubl{Lovelock:PGEQ}{
r^{n-1} \delta E^t_t  
  &=& -\frac{r T'}{n-1}\delta R_{ij}{}^{ij} - 2 T \delta R_{i r}{}^{i r},
  \label{Lovelock:PGEQ:tt}\\
r^{n-1} \delta E^r_t 
  &=& -2 T \delta R_{it}{}^{ir},
  \label{Lovelock:PGEQ:tr}\\
r^{n-1} \delta E^r_r 
  &=& - \frac{r T'}{n-1}\delta R_{ij}{}^{ij} - 2T \delta R_{it}{}^{it},
  \label{Lovelock:PGEQ:rr}\\
r^{n-1} \delta E^i_a 
  &=& \frac{2rT'}{n-1}\delta R_{a j}{}^{ij} + 2T \delta R_{a b}{}^{ib},
  \label{Lovelock:PGEQ:ai}\\
r^{n-1} \delta E^i_j 
  &=& \frac{2rT'}{n-1} \delta R_{aj}{}^{ai}
  +\frac{2r^2 T''}{(n-1)(n-2)} \delta R_{jk}{}^{ik}
  \notag\\
  &&-\delta_j^i \insbra{2T\delta R_{tr}{}^{tr}+\frac{2rT'}{n-1}\delta R_{ak}{}^{ak}
    + \frac{r^2 T''}{(n-1)(n-2)} \delta R_{kl}{}^{kl}},
  \label{Lovelock:PGEQ:ij}
}
where
\Eq{
T(r):= r^{n-1} P^{(1)}(X(r)).
}
Thus, we can obtain perturbation equations for the metric in the Lovelock theory simply by calculating the perturbation of the curvature tensor.

%T3>tensor
\subsubsection{Tensor perturbations}

For tensor perturbations, the metric perturbation can be expanded in terms of 
the harmonic tensor $\THB_{ij}$ as Eq.~\eqref{TensorPerturbation:metric}. 
The non-vanishing components of the curvature tensor for this type of 
perturbations read 
\Eqrsub{
\delta R_{ai}{}^{aj} &=& -\inpare{\Box H_T + \frac{2}{r}Dr\cdot D H_T} \THB_i^j,\\
\delta R_{ik}{}^{jk} &=& \insbra{-(n-2)\frac{f'}{r}H_T' + \frac{2K+k_t^2}{r^2} H_T}\THB_i^j.
} 
Hence, from the above expression for $\delta E^j_i$, we obtain the following wave equation for $H_T$:
\Eq{
\frac{1}{f}\ddot H_T- f H_T''-\inpare{ f\frac{T''}{T'}+\frac{2f}{r}+f'}H_T'
  + \frac{2K+k_t^2}{(n-2)r} \frac{T''}{T'} H_T=0.
}
If we introduce the mode function $\Psi(r)$ by 
\Eq{
H_T(t,r)= \frac{\Psi(r)}{r\sqrt{T'(r)}} e^{-i\omega t},
}
this wave equation can be put into the standard form
\Eq{
-\frac{d^2\Psi}{dr_*^2} + V_t\Psi =\omega^2\Psi,
}
with the effective potential
\Eq{
V_t(r)=\frac{(2K+k_t^2)f}{(n-2)r} \frac{T''}{T'} + \frac{1}{r\sqrt{T'}}\frac{d^2 (r\sqrt{T'})}{dr_*^2},
}
where $dr_*=dr/f(r)$ as in the Einstein black hole case.

%T3>vector
\subsubsection{Vector perturbations}

For vector perturbations, the perturbation of components of the curvature tensor that are relevant to the field equations can be expressed in terms of the basic gauge-invariant quantities $F_a$ as
\Eqrsub{
\delta R_{aj}{}^{ai} &=& -\frac{k}{r^2} D^a (r F_a) \VHB^i_j,\\
\delta R_{jk}{}^{ik} &=& -\frac{(n-2)k}{r} \frac{D^a r}{r} F_a \VHB^i_j,\\
\delta R_{aj}{}^{ij} &=& \frac{n-1}{2r^2}\insbra{-D^b F^{(1)}_{ba}
  - \frac{k^2-(n-1)K}{(n-1)r} F_a 
   + \frac{2(K-f)+rf'}{r} f_a } \VHB^i,\\
\delta R_{ab}{}^{ib} &=& \insbra{-\frac{1}{2r^3}D^b(r^2 F^{(1)}_{ba})
   +\frac{rf''-f'}{2r^2} f_a } \VHB^i.
}
Inserting these into Eq.~\eqref{Lovelock:PGEQ:ai}, we obtain the following 
equations for the gauge-invariant variables: 
\Eqrsub{
&& \frac{1}{r^2}D^b \inpare{r^2 T F^{(1)}_{ba}}
         + T'\frac{k^2-(n-1)K}{(n-1)r} F_a =0,\\
&& \frac{1}{r} D^a (rT' F_a)=0.
}
The gauge-dependent residuals in the expressions for the curvature 
tensor cancel exactly owing to the identity 
\Eq{
P^{(1)} X'' + P^{(2)} (X')^2 + \frac{n+2}{r} P^{(1)}X'=0
}
obtained from the background equation $(r^{n+1}P(X))'=0$. 
We can easily confirm that for general relativity for which 
$T=r^{n-1}/(2\kappa^2)$, these reduce to 
Eqs.~\eqref{eq:Vectorperturbation:evol:Fa} and 
\eqref{eq:Vectorperturbation:constr:Fa} with no source terms.

A master equation for vector perturbations in the Lovelock theory can be 
derived in the same way as that in general relativity. First, the second 
perturbation equation implies the existence of a potential $\Omega$ in which 
$F_a$ can be expressed as 
\Eq{
rT' F_a =\epsilon_{ab}D^b \Omega.
}
Inserting this into the first perturbation equation, we easily find that 
it is equivalent to 
\Eq{
rT D_a\inpare{\frac{1}{r^2 T'} D^a\Omega} 
  - \frac{k^2- (n-1)K}{(n-1)r^2} \Omega=0.
}
%

%T3>scalar
\subsubsection{Scalar perturbations}

For scalar perturbations, we have
\Eqrsub{
\delta R^{ai}{}_{bi} &=& \frac{k^2}{2r^2} F^a_b + \frac{nf}{r}D_{[b}F^a_{r]}
 + \frac{n}{2r}D^a F^r_b + \frac{nf'}{2r} F^a_b 
 -n D_b D^a F 
\notag\\
 && -\frac{n}{r}(D^a r D_b + D_b r D^a)F
  + \frac{n}{2}X^c D_c \pfrac{f'}{r} \delta^a_b,\\
\delta R^{ai}{}_{aj} &=& -\frac{k^2}{2r^2}F^a_a \SHB^i_j
   + \left\{ \frac{D_a r}{r}D_b F^{ab} -\frac{1}{2r}Dr\cdot D F^a_a 
 \right.
 \notag\\
 && \left.\quad
  + \inpare{\frac{f'}{2r}+ \frac{k^2}{2nr^2} } F^a_a
  -\frac{1}{r^2}D^a (r^2 D_a F) 
  + X^a D_a \pfrac{f'}{r}
  \right\} \delta^i_j \SHB
 \\
\delta R^{ik}{}_{jk} &=& (n-1) \left[ 
  \frac{D^a r D^b r}{r^2} F_{ab} -\frac{2}{r}Dr^a D_a F
  + \frac{2(k^2-nK)}{nr^2}F
\right. \notag\\
 &&\left. \qquad
 - D^b \pfrac{ K-f}{r^2} X_b \right] \delta^i_j\SHB
 -(n-2)\frac{k^2}{r^2}F \SHB^i_j,\\
\delta R^{ib}{}_{ab} &=& \insbra{
 -\frac{k}{r} D_{[b} \inpare{\frac{1}{r} F^b_{a]} }
   + \frac{1}{2kr} (f'-rf'') D_a H_T } \SHB^i,\\
\delta R^{ij}{}_{aj} &=& (n-1)k\insbra{ -\frac{D_b r}{2r^3}F^b_a
 + \frac{1}{r^2} D_a F
 -\frac{2(K-f)+ rf'}{2r^2k^2} D_a H_T } \SHB^i,\\
\delta R^{ab}{}_{ab} &=& \inrbra{ 
 -\Box F^a_a + D^a D^b F_{ab} + \frac{f''}{2}F^a_a 
   + X^a D_a f'' } \SHB \,. 
}
Inserting these into Eq.~\eqref{Lovelock:PGEQ}, we obtain
\Eqrsub{
r^{n-1}\delta E^t_t &\equiv& \left[-\frac{nfT}{r}(F^r_r)'
 -\inpare{\frac{nf'}{r}T+\frac{nT'}{r}f+\frac{k^2}{r^2}T} F^r_r 
 +2nfT F'' 
 \right.
 \notag\\
&& \left.
 + \inpare{\frac{4nf}{r}T+2nfT'+nf'T}F'
 -2T' \frac{k^2-nK}{r}F
 \right] \SHB=0 \,,\\
r^{n-1}\delta E^r_t &\equiv& \left[-\frac{nfT}{r}\insbra{ \d F^r_r+ \frac{k^2}{nrf}F^r_t
-2\sqrt{f} \inpare{\frac{r}{\sqrt{f}}\d F}'} \right]\SHB =0 \,, \\
r^{n-1}\delta E^r_r &\equiv& \left[ \frac{Tnf}{r}\inrbra{(F^t_t)'
  -\frac{k^2}{nfr}F^t_t } 
  -\frac{2nfT}{r}\d F^t_r - \frac{nfT}{r}\inpare{\frac{f'}{f}+ \frac{T'}{T}} F^r_r
\right.\notag\\
&& \left.\quad 
-\frac{2nT}{f}\ddot F + nfT \inpare{\frac{f'}{f}+2\frac{T'}{T}} F'
 -2T'\frac{k^2-nK}{r}F
 \right]\SHB=0 \,,\\
r^{n-1}\delta E^i_a &\equiv& \frac{k}{r}\insbra{
 -D_b\inpare{\frac{T}{r}F^b_a} + TD_a\inpare{\frac{1}{r}F^b_b}
  + 2T' D_a F}\SHB^i=0 \,, \\
r^{n-1}\delta E^i_j &\equiv& -\frac{k^2}{n-1}\inpare{\frac{T'}{r} F^a_a + 2 T'' F} \SHB^i_j
\notag\\
 && -\left[ -D^a (TD_a F^b_b)+ \inpare{\frac{(f'T)'}{2} + \frac{k^2 T'}{nr}}F^a_a
  + D^a D^b (TF_{ab}) 
\right. \notag\\
  &&\left. \quad
  -\frac{2}{r}D^a\inpare{ r^2 T' D_a F} + \frac{2(k^2-nK)}{n} T'' F 
\right] \delta^i_j \SHB=0 \,. 
}

As was first shown by Takahashi and Soda\cite{Takahashi.T&Soda2010A}, we can reduce these equations to a single master equation in terms of the master variable $\Phi$ defined by
\Eq{
F^r_t= r(\d\Phi +2\d F),
}
as 
\Eq{
\ddot\Phi - \frac{f A^2}{r^2T'}\inpare{\frac{r^2 f T'}{A^2} \Phi'}' + Q\Phi=0,
}
where
\Eqrsub{
&& A=2k^2 -2nf + nr f',\\
&& Q=\frac{f}{nr^2T} \insbra{r(k^2T+nrfT')\inpare{ 2\frac{(AT)'}{AT}-\frac{T''}{T'}}
 -n(r^2 f T')'}.
}
The other gauge-invariant variables are expressed in terms of $\Phi$ as
\Eqrsub{
&& F=-\frac{1}{A}\inrbra{ nrf\Phi' + \inpare{k^2+ nrf\frac{T'}{T}}\Phi},\\
&& F^r_r = -\frac{k^2}{nf}\Phi + 2rF' -\frac{A}{nf}F,\\
&& F^t_t= -F^r_r -\frac{2rT''}{T'} F.
}
%

%T1>Stability Analysis
\section{Stability Analysis} 
\label{sec:stabilityanalysis}

%T2>Stability criterion
\subsection{Stability criterion and $S$-deformation}
% \section{Stability of Black Holes}

With the decoupled master equations in hand, we are 
ready to study the stability of generalized static black holes with 
charge and cosmological constant in Einstein-Maxwell theory, 
whose metric is given by Eqs.~\eqref{f:RNBH} and \eqref{metric:GSBH}.  
We consider only the stability in the static region outside the black 
horizon. This region is represented as $r>r_H$ for $\lambda\le0$ 
and $r_H<r<r_c$ for $\lambda>0$. 
Such a region exists only for restricted ranges of 
the parameters $M,Q$ and $\lambda$. 
[See Appendix A of Paper~III for details.] 

We have seen before that for any perturbation type, the master equations 
for perturbation in the static region are reduced to an eigenvalue 
problem of the type
\Eq{
\omega^2 \Phi = A\Phi \,, 
}
where $A$ is the derivative operator 
\Eq{
A = -\frac{d^2}{dr_*^2} + V(r); \quad dr_*=\frac{dr}{f} \,,
}
with $V(r)$ being equal to $V_T(r)$, $V_{V\pm}(r)$ or $V_{S\pm}(r)$.   
The operator $A$ is self-adjoint (by imposing suitable boundary conditions 
if necessary) in the standard square integrable function space 
$L^2(r_*,dr_*)$. Therefore, if its spectrum is non-negative, 
there is no exponentially growing mode among physically acceptable 
(i.e., normalizable) modes, implying the stability of the black hole.

When $\lambda \geq 0$, the static region where $A$ is defined 
is globally hyperbolic and the range of $r_*$ is complete. 
In this case the operator $A$ is essentially self-adjoint 
with the domain of smooth functions of compact support, denoted hereafter 
by $C^\infty_0(r_*)$, and has the unique self-adjoint extension called 
the Friedrichs extension $A_F$, which is given by taking the closure of 
$(A,C^\infty_0(r_*))$ and is known to have the same lower bound 
of the spectrum of $A$ with domain $C^\infty_0(r_*)$.  
% Therefore, in order to show the stability, it is sufficient to 
% to show the positivity of $(\Phi,A\Phi)$ for functions $\Phi$ within 
% $C^\infty_o(r_*)$\cite{Ishibashi.A&Kodama2003A}.  

% 
However, when $\lambda <0$, $r_*$ has an upper bound, and whether $A$ 
becomes essentially self-adjoint depends upon the asymptotic behavior of 
the potential $V(r)$ in $A$, and thus upon the type of perturbations 
as well as the spacetime dimension. 
When $A$ is not essentially self-adjoint, there are infinitely many 
different choices of boundary conditions that make $A$ self-adjoint, 
and the spectrum of a self-adjoint extension depends upon 
the associated boundary conditions at the upper-bound of 
$r_*$ ($r\rightarrow \infty$). 
In particular, even if $A$ with $C^\infty_0(r_*)$ is positive-definite, 
its extension can admit a negative spectrum, depending on the choice of 
boundary conditions. 
This could be the case for the vector-type of electromagnetic 
and gravitational perturbations in $n=2$ and for the scalar-type of 
electromagnetic and gravitational perturbations in $n=2,3,4$, 
as shown in a simple, massless case by inspecting the asymptotic 
behavior of $A$\cite{Ishibashi.A&Wald2004}. 
For these cases, we must specify a boundary condition for $\Phi$ at 
$r=\infty$, % , for which the range of $r_*$ has an upper bound. 
and in the following we simply adopt the setting zero Dirichlet condition, 
$\Phi\tend 0$ as $r \tend \infty$, which corresponds to the Friedrichs 
extension, $A_F$. 

Now suppose $\Phi(r)$ is a smooth function of compact support contained in
$r>r_H$ (or $r_H<r<r_c$ for $\lambda>0$). Then, we can 
rewrite the expectation value of $A$, $(\Phi,A\Phi)$, as
\Eq{
(\Phi,A\Phi)=\int dr_*\left( \left|\frac{d\Phi}{dr_*}\right|^2 
  +V|\Phi|^2 \right).
\label{EVofA}
}
Note that no boundary terms appear as $\Phi \in C^\infty_0(r_*)$. 
When the static region is globally hyperbolic, we have 
the Friedrichs extension $A_F$. Then, in order to show the stability, 
it is sufficient to show the positivity of the right-hand side 
of Eq.~\eqref{EVofA} 
for $\Phi \in C^\infty_0(r_*)$\cite{Ishibashi.A&Kodama2003A}, 
since $A_F$ has the same lower bound of $(A,C^\infty_0(r_*))$.    
% 
% all we wish to show is the positivity of the Friedrichs extension $A_F $.  
% is non-negative, then we can conclude that the system is stable. 
In particular, this condition obviously is satisfied if $V$ is non-negative. 
For $4$-dimensional Schwarzschild black hole, this is indeed the case. 
However, in higher dimensions, $V$ is in general not positive definite,  
and it is far from obvious to know whether $A$ is positive. 
One powerful method to show the positivity of $A$ beyond such a simple 
situation is the procedure called the {\em $S$-deformation} of $V$ 
in Papers~II and III, in which we deform the right-hand side of 
Eq.~\eqref{EVofA} by partial integration in terms of a function $S$ as 
\Eq{
(\Phi,A\Phi)= \int dr_* \left( |\tilde D\Phi|^2+\tilde V|\Phi^2| 
\right) \,, 
\label{EVofA:S}
}
where 
\Eqr{
 \tilde D=\frac{d}{dr_*}+S \,,\quad 
 \tilde V= V + f\frac{dS}{dr} -S^2 \,, 
}
and $\Phi \in C^\infty_0(r_*)$. Our task is now to find a suitable 
function $S$ which makes the effective potential ${\tilde V}$ positive. 
% Unfortunately, there does not seem to be a systematic method 
% to find a desirable function $S$ that could apply to generic cases. 
% Also, given a choice of $S$, the stability would still depend upon 
% the range of the parameters 
% that characterise the regular black hole solution and the eigenvalues 
% of the harmonic functions. Therefore our analysis needs to go on 
% a case-by-case basis. 
In the following we quote the results of 
the stability analysis of Papers~II and III. 
% For more details, see \S 6 and Appendix A of Paper~III. 

One might worry that our boundary conditions $\Phi \rightarrow 0$ 
in a neighbourhood of the horizon $r\rightarrow r_H$ would be too strong. 
However, if the horizon is non-degenerate and accordingly admits a bifurcate 
surface and once the stability is shown under our boundary conditions,  
then we can conclude by applying the theorem of 
Kay and Wald\cite{Kay.B&Wald87} that the black hole spacetime is stable 
under perturbations that are non-vanishing at the bifurcate surface. 
Below we shall examine the stability for each type of perturbations. 

%T2>Tensor perturbations
\subsection{Tensor perturbation} 
The stability analysis of higher dimensional static vacuum black holes
under tensor type perturbations has first been examined by Gibbons and 
Hartnoll\cite{Gibbons.G&Hartnoll2002} with special interests in the case
where $\K^n$ is a generic Einstein manifold. In the following we review
our analysis in Paper~III, which generalize their
results\cite{Gibbons.G&Hartnoll2002}. 

Consider the potential, Eq.~\eqref{VT:GSBH}. As discussed in Paper~II, 
with the choice 
\Eq{
S=-\frac{nf}{2r} \,,  
}
the $S$-deformation yields 
\Eq{
\tilde V_T =\frac{f}{r^2}\left[\lambda_L-2(n-1)K \right] \,, 
}
irrespective of the $r$-dependence of $f(r)$. 
Therefore $\tilde V_T$ becomes positive if 
\Eq{
\lambda_L\ge 2(n-1)K \,.
\label{StabilityCondition:Tensor}
}
In particular, this immediately guarantees the stability of maximally 
symmetric black holes for $K=1$ and $K=0$, since $\lambda_L$ is related to 
the eigenvalue $k_T^2$ of the positive operator $- {\hat D}^i {\hat D}_i$ as 
$\lambda_L=k_T^2+2nK$ when $\K^n$ is maximally symmetric. 
As for the case $K=-1$, $\tilde V_T$ could be negative 
even in the maximally symmetric case, if $0<k_T^2<2$. This is 
however not possible, due to the bound, Eq.~\eqref{value:kt2}. 
Therefore we conclude that the maximally symmetric black holes 
considered here are stable under tensor type perturbations.   

Note that the condition, Eq.~\eqref{StabilityCondition:Tensor}, is just a 
sufficient condition for stability, and it is not a necessary 
condition in general. In fact, for tensor type perturbation, we can 
obtain stronger stability conditions directly from the positivity of 
$V_T$ if we restrict the range of parameters. For example, for $K=1$ 
and $\lambda=0$, it is easy to see that $V_T$ is positive if
\Eq{
\lambda_L+2-2n+\frac{n(n-1)\sqrt{M^2-Q^2}}{M+\sqrt{M^2-Q^2}}\ge 0
}
for $M^2\ge Q^2>8n(n-1)M^2/(3n-2)^2$, and 
\Eq{
\lambda_L+\frac{n^2-10n+8}{4}\ge 0
\label{StabilityCond:Tensor:smallQ}}
for $Q^2\le 8n(n-1)M^2/(3n-2)^2$. Thus, if we do not restrict the 
range of $Q^2$, we obtain the same sufficient condition for 
stability as Eq.~\eqref{StabilityCondition:Tensor}, but for the 
restricted range $Q^2\le 8n(n-1)M^2/(3n-2)^2$, we obtain the 
stronger sufficient condition, Eq.~\eqref{StabilityCond:Tensor:smallQ}, 
which coincides with the condition obtained in Paper~II for the case 
$Q=0$.

% %T2>fig:xh
% \begin{figure}[t]
% \begin{minipage}{\halftext}
% \includegraphics[width=6cm]{xh_K1.eps}
% \end{minipage}
% \hspace{1cm}
% \begin{minipage}{\halftext}
% \includegraphics[width=5cm]{xh_NL.eps}
% \end{minipage}
% \caption{Ranges of the value $x_H$.}
% \label{fig:xh}
% \end{figure}

Similarly, for $K=-1$ and $\lambda<0$, 
% rewriting $V_T$ as
% %
% \Eq{
% V_T=\frac{f}{r^2}\left(\lambda_L-(3n-2)K+ \frac{n(n+2)}{4}f
%    +\frac{n(n+1)M}{r^{n-1}}-\frac{n^2Q^2}{r^{2n-2}} \right),
% }
% %
% from $V_T>0$ 
% 
if we restrict the range of $\lambda$ to
\Eq{
\lambda\ge -\pfrac{(n+1)M}{nQ^2}^{\frac{2}{n-1}}
  \left( 1+\frac{(n^2-1)M^2}{n^2Q^2} \right) \,, 
}
we obtain from $V_T>0$ a sufficient condition for stability stronger than 
Eq.~\eqref{StabilityCondition:Tensor}, 
\Eq{
\lambda_L+3n-2=k_T^2+n-2\ge0 \,. 
}
This condition is sufficient to guarantee the stability of a 
maximally symmetric black hole with $K=-1$ for $n\ge2$. 
However, if we extend the range of $\lambda$ to the whole allowed range, 
% i.e., that satisfying $\lambda\ge \lambda_{c-}$, 
then Eq.~\eqref{StabilityCondition:Tensor} becomes the strongest condition 
that can be obtained only from $V_T>0$.

%T2>Vector perturbation
\subsection{Vector perturbation}

For the choice of 
\Eq{
S=\frac{nf}{2r}, 
}
$V=V_{V\pm}$ in Eq.~\eqref{Vpm:Vector} are deformed to 
\Eqr{
&& \tilde V_{V\pm} =\frac{f}{r^2}\left[ m_V
   +\frac{(n^2-1)M\pm\Delta}{r^{n-1}} \right],\\
&& m_V=k_V^2-(n-1)K.
}
It follows from $m_V \geq 0$ that $\tilde V_{V+}$ is positive definite,
and therefore static charged black holes are stable under the
electromagnetic mode of vector type perturbation. 

For the gravitational mode, we have 
\Eqr{
&& \tilde V_{V-}=\frac{f}{r^2}
       \frac{m_V h}{(n^2-1)M+\Delta} \,,\\
&& h:=(n^2-1)M-\frac{2n(n-1)Q^2}{r^{n-1}}+\Delta \,. 
}
Since $h$ is a monotonically increasing function of $r$, $\tilde V_{V-}$ is 
positive if and only if $h(r_H)\ge0$. Therefore $\tilde V_{V-}$ may
become negative. 
In the case of $\lambda\ge0$, the background 
spacetime contains a regular black hole only for $K=1$, and the 
static region outside the black hole is given by 
$r_H<r<r_c$ $(\le+\infty)$. In this region, it turns out that 
% we have $2M/r^{n-1}<x_H \le (n+1)M^2/(nQ^2)$ and  
% $2M/r^{n-1}<x_H\le x_{H,{\rm max}}\le (n+1)M^2/(nQ^2)$. Hence, and   
$h>0$ and therefore the black hole is stable.
In the case of $\lambda<0$, 
under the condition that the spacetime contains a regular black hole,  
% if $\lambda\ge\lambda_c$ and $2M/r^{n-1}<x_H\le x_{H,{\rm 
% max}}$. Hence, from \eqref{xHmax}, we obtain the relation 
%
we have the relation 
\Eq{
h\ge \sqrt{(n^2-1)^2M^2+2n(n-1)m_V Q^2}
    -\sqrt{(n^2-1)^2M^2-4Kn(n-1)^2Q^2}.
}
It follows that for $K=0,1$, $h>0$. 
As for $K=-1$, the right-hand side of this inequality could 
become negative if $k_v^2<n-1$. This is however not possible 
as shown just below Eq.~\eqref{ids:Vector}. Therefore, $h>0$ also for $K=-1$. 
We conclude that the black holes considered here are stable under 
vector type perturbations. 

%T3>fig:VSplus
\begin{figure}[t]
\centerline{\includegraphics[width=5cm]{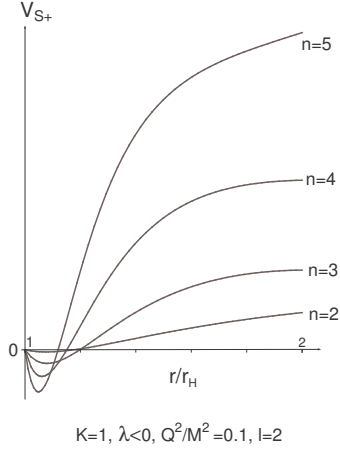}}
\caption{Examples of $V_{S+}$ for $K=1$ and $\lambda<0$.}
\label{fig:VSplus}
\end{figure}

%T2>Scalar perturbation
\subsection{Scalar perturbation}

By applying the $S$-deformation to $V_{S+}$ with 
\Eq{
S=\frac{f}{h_+}\frac{dh_+}{dr}\,, \quad 
h_+=r^{n/2-1}H_+ \,, 
}
we obtain 
\Eq{
\tilde V_{S+}=\frac{k^2 f }{2r^2 H_+} \left[ (n-2)(n+1)\delta x + 2\right] \,.
}
Since this is positive definite, the electromagnetic mode $\Phi_+$ 
is always stable for any values of $K$, $M$, $Q$ and $\lambda$, 
provided that the spacetime contains a regular black hole, although 
$V_{S+}$ has a negative region near the horizon when $\lambda<0$ and 
$Q^2/M^2$ is small (see Fig. \ref{fig:VSplus}).

Using a similar transformation, we can also prove the stability with
respect to the gravitational mode $\Phi_-$ for some special cases.
For example, the $S$-deformation of $V_{S-}$ with 
\Eq{
  S=\frac{f}{h_-}\frac{dh_-}{dr} \,, \quad 
  h_-=r^{n/2-1}H_- 
}
leads to
\Eq{
\tilde V_{S-}=\frac{k^2f}{2r^2H_-}
  \left[ 2m-(n+1)(n-2)(1+m\delta)x \right].
}
For $n=2$, this is positive definite for $m>0$. When $K=1$, 
$\lambda\ge0$ and $n=3$ or when $\lambda\ge0,Q=0$ and the horizon is 
$S^4$, % from $m\ge n+2$ ($l\ge2$) and the behavior of $x_h$ (see 
% Fig. \ref{fig:xh}), 
we can show that $\tilde V_{S-}>0$. Hence, in these special cases, 
the black hole is stable with respect to any type of perturbation. 

However, for the other cases, $\tilde V_{S-}$ is not positive 
definite for generic values of the parameters. The $S$-deformation 
used to prove the stability of neutral black holes in Paper~II is 
not effective either. This is because $V_{S-}$ is negative 
in the immediate vicinity of the horizon for the extremal and 
near extremal cases, as shown in Fig. \ref{fig:VSminus}, and 
the $S$-deformation cannot remove this negative region 
if $S$ is a regular function at the horizon. 
Hence, the stability problem for these generic cases with $n\ge3$ is left open.

%T3>fig:VSminus
\begin{figure}[t]
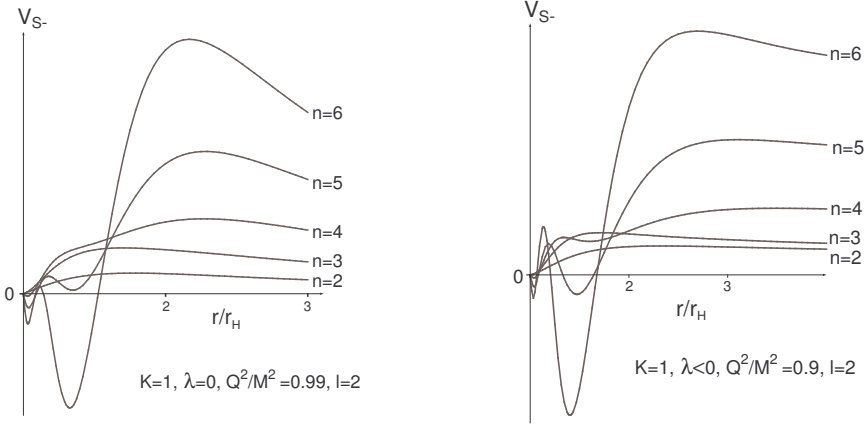

\begin{minipage}{\halftext}
\centerline{\includegraphics[width=6cm]{\FigDir/PotSmK1L0.eps}}
\end{minipage}
\begin{minipage}{\halftext}
\centerline{\includegraphics[width=6cm]{\FigDir/PotSmK1LN.eps}}
\end{minipage}
\caption{Examples of $V_{S-}$.}\label{fig:VSminus}
\end{figure}

Our results are summarized in Table~\ref{tbl:stability}.  
%T2>tbl:stability
\begin{table}% [t]
\caption{Stabilities of generalized static black holes. In this 
table, ``$D$'' represents the spacetime dimension, $n+2$. The 
results for tensor perturbations apply only for maximally symmetric 
black holes, while those for vector and scalar perturbations are 
valid for black holes with generic Einstein horizons, except in the 
case with $K=1,Q=0,\lambda>0$ and $D=6$.}
\label{tbl:stability}
\begin{tabular}{|l|l|c|c|c|c|c|c|}
\hline\hline

\multicolumn{2}{|c|}{}& \multicolumn{2}{c|}{Tensor}
  & \multicolumn{2}{|c|}{Vector}& \multicolumn{2}{c|}{Scalar}\\
\cline{3-8}
\multicolumn{2}{|c|}{}&$Q=0$ & $Q\not=0$ &$Q=0$ & $Q\not=0$ 
&$Q=0$ & $Q\not=0$ \\
\hline
$K=1$& $\lambda=0$ & OK & OK & OK & OK 
     & OK 
     & $\begin{array}{l} 
         D=4,5\ \text{OK} \\ D\ge6\ \text{?} 
       \end{array}$
     \\
\cline{2-8}
     &$\lambda>0$ & OK & OK & OK & OK 
     & $\begin{array}{l} 
         D\le6\ \text{OK} \\ D\ge7\ \text{?} 
        \end{array}$
     & $\begin{array}{l}
         D=4,5\ \text{OK} \\ D\ge6\ \text{?} 
        \end{array}$
     \\
\cline{2-8}
     &$\lambda<0$ & OK & OK & OK & OK 
     &  $\begin{array}{l}
          D=4\ \text{OK} \\ D\ge5\ \text{?} 
         \end{array}$
     &  $\begin{array}{l}
          D=4\ \text{OK} \\ D\ge5\ \text{?} 
         \end{array}$
     \\
\hline
$K=0$ &$\lambda<0$ & OK & OK & OK & OK 
     & $\begin{array}{l}
         D=4\ \text{OK} \\ D\ge5\ \text{?} 
        \end{array}$
     & $\begin{array}{l}
         D=4\ \text{OK} \\ D\ge5\ \text{?} 
       \end{array}$ 
     \\
\hline
$K=-1$ &$\lambda<0$ & OK & OK & OK & OK  
     & $\begin{array}{l}
         D=4\ \text{OK} \\ D\ge5\ \text{?} 
        \end{array}$
     & $\begin{array}{l}
         D=4\ \text{OK} \\ D\ge5\ \text{?} 
        \end{array}$
     \\
\hline
\end{tabular}
\end{table}
As shown there, maximally symmetric black holes are stable with respect to 
tensor and vector perturbations over the entire parameter range.
% the exceptional case corresponds 
% to a rather exotic black hole, whose horizon is a hyperbolic space.  
In contrast, for scalar type perturbation, we were not able to prove 
even the stability of asymptotically flat black holes with charge in 
generic dimensions, due to the existence of a negative region in the 
effective potential around the horizon in the extremal and near 
extremal cases. % , in contrast to the neutral case.

In this regard, it should be noted that by extensive numerical studies, 
Konoplya and Zhidenko\cite{Konoplya&Zhidenko07} have shown that  
Schwarzschild-de Sitter black holes are stable in $D:=2+n =5,...,11$. 
They also found that charged Schwarzschild-de Sitter black holes 
can be unstable if the electric charge and cosmological constant 
are large enough in $D\geq 7$\cite{Konoplya&Zhidenko09ChargeddS}.  
As for charged asymptotically anti-de Sitter (AdS) black holes, 
they have found no evidence for instability 
in $D=5,...,11$\cite{Konoplya&Zhidenko08ChargedAdS}.

%T2>Lovelock BHs
\subsection{Lovelock black holes}  
\label{subsec:Lovelock}

As we have seen in section~\ref{sec:Lovelock}, there are known 
exact solutions of static black holes in Lovelock theory, 
and the master equations for all types of perturbations of static vacuum
black holes in general Lovelock theory have recently been derived
by Takahashi and Soda\cite{Takahashi.T&Soda2010A}. 
Using the master equations, they have found that an asymptotically flat,
static Lovelock black hole with small mass is unstable in arbitrary higher dimensions; 
it is unstable with respect to tensor type perturbations in 
even-dimensions\cite{Takahashi&Soda09a,Takahashi&Soda09b} and 
with respect to scalar type perturbations 
in odd-dimensions\cite{Takahashi&Soda10b}. 
The stability under vector type perturbations in all dimensions has also
been shown\cite{Takahashi&Soda10b}  
% under the assumption of no-ghost in tensor type perturbations, 
by applying the $S$-deformation technique.   

In fact, such an instability against tensor and scalar type perturbations, 
as well as the stability under vector type perturbations, have already been 
indicated by earlier work\cite{Gleiser.R&Dotti05,Dotti.G&Gleiser05b,Dotti.G&Gleiser05a,BDG07} 
performed within the framework of second-order Lovelock theory, 
often called the Einstein-Gauss-Bonnet theory. For such a restricted
class of Lovelock theory--though most generic in $D=5,6$, the master equations 
for metric perturbations have previously been derived by 
Dotti and Gleiser\cite{Dotti.G&Gleiser05b,Gleiser.R&Dotti05}.  
A numerical analysis of the (in)stability of static black holes in 
Einstein-Gauss-Bonnet theory in dimensions $D=5, ..., 11$ has been performed 
in Ref.~\citen{Konoplya&Zhidenko08}. 
% See also an earlier work\cite{Neupane04} for the stability analysis of  
% asymptotically locally AdS black holes with various horizon topologies 
% in the Einstein-Gauss-Bonnet theory. 

% 
It is interesting to note that the instability found in small Lovelock black holes 
is typically stronger in short distance scales rather than long distance 
scale/low multipoles as one may expect. 
For example, for tensor type perturbations, there appears the eigenvalue 
of tensor harmonics on the horizon manifold as an overall factor 
in the effective potential term of Eq.~\eqref{EVofA:S}, 
and it is therefore always possible to make the potential term dominant 
by taking a sufficiently large eigenvalue of tensor harmonics. 
This implies that if the effective potential term can be negative, 
the right-hand side of Eq.~\eqref{EVofA:S}, as a whole, can also be negative. 
This is shown to be the case when the mass is sufficiently small. 
A similar argument also applies to the case of scalar type perturbations 
at higher multipole moments\cite{Takahashi&Soda10b}. 
% In this case, after applying the $S$-deformation, 
% the right-hand side of \eqref{EVofA:S} bounded from above by 
% \Eqr{
%  \int dr_* \left( |\tilde D\Phi|^2
%                   +k_s^2 \frac{f}{nrTT'}
%             \left(2T'^2-TT''\right)|\Phi|^2 
%            \right)
% }

%T1>Summary & discussions 
\section{Summary and Discussions}  
\label{sec:SummaryDiscussions}
% Analysis of linear perturbations provides a lot of information about 
% basic properties of a black hole solution. 

We have reviewed a gauge invariant formalism for gravitational and 
electromagnetic perturbations of static charged black holes with 
cosmological constant in higher dimensions and, as an application, 
the stability analysis using the master equations derived in the 
developed formalism. 
In section~\ref{sec:background:decomposition:harmonics}, we have started 
with considering a fairly generic class of background spacetimes defined by 
a warped product of an $m$-dimensional spacetime and an $n$-dimensional 
internal space, where the latter corresponds to the horizon cross-section 
manifold. We have explained how to decompose tensor fields in the background 
spacetime from the viewpoint of the internal space, and have seen 
that second-rank symmetric tensor fields or metric perturbations 
are decomposed into tensor, vector and scalar-types. 
We have then constructed manifestly gauge invariant variables for each type of 
perturbations. 
After that in section~\ref{sec:Harmonictensors}, 
we have introduced harmonic tensors on the internal space to expand 
the gauge invariant variables in terms of them. We have presented 
several theorems concerning basic properties of harmonic tensors, and 
also given explicit expressions of the harmonic tensors in terms of 
homogeneous coordinates. 
In subsequent sections~\ref{sec:TensorPerturbation}% \ref{sec:VectorPerturbation} 
-\ref{sec:ScalarPerturbation}, we have briefly described how to
reduce the perturbed Einstein and Maxwell equations written in terms
of the gauge-invariant variables to a set of decoupled master equations
for a single scalar variable on the $2$-dimensional background spacetime
for the vector and scalar type perturbations. For the tensor type
perturbations, there is no electromagnetic perturbation mode and
the reduction can immediately be done to obtain the master equation
on the generic $m$-dimensional spacetime.  
In the black hole background case, our master equations generalize to 
higher dimensions in a manifestly gauge-invariant manner the well-known 
Regge-Wheeler-Zerilli equations for Schwarzschild black holes,  
and Moncrief's equations\cite{Moncrief74a,Moncrief74b} 
for Reissner-Nordstr\"{o}m black holes, as well as the master equations
given by Cardoso and Lemos\cite{Cardoso.V&Lemos2001} that include cosmological 
constant in $4$-dimensions.   
We have seen that by taking Fourier decomposition with respect to the time coordinate, 
each of the master equations is expressed in the form of a one-dimensional 
self-adjoint ODE. %as seen in section~\ref{sec:stabilityanalysis}. 
Therefore, as guaranteed by spectral theory of self-adjoint operators, 
the master equations can govern all possible perturbations that are 
normalizable with respect to the standard inner product. 
This is in particular important when we address the stability problem of a 
given solution, as the stability proof should be a statement concerning 
all physically acceptable (normalizable) perturbations. 
This is in contrast to the rotating black hole case, 
for which we have Teukolsky's equations in $4$-dimensions and 
a similar set of master
equations\cite{KLR06,Murata&Soda08CQG,Murata&Soda08PTP}
for some special cases of higher dimensional Myers-Perry black holes, but 
those master equations do not reduce to a form of the self-adjoint 
eigenvalue problem, except for some special 
modes [c.f., Ref.~\citen{Kodama.H&Konoplya&Zhidenko09}].   
In section~\ref{sec:Lovelock}, we have also briefly reviewed a similar type of 
master equations for gravitational perturbations of static black holes 
in generic Lovelock theory, derived recently\cite{Takahashi.T&Soda2010A}.

As an immediate and one of the most important applications of the master 
equations, we have examined the stability of higher dimensional static black 
holes with charge and cosmological constant in 
section~\ref{sec:stabilityanalysis}.  
The task is to show the positivity of the self-adjoint operator 
appeared as the spatial derivative part of the master equations. 
% This is straightforward to show in $4$-dimensions, as the potential term of 
% the self-adjoint operator in, e.g., the Regge-Wheeler-Zerilli equations, 
% is positive definite.   
For higher dimensional black holes, the potential term of the relevant 
self-adjoint operator is in general not positive definite, 
in contrast to the $4$-dimensional case, and the stability is therefore 
not taken for granted in higher dimensions. 
To deal with such situations, we have developed the $S$-deformation technique, 
in which our task is to find some suitable function, $S$, that makes 
the effective potential deformed by $S$ positive definite.  
Having applied this technique we have shown that a large variety of 
static black holes in higher dimensional general relativity are stable 
under gravitational as well as electromagnetic perturbations 
as summarized in Table~\ref{tbl:stability}.
This technique has also applied to the (in)stability analysis in general
Lovelock theory, as has just been discussed in
section~\ref{subsec:Lovelock} above.
A similar type of stability analysis using the $S$-deformation, 
restricted to tensor-type perturbations, has also been performed 
for static black holes 
in higher derivative stringy gravity\cite{Moura.F&Schiappa07}.  
Thus, the $S$-deformation has turned out to be a powerful tool to
address stability problem.

However, as also indicated in Table~\ref{tbl:stability}, 
the stability analysis of higher dimensional black holes has not been 
completed yet even within the context of general relativity.     
For some cases, in particular, when electric charge and cosmological 
constant are involved, we have not yet been able to draw definite 
conclusions for the stability problem with respect to scalar type 
perturbations. This is in part because there does not seem to be 
a systematic method to find such a desirable function $S$ that could apply 
to generic cases. 
In particular, for the charged black hole case, the potential for 
scalar-type perturbation admits a negative ditch in the 
immediate vicinity of the horizon, which appears to be difficult 
to remove by the $S$-deformation.
In connection to this, it would be interesting to note that the existence of 
such a negative ditch in the potential may have a significant influence
on the frequencies of the quasinormal modes and the graybody
factor for the Hawking process, even if these black holes are
found to be stable. 
Also, given a choice of $S$, the stability would still depend upon 
the range of the parameters characterising the black hole solution 
as well as upon the eigenvalues of the harmonic functions. 
Therefore our analysis using the $S$-deformation needs to go on 
a case-by-case basis.  
A numerical analysis\cite{Konoplya&Zhidenko08ChargedAdS} has 
% revealed some instability for charged de Sitter 
% black holes in $d \geq 7$ with large values of the electric charge and 
% cosmological constant\cite{Konoplya&Zhidenko09ChargeddS}, whereas 
found no indication of instability for charged AdS black holes 
in $D=5,..., 11$. 
Therefore, at least for charged AdS black holes 
it may still be possible to analytically prove its stability by using the $S$-deformation or 
other analytic methods. 
% In this regard, instead of dealing with the master equation itself, 
% it might be interesting to attempt to find some simpler self-adjoint 
% operator which bounds from below the self-adjoint operator of 
% the master equation, and try to show the positivity of thus introduced 
% operator. 
% As for unstable black holes, 

As other applications than the stability issue, 
the master equations can be used in numerical studies of 
black hole quasinormal modes in higher dimensions\cite{Cardoso.V&&2003A,Cardoso.V&Dias&Lemos2003,Cardoso.V&Lemos2001,Cardoso.V&Lemos2003A,Cardoso.V&Konoplya&Lemos2003A,Konoplya.R2003A}. 
It would also be interesting to consider stationary perturbations 
that could describe deformation of the event horizon, as considered in 
Ref.~\citen{Konoplya&Zhidenko09ChargeddS} for unstable 
charged de Sitter black holes.  
If one finds no stationary perturbation that is regular everywhere on and 
outside the event horizon, then it would support the uniqueness 
property\cite{Hwang.S1998,Gibbons.G&Ida&Shiromizu2002a} 
of the given background black hole solution, as analysed in Paper~II 
for the vacuum black hole case, as well as in Ref.~\citen{Kodama.H2004} 
for more general cases. This may be interesting in particular 
in asymptotically AdS black hole case to find 
deformed horizon solutions [see Ref.~\citen{Tomimatsu.A2004} 
for such an analysis in $4$-dimensions]. 
Also, we note that depending on the type of perturbations and 
the dimensionality, asymptotically AdS black holes admit a large class 
of boundary conditions at conformal infinity, 
other than the Dirichlet conditions considered in 
section~\ref{sec:stabilityanalysis}. 
As has been considered in the context of gauge/gravity correspondence 
and often examined within AdS gravity coupled to a scalar 
field\cite{FHR10,HorowitzHubeny}, 
% [see e.g. recent work of Ref.~\citen{FHR10} and reference therein], 
it would be interesting to clarify whether different choice of boundary 
conditions leads to different consequences for the stability problem.

%T1>Ack
\section*{Acknowledgements}
AI is supported by the  JSPS Grant-in-Aid for Scientific Research 
(A)No. 22244030 and (C)No. 22540299. 
HK is supported by  the JSPS Grant-in-Aid for Scientific Research 
(A)No. 22244030 and the the MEXT Grant-in-Aid for Scientific Research on
Innovative Areas No. 21111006.

%T1>References


\begin{thebibliography}{99}
%%%%%%%%%%%%%%%%%%%%%%%%%%%%%%%%%%%%%%%%%%%%%%%%%%%%%%%%%%%%%
% Some macros are available for the bibliography:
%  o for general use
%    \JL : general journals                 \andvol : Vol (Year) Page
%  o for individual journal 
%    \AJ   : Astrophys. J.           \NC         : Nuovo Cim.
%    \ANN  : Ann. of Phys.           \NPA, \NPB  : Nucl. Phys. [A,B]
%    \CMP  : Commun. Math. Phys.     \PLA, \PLB  : Phys. Lett. [A,B]
%    \IJMP : Int. J. Mod. Phys.      \PRA - \PRE : Phys. Rev. [A-E]     
%    \JHEP : J. High Energy Phys.    \PRL        : Phys. Rev. Lett.
%    \JMP  : J. Math. Phys.          \PRP        : Phys. Rep.
%    \JP   : J. of Phys.             \PTP        : Prog. Theor. Phys.     
%    \JPSJ : J. Phys. Soc. Jpn.      \PTPS       : Prog. Theor. Phys. Suppl.
% Usage:
%  \PRD{45,1990,345}          ==> Phys.~Rev.\ D \textbf{45} (1990), 345
%  \JL{Nature,418,2002,123}   ==> Nature \textbf{418} (2002), 123
%  \andvol{123,1995,1020}    ==> \textbf{123} (1995), 1020
%%%%%%%%%%%%%%%%%%%%%%%%%%%%%%%%%%%%%%%%%%%%%%%%%%%%%%%%%%%%%
  
% \bibitem{}

\bibitem{Berti.E&Cardoso&Starinets2009} 
E.~Berti,~V.~Cardoso, and A.O.~Starinets, 
% Quasinormal modes of black holes and black branes, 
{Class. Quantum Grav.} {\bf 26} (2009), 163001.  


\bibitem{Kodama.H&Ishibashi2003A}
H.~Kodama and A.~Ishibashi, 
% A master equation for gravitational perturbations of maximally symmetric black holes in higher dimensions, 
{Prog. Theor. Phys.} {\bf 110} (2003), 701--722. 
% [{\em hep-th/0305147}]. 


\bibitem{Kodama.H&Ishibashi2004A}
H.~Kodama and A.~Ishibashi, 
%Master equations for Perturbations of Generalised Static Black Holes with Charge in Higher Dimensions, 
{Prog. Theor. Phys.} {\bf 111} (2004), 29--73. 
% [{\em hep-th/0308128}]. 


\bibitem{Ishibashi.A&Kodama2003A}
A.~Ishibashi and H.~Kodama, 
% Stability of higher-dimensional Schwarzschild black holes, 
{Prog. Theor. Phys.} {\bf 110} (2003), 901--919.  
% [{\em hep-th/0305185}]. 

\bibitem{ReggeWheeler1957}
T.~Regge and J.~A.~Wheeler, 
% Stability of a Schwarzschild Singularity,  
{Phys. Rev. } {\bf 108} (1957), 1063. 

\bibitem{Zerilli.F1974}
F.~Zerilli, 
% Perturbation analysis of gravitational and electromagnetic radiation in a Reissner-Nordstr{\"o}m geometry,  
{Phys. Rev. D} {\bf 9} (1974), 860--868.


\bibitem{Teukolsky72} 
S.A.~Teukolsky, 
% Rotating Black Holes: Separable Wave Equations for Gravitational and Electromagnetic Perturbations, 
Phys.\ Rev.\ Lett.\ {\bf  29} (1972), 1114.  


\bibitem{KLR06}
H.~K.~Kunduri, J.~Lucietti, and H.~S.~Reall, 
% Gravitational perturbations of higher dimensional rotating black holes: Tensor perturbations,
{Phys. Rev. D} {\bf 74} (2006), 084021.


\bibitem{Murata&Soda08CQG}
K.~Murata and J.~Soda, 
% On the separability of field equations in Myers-Perry spacetimes, 
{Class. Quantum Grav.} {\bf 25} (2008), 035006.  


\bibitem{Murata&Soda08PTP}
K.~Murata and J.~Soda, 
% Stability of Five-Dimensional Myers-Perry Black Holes with Equal Angular Momenta, 
{Prog. Theor. Phys.} {No. 120} (2008), 561--579.


\bibitem{DR09}
M.~Durkee and H.S.~Reall, 
% Perturbations of higher-dimensional spacetimes,  
{Class. Quantum Grav.} {\bf 28} (2011), 035011.  

\bibitem{Takahashi.T&Soda2010A}
T.~Takahashi and J.~Soda, 
% Master Equations for Gravitational Perturbations of Static Lovelock Black Holes in Higher Dimensions, 
{Prog. Theor. Phys.} {\bf 124} (2010), 911--924.


\bibitem{Wald84} 
R.M.~Wald, 
{\em General Relativity} (Chicago, University of Chicago Press, 1984) 


\bibitem{Birmingham.D1999}
D.~Birmingham, 
% Topological Black Holes in Anti-de Sitter Space, 
{Class. Quantum Grav.} {\bf 16} (1999), 1197--1205.


\bibitem{Nariai.H1950}
H.~Nariai, {Sci. Rep. Tohoku Univ., Ser.1,} {\bf 34} (1950), 160.

\bibitem{Nariai.H1961}
H.~Nariai, {Sci. Rep. Tohoku Univ., Ser. 1,} {\bf 35} (1961), 62; 
 reproduced in {Gen. Relativ. Gravit.} {\bf 31} (1999), 934.  


\bibitem{Cardoso.V&Dias&Lemos2004}
V.~Cardoso, O.J.C.~Dias, and J.P.S.~Lemos, 
% Nariai, Bertotti-Robinson, and anti-Nariai solutions in higher dimensions, 
{Phys. Rev. D} {\bf 70} (2004), 024002.

\bibitem{Kodama.H&Sasaki1984}
H.~Kodama, and M.~Sasaki, 
% Cosmological perturbation theory, 
{Prog. Theor. Phys. Suppl.} {No. 78} (1984), 1--166. 


\bibitem{Ishibashi.A&Wald2004}
A.~Ishibashi and R.M.~Wald, 
% Dynamics in non-globally hyperbolic static spacetimes: III. Anti-de Sitter spacetime, 
{Class. Quantum Grav.} {\bf 21} (2004), 2981--3013.  

\bibitem{Craioveanu.M&Puta&Rassias2001B} 
M.~Craioveanu, M.~Puta, and T.~Rassias, 
{\em Old and New Aspects in Spectral Geometry}  
(Kluwer Academic Pub., 2001).

\bibitem{Akhiezer.N&Glazman1966B}
N.I.~Akhiezer and I.M.~Glazman, 
{\em Theory of Linear Operators in Hilbert Space} (Nauka, Moskwa, 1966).


\bibitem{Myers.S1941}
S.~Myers, 
% Riemannian manifolds with positive mean curvature,  
{Duke Math. J.} {\bf 8} (1941), 401--404. 


\bibitem{Kanti.P&&2010} 
P.~Kanti, H.~Kodama, R.~Konoplya, N.~Pappas, and A.~Zhidenko, 
% Graviton Emission in the Bulk by a Simply Rotating Black Hole,  
{Phys. Rev. D} {\bf 80} (2010), 084016. 


\bibitem{Kodama.H&Ishibashi&Seto2000}
H.~Kodama, A.~Ishibashi, and O.~Seto, 
% Brane world cosmology --- Gauge-invariant formalism for perturbation ---, 
{Phys. Rev. D} {\bf 62} (2000), 064022. 


\bibitem{Kodama.H&Konoplya&Zhidenko09}
H.~Kodama, R.A.~Konoplya, and A.~Zhidenko, 
% Gravitational instability of simply rotating AdS black holes in higher dimensions,  
Phys.\ Rev.\ D{\bf 79} (2009), 044003.    

\bibitem{Chandrasekhar.S1983B}
S.~Chandrasekhar, 
{\em The Mathematical Theory of Black Holes} 
(Clarendon Press, Oxford, 1983).


\bibitem{Lovelock.D1971} 
D.~Lovelock, 
% The Einstein tensor and its generalizations,  
{J. Math. Phys.} {\bf 12} (1971), 498--501. 

\bibitem{Zwiebach.B1985}
B.~Zwiebach, 
% Curvature Squared Terms and String Theories, 
{Phys. Lett. B} {\bf 156} (1985), 315--7.


\bibitem{Takahashi&Soda09a}
T.~Takahashi and J.~Soda, 
% Stability of Lovelock black holes under tensor perturbations,  
{Phys. Rev. D} {\bf 79} (2009), 104025. 

\bibitem{Zumino1986}
B.~Zumino, 
% Gravity Theories in More Than Four-Dimensions.
Phys. \ Rept.\ {\bf 137} (1986), 109-114. 

\bibitem{Zegers.R2005} 
R.~Zegers, 
% Birkhoff's theorem in Lovelock gravity,  
{J. Math. Phys.} {\bf 46} (2005), 072502. 

\bibitem{Wiltshire.D1986} 
D.~Wiltshire, 
% Spherically symmetric solutions of Einstein-Maxwell theory with a Gauss-Bonnet term,  
{Phys. Lett. B} {\bf 169} (1986), 36--40. 

\bibitem{Wheeler.JT1986} 
J.~Wheeler, 
% Symmetric Solutions to the Gauss-Bonnet Extended Einstein Equations,  
{Nucl. Phys. B} {\bf 268} (1986), 737. 

\bibitem{Kay.B&Wald87}
B.S.~Kay and R.M.~Wald, 
% Linear stability of Schwarzschild under perturbations which are non-vanishing on the bifurcation $2$-sphere, 
{Class. Quantum Grav.} {\bf 4} (1987), 893--898. 


\bibitem{Gibbons.G&Hartnoll2002}
G.W.~Gibbons and S.A.~Hartnoll, 
% Gravitational instability in higher dimensions, 
{Phys. Rev. D} {\bf 66} (2002), 064024. 


\bibitem{Konoplya&Zhidenko07}
R.A.~Konoplya and A.~Zhidenko, 
% Stability of multidimensional black holes: Complete numerical analysis,  
{Nucl. Phys. B} {\bf 777} (2007), 182-202.



\bibitem{Konoplya&Zhidenko09ChargeddS}
R.A.~Konoplya and A.~Zhidenko, 
% Instability of higher dimensional charged black holes in the de Sitter world,
{Phys. Rev. Lett.} {\bf 103} (2009), 161101. 

\bibitem{Konoplya&Zhidenko08ChargedAdS}
R.A.~Konoplya and A.~Zhidenko, 
% Stability of higher dimensional Reissner-Nordstr\"om-anti-de Sitter black holes, 
{Phys. Rev. D} {\bf 78} (2008), 104017.  


\bibitem{Takahashi&Soda09b} 
T.~Takahashi and J.~Soda,  
% Instability of small Lovelock black holes in even dimensions, 
{Phys. Rev. D} {\bf 80} (2009), 104021.  


\bibitem{Takahashi&Soda10b}
T.~Takahashi and J.~Soda,  
% Catastrophic Instability of Small Lovelock Black Holes,  
{Prog. Theor. Phys.} {\bf 124} (2010), 711--729. 


\bibitem{Gleiser.R&Dotti05}
R.J.~Gleiser and G.~Dotti, 
% Linear stability of Einstein-Gauss-Bonnet static spacetimes: Vector and scalar perturbations, 
{Phys. Rev. D} {\bf 72} (2005), 124002.   

\bibitem{Dotti.G&Gleiser05b}
G.~Dotti and R.J.~Gleiser,  
% Linear stability of Einstein-Gauss-Bonnet static spacetimes: Tensor perturbations,  
{Phys. Rev. D} {\bf 72} (2005), 044018.   

\bibitem{Dotti.G&Gleiser05a} 
G.~Dotti and R.J.~Gleiser,  
% Gravitational instability of Einstein-Gauss-Bonnet black holes under tensor mode perturbations,  
{Class. Quantum Grav.} {\bf 22} (2005), L1--L6.   

\bibitem{BDG07} 
M.~Beroiz, G.~Dotti, and R.J.~Gleiser, 
% Gravitational instability of static spherically symmetric Einstein-Gauss-Bonnet black holes in five and six dimensions,  
{Phys. Rev. D} {\bf 76} (2007), 024012. 

\bibitem{Konoplya&Zhidenko08}
R.A.~Konoplya and A.~Zhidenko, 
% (In)stability of $D$-dimensional black holes in Gauss-Bonnet theory,  
{Phys. Rev. D} {\bf 77} (2008), 104004.


\bibitem{Moncrief74a}
V.~Moncrief, 
% Odd-parity stability of a Reissner-Nordstrom black hole,  
{Phys. Rev. D} {\bf 9} (1974), 2707--2709.

\bibitem{Moncrief74b}
V.~Moncrief, 
% Stability of Reissner-Nordstrom black holes,  
{Phys. Rev. D} {\bf 10} (1974), 1057--1059.

\bibitem{Cardoso.V&Lemos2001}
V.~Cardoso and J.P.S.~Lemos, 
% Quasinormal modes of Schwarzschild-anti-de Sitter black holes: Electromagnetic and gravitational perturbations, 
{Phys. Rev. D} {\bf 64} (2001), 084017.
% {\em gr-qc/0301078} (2003).


\bibitem{Moura.F&Schiappa07}
F.~Moura and R.~Schiappa, 
{Class. Quantum Grav.} {\bf 24} (2007), 361.   

\bibitem{Cardoso.V&&2003A}
V.~Cardoso, S.~Yoshida, O.J.C.~Dias, and J.P.S.~Lemos, 
% Late-Time Tails of Wave Propagation in Higher Dimensional Spacetimes, 
  {Phys. Rev. D} {\bf 68} (2003), 061503. 

\bibitem{Cardoso.V&Dias&Lemos2003}
V.~Cardoso, {\`O}.J.C.~Dias, and J.P.S.~Lemos, 
% Gravitational Radiation in $D$-dimensional Spacetimes, 
{Phys. Rev. D} {\bf 67} (2003), 064026. 

\bibitem{Cardoso.V&Lemos2003A}
V.~Cardoso and J.P.S.~Lemos, 
% Quasinormal modes of the near extremal Schwarzschild-de Sitter black hole, 
{Phys. Rev. D} {\bf 70} (2004), 024002.
% {\em gr-qc/0301078} (2003). 

\bibitem{Cardoso.V&Konoplya&Lemos2003A}
V.~Cardoso, R.~Konoplya, and J.P.S.~Lemos, 
% Quasinormal frequencies of Schwarzschild black holes in anti-de Sitter spacetimes: A complete study on the asymptotic behaviour, 
{Phys. Rev. D} {\bf 68} (2003), 044024. 

\bibitem{Konoplya.R2003A}
R.A.~Konoplya, 
% Quasinormal behaviour of the $D$-dimensional Schwarzschild black hole and higher order WKB approach, 
{Phys.\ Rev.\ D} {\bf 68} (2003), 024018. 
% {\em gr-qc/0303052} (2003). 

\bibitem{Hwang.S1998}
S.~Hwang, 
% A Rigidity Theorem for Ricci Flat Metrics,  
{Geometriae Dedicata} {\bf 71} (1998), 5.

\bibitem{Gibbons.G&Ida&Shiromizu2002a}
G.W.~Gibbons, D.~Ida, and T.~Shiromizu, 
% Uniqueness and non-uniqueness of static black holes in higher dimensions, 
{Phys. Rev. Lett.} {\bf 89} (2002), 041101.

\bibitem{Kodama.H2004}
H.~Kodama, 
% Perturbative uniqueness of black holes near the static limit in arbitrary dimensions, 
{Prog. Theor. Phys.} {\bf 112} (2004), 249-274.


\bibitem{Tomimatsu.A2004}
A.~Tomimatsu, 
% Distortion of Schwarzschild-Anti-de Sitter black holes to black strings, 
{Phys. Rev. D} {\bf 71} (2005), 124044.  

\bibitem{FHR10} 
T.~Faulkner, G.T.~Horowitz, and M.M.~Roberts, 
% New stability results for Einstein scalar gravity,  
{Class. Quantum Grav.} {\bf 27} (2010), 205007. 

\bibitem{HorowitzHubeny}
G.T.~Horowitz and V.~Hubeny,  
% Quasinormal modes of AdS black holes and the approach to thermal equilibrium, 
{Phys. Rev. D} {\bf 62} (2000), 024027.  

\end{thebibliography}
\end{document}